\DeclarePairedDelimiter\floor{\lfloor}{\rfloor}
\definecolor{DarkGreen}{rgb}{0.2,0.6,0.2}
\def\eps{\varepsilon}
\def\Om{\Omega}\def\om{\omega}
\def\trace#1{\text{trace}\, #1}
\newcommand{\norm}[1]{\left\lVert {#1}\right\rVert}
\newcolumntype{Y}{>{\centering\arraybackslash}X}
\def\lra{\longrightarrow}
\def\ua{\uparrow}
\def\da{\downarrow}
\def\wh{\widehat}
\def\wt{\widetilde}
\newcommand{\ra}[1]{\renewcommand{\arraystretch}{#1}}
\newcolumntype{C}{>{\centering\arraybackslash}X}
\def\ignore#1{}
\newcommand\scalemath[2]{\scalebox{#1}{\mbox{\ensuremath{\displaystyle #2}}}}
\numberwithin{equation}{section}
\def\cov{\mbox{cov}}
\def\var{\mbox{var}}
\def\cF{{\mathscr F}}
\def\cG{{\mathscr G}}
\def\cR{{\mathscr R}}
\newtheorem{theorem}{Theorem}[section]
\newtheorem{proposition}[theorem]{Proposition}
\newtheorem{lemma}[theorem]{Lemma}
\newtheorem{corollary}[theorem]{Corollary}
\theoremstyle{definition}
\newtheorem{definition}[theorem]{Definition}
\newtheorem{example}[theorem]{Example}
\newtheorem{remark}[theorem]{Remark}
\def\lra{\longrightarrow}
\def\eps{\varepsilon}
\def\<{\langle}
\def\>{\rangle}
\def\wt#1{\widetilde{#1}}
\def\lra{\longrightarrow}
\def\ua{\uparrow}
\def\da{\downarrow}
\def\wh{\widehat}
\def\wt{\widetilde}
\def\cov{\hbox{\rm cov}}\def\var{\text{\rm Var}}
\def\argmin{\mathop{\hbox{\rm arg\,min}}}
\newcommand{\indep}{\perp \!\!\! \perp}
\newcommand{\calN}{\mathcal{N}}
\newcommand{\calS}{\mathcal{S}}
\newcommand{\calU}{\mathcal{U}}
\newcommand{\bE}{\mathbb{E}}
\newcommand{\bR}{\mathbb{R}}
\newcommand{\bN}{\mathbb{N}}
\newcommand{\bP}{\mathbb{P}}
\newcommand{\bT}{\mathbb{T}}
\newcommand{\bZ}{\mathbb{Z}}
\newcommand{\bQ}{\mathbb{Q}}
\begin{document}
	
	\title{Estimating the roughness exponent of stochastic volatility from discrete observations of the integrated variance}
	\author{Xiyue Han$^*$ and Alexander Schied\thanks{
			Department of Statistics and Actuarial Science, University of Waterloo, 200 University Ave W, Waterloo, Ontario, N2L 3G1, Canada. E-Mails: {\tt xiyue.han@outlook.com, aschied@uwaterloo.ca}.\hfill\break
			}}
	\date{\normalsize First version: June 25, 2023\\
	\normalsize This version: April 15, 2026}
	\maketitle

\begin{abstract} We consider the problem of estimating the roughness of the volatility process in a stochastic volatility model that arises as a nonlinear function of fractional Brownian motion with drift. To this end, we introduce a new estimator that measures the so-called roughness exponent of a continuous trajectory, based on discrete observations of its antiderivative. The estimator has a very simple form and can be computed with great efficiency on large data sets. It is not derived from distributional assumptions but from strictly pathwise considerations.  We provide conditions on the underlying trajectory under which our estimator converges in a strictly pathwise sense. Then we verify that these conditions are satisfied by almost every sample path of fractional Brownian motion (with drift). As a consequence, we obtain strong consistency theorems in the context of a large class of rough volatility models, such as the rough fractional volatility model and the rough Bergomi model. We also demonstrate that our estimator is robust with respect to proxy errors between the integrated and realized variance, and that it can be applied to estimate the roughness exponent directly from the price trajectory. Numerical simulations show that our estimation procedure performs well after passing to a scale-invariant modification of our estimator. 
\end{abstract}

\smallskip
\noindent
\textbf{MSC2020 subject classifications:} 91G70, 62P05, 60F15, 60G22

\smallskip
\noindent
\textbf{Keywords:} Rough volatility, estimation of the roughness exponent, rough fractional stochastic volatility model, rough Bergomi model, strong consistency

\section{Introduction}

Consider a stochastic volatility model whose price process satisfies
\begin{equation}\label{price process eq}
dS_t=\sigma_tS_t\,dB_t,\qquad S_0=s_0>0,
\end{equation}
where $B$ is a standard Brownian motion and $\sigma_t$ is a continuous and adapted stochastic process. Since the publication of the seminal paper \cite{GatheralRosenbaum} by Gatheral, Jaisson, and Rosenbaum, it has been widely accepted that the sample paths of $\sigma_t$ often do not exhibit diffusive behavior but instead are much rougher. A specific example suggested in  \cite{GatheralRosenbaum} is to model the log volatility by a fractional Ornstein--Uhlenbeck process. That is, 
\begin{equation}\label{log volatility eq}
\sigma_t=\exp(X^H_t),
\end{equation}
where $X^H$ solves the following integral equation
\begin{equation}\label{OU process eq}
X^H_t=x_0+\rho\int_0^t(\mu-X^H_s)\,ds+W^H_t,\qquad t\ge0,
\end{equation}
for a fractional Brownian motion $W^H$ with Hurst parameter $H\in(0,1)$. In this model, the \lq roughness' of the trajectories of $X^H$ is governed by the Hurst parameter $H$, and it was pointed 
 out 
 in \cite{GatheralRosenbaum} that rather small values of $H$ appear to be most adequate for capturing the stylized facts of empirical volatility time series. Such a model is often referred to as the rough fractional stochastic volatility model.\footnote{In this paper, we include the case $H \ge 1/2$, in which the volatility process $\sigma$ would not be considered \lq rough'.} Since the publication of \cite{GatheralRosenbaum}, many alternative rough volatility models have been proposed, e.g., the rough Heston model \cite{ EuchRosenbaumGatheral2019roughening, EuchRosenbaum2019Characteristic, EuchRosenbaumRoughHeston18} and the rough Bergomi model \cite{BayerGatheralFriz16, Forde2022, Jacquier18RoughBergomi}.

 The present paper contributes to the literature on rough volatility by considering the statistical estimation of the degree of roughness of the volatility process $\sigma_t$. 
 There are several difficulties that arise in this context.

 The first difficulty consists in the fact that in reality the volatility process $\sigma_t$ cannot be observed directly; only the asset prices $S$ are known. Thus, one typically computes the quadratic variation of the log stock prices, 
 \begin{equation}\label{realized variance eq}
 \<\log S\>_t = \int_{0}^{t}\sigma^2_s \,ds,
 \end{equation}
which is also called the integrated variance, and then performs numerical differentiation to estimate proxies $\wh\sigma_t$ for the actual values of $\sigma_t$. The roughness estimation is then based on those proxy values  $\wh\sigma_t$. For instance, this two-step procedure underlies  the statistical analysis for empirical volatilities in \cite{GatheralRosenbaum}, where roughness estimates were based on empirical values  $\wh\sigma_t$  taken from the Oxford-Man Institute of Quantitative Finance Realized Library. A problem with that approach is that estimation errors in the empirical values $\wh\sigma_t$ might substantially distort the outcomes in the final roughness estimation;  see Fukasawa et al.~\cite{Fukasawa2022Estimation} and Cont and Das \cite{ContDasArtefact}.

As a matter of fact, the quadratic variation \eqref{realized variance eq} itself is usually approximated by a finite sum of the form $\sum_i(\log S_{t_i}-\log S_{t_{i-1}})^2$ based on discrete observations $ S_{t_i}$ of the price process. This finite sum can be regarded as the (cumulative) realized variance of the price trajectory $S$. The bias caused by the proxy error between the quadratic variation $ \<\log S\>_t$ and the finite sum $\sum_i(\log S_{t_i}-\log S_{t_{i-1}})^2$ is emphasized in \cite{Fukasawa2022Estimation}, where it is assumed that the proxy errors are log-normally distributed and independent of the Brownian motion $B$ in \eqref{price process eq}, and a Whittle-type estimator for the Hurst parameter is developed based on quasi-likelihood. Another attempt to tackle this measurement error is made by Bolko et al.~\cite{BolkoPakkanen2022GMM}, where in a similar framework, the proposed estimator is based on the generalized method of moments approach. Chong et al.~\cite{Chong2022CLT, Chong2022Minimax} substantially extend the previous results by alleviating the assumption on proxy errors and basing the volatility model on a semi-parametric setup, in which,   with the exception of the Hurst parameter of the underlying fractional Brownian motion, all components are fully non-parametric. One of the conclusions from \cite{BolkoPakkanen2022GMM, Fukasawa2022Estimation, Chong2022CLT} is that the error arising from approximating the quadratic variation \eqref{realized variance eq} with finite sum $\sum_i(\log S_{t_i}-\log S_{t_{i-1}})^2$ could lead to a biased estimation result unless properly controlled.

Here, we analyze new estimators for the roughness of the volatility process $\sigma$ that are based directly on discrete observations of the quadratic variation \eqref{realized variance eq} or on $\sum_i(\log S_{t_i}-\log S_{t_{i-1}})^2$. Our estimators take very simple forms and can be computed with great efficiency on large data sets. They are not derived from distributional assumptions, as most other estimators in the literature, but from strictly pathwise considerations whose development was initiated in \cite{HanSchiedMatrix, HanSchiedHurst}. Furthermore, our estimators do not actually measure the traditional Hurst parameter, which may be unrelated to the actual roughness of trajectories \cite{Gneiting2004Hurst}. Instead, our estimators measure the so-called roughness exponent, which was introduced in \cite{HanSchiedHurst} as the reciprocal of the critical exponent for the power variations of trajectories.
For fractional Brownian motion, this roughness exponent coincides with the Hurst parameter, but it can also be computed for many other trajectories, including certain fractal functions.

In \cite{HanSchiedHurst}, we establish conditions under which a given trajectory $x\in C[0,1]$ admits a roughness exponent $R$, and we provide several estimators that approximate $R$, based on the Faber--Schauder expansion of $x$. In \cite{HanSchiedMatrix}, we derive a robust method for estimating the Faber--Schauder coefficients of $x$ for the situation in which only the antiderivative $y(t)=\int_0^tx(s)\,ds$, and not $x$ itself, is observed on a discrete time grid. As explained in greater detail in \Cref{rationale section}, that method, when combined with the estimators from  \cite{HanSchiedHurst}, gives rise to the specific form of the estimators which we propose here.  

In \Cref{pathwise section}, we formulate conditions on the trajectory $x$ under which our basic estimator $\wh \cR_n(v)$ converges to the roughness exponent of $x$, resting on discrete observations of the function $v(t)=\int_0^tg(x(s))\,ds$, where $g$ is a suitable $C^2$-function. In \Cref{Section Proof fbm}, we then verify that the aforementioned conditions on the trajectory $x$ are satisfied by almost every sample path of fractional Brownian motion (with drift). This verification immediately yields the strong consistency of our estimator for the case in which the stochastic volatility is a nonlinear function of a fractional Brownian motion with drift as in \Cref{thm main rv}. In particular, this result applies to the rough fractional stochastic volatility model \eqref{log volatility eq} and \eqref{OU process eq}.  In \Cref{Section Proof Bergomi}, we decompose the Riemann--Liouville process into a fractional Brownian motion and an absolutely continuous Gaussian process. By means of a pathwise analysis we show  in \Cref{Thm Rough Bergomi} that $\wh \cR_n$ is a consistent estimator of the roughness exponent of the rough Bergomi model.

To incorporate the measurement errors between the integrated and realized variance into our estimation framework, we construct a new estimator $\wt \cR_n$ based on $\wh \cR_n$ in \Cref{RV section}. We then demonstrate the strong consistency of this estimator $\wt \cR_n$ in \Cref{thm main realized} under condition \eqref{RV condition eq}, which only specifies the required number of data points to compute realized variance so as to ensure the strong consistency of $\wt \cR_n$. Condition \eqref{RV condition eq} was previously proposed in \cite{Fukasawa2022Estimation} in a very similar form to ensure a Whittle estimator is consistent when directly applying to the realized variance. However, this consistency is also dependent on a distributional assumption on the proxy errors between integrated and realized variance. In contrast, the consistency of $\wt \cR_n$ does not hinge on any specific distributional assumptions and can be regarded as model-free. Moreover, in \Cref{thm main realized necessary}, we show that condition \eqref{RV condition eq} is not only sufficient but also necessary for the consistency of $\wt \cR_n$. Furthermore, it is worthwhile to point out that $\wh \cR_n$ and $\wt \cR_n$ admit very simple forms, and condition \eqref{RV condition eq} is easy to verify. These findings demonstrate that $\wh \cR_n$ is robust with respect to approximation errors. Moreover, a key insight from \Cref{RV section} is that estimating the roughness exponent of the volatility process from the price trajectory can be split into two distinct steps: constructing a consistent estimator based on integrated variance and approximating the integrated variance by realized variance.

We believe that the fact that our estimator is built on a strictly pathwise approach makes it very versatile and applicable also in situations in which trajectories are not based on fractional Brownian motion. As a matter of fact, our Examples \ref{Takagi example} and \ref{Takagi example 2} illustrate that our estimation procedure can work very well for certain deterministic fractal functions, and \Cref{Random Takagi example} shows that our estimator $\wh \cR_n$ can estimate the roughness exponent for non-Gaussian random processes. Furthermore, the simulation study presented in \Cref{simulation section} demonstrates that our estimation framework applies to processes in the Cauchy class \cite{Gneiting2004Hurst}. This class of processes permits a complete bifurcation between the degree of roughness and the long-range dependence, distinguishing it fundamentally from fractional Brownian motion.

One disadvantage of our original estimator $\wh\cR_n$ is that it is not scale invariant. Using an idea from  \cite{HanSchiedHurst}, we thus propose a scale-invariant modification $\cR^s_n$ of $\wh\cR_n$ in \Cref{scale-invariant section}. In \Cref{Theorem scale rate}, we obtain the rate of convergence of $\cR^s_n$ for fractional Brownian motion with drift.
The subsequent \Cref{simulation section} contains a simulation study illustrating the performance of our estimators. This study illustrates that passing to the scale-invariant estimator can greatly improve the estimation accuracy in practice.

\section{Main results}
\label{Section Main}

Consider a stochastic volatility model whose price process satisfies
\begin{equation}\label{price process eq 2}
dS_t=\sigma_tS_t\,dB_t,\qquad S_0=s_0>0,
\end{equation}
where $B$ is a standard Brownian motion and $\sigma_t$ is a continuous and adapted stochastic process.\footnote{One can easily add a drift to \eqref{price process eq 2}
 by performing an equivalent change of measure as in Section 2 of \cite{BayerFrizGatheral}. Since the results in our paper are almost-sure results, they will be unaffected by any equivalent change of measure.}
As explained in the introduction, our first goal in this paper is to estimate the roughness of the trajectories  $t\mapsto \sigma_t$ directly from discrete, equidistant  observations of the integrated variance,
 \begin{equation}\label{realized variance eq 2}
 \<\log S\>_t = \int_{0}^{t}\sigma^2_s \,ds,
 \end{equation}
 without having first to compute numerical approximations for  $\sigma_t$, e.g., via numerical differentiation of $t\mapsto\<\log S\>_t$. This is important, because in reality the volatility $\sigma_t$ is not directly observable and numerical errors in the computation of the numerical approximations might distort the roughness estimate (see, e.g., \cite{ContDasArtefact}).  In \Cref{RV section}, we will discuss the even more realistic situation in which also  the asset prices themselves can only be observed at discrete time points, thus leading to the need for an additional approximation of the integral in \eqref{realized variance eq 2}. 
  
 While our main results are concerned with rough stochastic volatility models based on fractional Brownian motion, a significant portion of our approach actually works completely trajectorial-wise, in a model-free setting; see \Cref{Section Pathwise}. So let $x:[0,1]\to\bR$  be any continuous function.
 For  $p \ge1$, the $p^\text{th}$ variation of the function $x$ along the $n^{\text{th}}$ dyadic partition  is defined as
 \begin{equation}\label{eq p variation}
\<x\>^{(p)}_n:= \sum_{k = 0}^{2^n-1}|x((k+1)2^{-n}) - x(k2^{-n})|^p.
\end{equation}
If there exists ${R} \in [0,1]$ such that 
\begin{equation*}
\lim_{n \ua \infty} \<x\>^{(p)}_n = \begin{cases}
0 &\quad \text{for} \quad p > 1/{R},\\
\infty &\quad \text{for} \quad p < 1/{R},
\end{cases}
\end{equation*}
we follow \cite{HanSchiedHurst} in referring to ${R}$ as the \textit{roughness exponent} of $x$. Intuitively, the smaller $R$ the rougher the trajectory $x$ and vice versa. Moreover, if $x$ is a typical sample path of  fractional Brownian motion, the roughness exponent ${R}$ is equal to the  Hurst parameter~\cite[Theorem 5.1]{HanSchiedHurst}. An analysis of  general properties of the roughness exponent can be found in  \cite{HanSchiedHurst}. There, we also provide an estimation procedure for ${R}$ from discrete observations of the trajectory $x$. 
However, the problem of estimating ${R}$ for a trajectory of stochastic volatility is more complex, because volatility cannot be measured directly. If we assume that asset prices can be observed continuously, then we can also observe their integrated variance \eqref{realized variance eq 2}. In our current pathwise setting, this corresponds to making  discrete observations of 
\begin{equation}\label{y antiderivative eq}
y(t)=\int_0^t g(x(s))\,ds,\qquad 0\le t\le 1,
\end{equation}
where $g:\bR\to\bR$ is sufficiently regular. For instance, in the rough fractional stochastic volatility model \eqref{price process eq}  and  \eqref{log volatility eq}, where log-volatility is given by a fractional Ornstein--Uhlenbeck process  \eqref{OU process eq}, we will take $x$ as a trajectory of the fractional Ornstein--Uhlenbeck process and $g(t)=(e^{t})^2=e^{2t}$. 

Let us now introduce our estimator. Suppose that for some given  $n\in\bN$  we have the  discrete observations $\{y(k2^{-n-2}):k=0,\dots, 2^{n+2}\}$ of the function  $y$ in \eqref{y antiderivative eq}. Based on these data points, we introduce the coefficients
\begin{equation}\label{eq vartheta}
\vartheta_{n,k}:= 2^{3n/2+3}\left(y\Big(\frac{4k}{2^{n+2}}\Big)-2y\Big(\frac{4k+1}{2^{n+2}}\Big)+2y\Big(\frac{4k+3}{2^{n+2}}\Big)-y\Big(\frac{4k+4}{2^{n+2}}\Big)\right),
\end{equation}
for $0 \le k \le 2^{n}-1$.
Our estimator for the roughness exponent of the trajectory $g\circ x$ is now given by
\begin{equation}\label{wh Rn eq}
\wh\cR_n(y): = 1-\frac{1}{n}\log_2 \sqrt{\sum_{k = 0}^{2^{n}-1}\vartheta_{n,k}^2}.
\end{equation}
This estimator  was first announced in \cite[Remark 2.2]{HanSchiedMatrix}, with a forward reference to this present paper for a mathematical analysis of its properties. The focus of \cite{HanSchiedMatrix} was on the theoretical properties of the approximated Faber–Schauder coefficients $\vartheta_{n,k}$, which provide the foundation for establishing the consistency of $\wh\cR_n$. In \Cref{rationale section}, we further clarify the intuition behind $\wh\cR_n$ and its relationship with the findings of \cite{HanSchiedMatrix, HanSchiedHurst}. Our estimator $\wh\cR_n$ admits a simple form, which is easy to calculate, and it will also lead to a new estimator $\wt\cR_n$ that can estimate the roughness exponent from  the realized variance $\sum_i(\log S_{t_i}-\log S_{t_{i-1}})^2$; see \Cref{RV section}.

\subsection{Strong consistency theorems}\label{consistency section}

We can now state our main results,  which show the strong consistency of $\wh\cR_n$ when applied to 
a variety of rough stochastic volatility models. In the sequel, $W^H=(W^H_t)_{0\le t\le1}$  denotes a fractional Brownian motion with Hurst parameter $H$, defined on a given probability space $(\Om,\cF,\bP)$.

\begin{theorem}\label{thm main rv} 
Suppose that    $g\in C^2(\bR)$ is strictly monotone and that   $X$ is given by 
\begin{equation}\label{XH eq}
X_t:=x_0+W^H_t+\int_0^t\xi_s\,ds,\qquad0\le t\le 1,
\end{equation}
 where $\xi$ is progressively measurable with respect to the natural filtration of $W^H$ and satisfies the following additional assumption. 
 \begin{enumerate}
 	\item \label{thm main rv part a}For $\bP$-a.e.~$\omega \in \Omega$, we have
 	\begin{equation}\label{eq new cond}
 		\lim_{n \ua \infty} 2^{n(2H-1)}\int_{0}^{1}\sum_{k = 0}^{2^n-1}\left[\int_{\frac{s+k}{2^n}}^{\frac{s+k+1}{2^n}}\xi_u\,du\right]^2 \,ds = 0,
 	\end{equation}
	where we put $\xi_t=0$ for $t>1$.
 	\item \label{thm main rv part b} For $\bP$-a.e.~$\omega \in \Omega$, $\xi_t(\omega) \in L^p([0,1])$ for some $p > \frac{5}{5 - 2H}$.
 \end{enumerate}
Then $\bP$-a.e.~sample path of the stochastic process
$$ g(X_t)=g\bigg(x_0+\int_0^t\xi_s\,ds+W^H_t\bigg),\qquad 0\le t\le1,
$$
admits the roughness exponent $H$. Moreover, for
$Y_t = \int_{0}^{t}g(X_s) \,ds$, we have $\lim_{n}\wh\cR_n(Y) =H$ $\bP$-a.s.
\end{theorem}

The following proposition provides a set of alternative conditions that imply the conditions \ref{thm main rv part a} and \ref{thm main rv part b} in \Cref{thm main rv}.

\begin{proposition}\label{cor main rv} Conditions \ref{thm main rv part a} and \ref{thm main rv part b} in \Cref{thm main rv} hold if $\xi$ satisfies the following two assumptions. 
	\begin{enumerate}[{\rm (a')}]
		\item \label{cor main rv part a}If $H \le 1/2$, we assume  that the function $t\mapsto\xi_t$ is $\bP$-a.s.~bounded in the sense that  there exists a finite random variable $C$ such that $|\xi_t(\om)|\le C(\om)$ for a.e.~$t$ and $\bP$-a.e.~$\om$. 
		\item \label{cor main rv part b}If $H>1/2$, we assume that for $\bP$-a.e.~$\om$ the function $t\mapsto \xi_t(\om)$ is H\"older continuous with some exponent $\alpha(\om)>2H-1$.
	\end{enumerate}
\end{proposition}

\begin{proof} Under conditions (a') and (b'),  there exists a finite random variable $C$ such that $|\xi_t(\omega)| \le C(\omega)$ for a.e.~$t$ and a.s.~$\omega \in \Omega$. In particular, condition \ref{thm main rv part b} of \Cref{thm main rv}  is satisfied. Moreover, 
	\begin{equation*}
		2^{n(2H-1)}\int_{0}^{1}\sum_{k = 0}^{2^n-1}\left[\int_{\frac{s+k}{2^n}}^{\frac{s+k+1}{2^n}}\xi_u\,du\right]^2 \,ds \le 2^{n(2H-3)}\int_{0}^{1}\sum_{k = 0}^{2^n-1} C^2(\omega) \,ds = 2^{n(2H-2)} C^2(\omega) \rightarrow 0
	\end{equation*}
	as $n \ua \infty$, which yields the condition \ref{thm main rv part a} of \Cref{thm main rv}. 
\end{proof}

\Cref{thm main rv} applies in particular to the case in which $X$ is a  fractional Ornstein--Uhlenbeck process.
	This process
	was first introduced in \cite{CheriditoFOU} as the pathwise solution of the integral equation
	\begin{equation}\label{OU process eq 2}
		X^H_t=x_0+\rho\int_0^t(\mu-X^H_s)\,ds+W^H_t,\qquad t\in[0,1],
	\end{equation}
	where $x_0,\rho,\mu\in\mathbb{R}$ are given parameters, and it was suggested by  Gatheral et al.~\cite{GatheralRosenbaum} as a suitable model for log volatility, i.e., $\sigma_t=e^{X^H_t}$. To see that  \Cref{thm main rv} applies in this case, we let  $\xi_t:= \rho(\mu-X_t^H)$. Since $\bP$-almost all sample paths of $W^H$ are H\"older continuous  for any exponent $\alpha<H$, the same is true for $X^H$ and in turn for $\xi$. Moreover, 
  $H>2H-1$ and so conditions (a') and (b') from \Cref{cor main rv}  are satisfied.

	\begin{remark}\label{thm main rv variants remark} The validity of \Cref{thm main rv} under the stronger conditions (a') and (b') of \Cref{cor main rv} can alternatively be obtained via an application of \cite[Theorem 1.4]{HanSchiedGirsanov}, which provides a criterion for the absolute continuity of the law of the process $(x_0+\int_0^t\xi_s\,ds+W^H_t)_{0\le t\le 1}$ with respect to the law of $x_0+W^H$. When taking this route, the natural filtration of $W^H$ can be replaced by a larger filtration satisfying \cite[Definition 1.3]{HanSchiedGirsanov}. By using \cite[Theorem 1.6]{HanSchiedGirsanov} instead of \cite[Theorem 1.4]{HanSchiedGirsanov}, one sees that the process $X$ in \eqref{XH eq} can also be  an adapted solution to the fractional integral equation  
	\begin{equation}\label{eq integral equ}
		X_t=x_0+\int_0^tb(X_s)\,ds+W^H_t,\qquad 0\le t\le1,
	\end{equation}
	where $b:\bR\to\bR$ is locally bounded and, for $H>1/2$, locally H\"older continuous with some exponent $\alpha>2-1/H$. Nevertheless, although conditions (a') and (b') are slightly stronger, they allow us to derive the rate of convergence of the estimator $\wh \cR_n$ and related scale-invariant estimators; see \Cref{scale-invariant section}.	\end{remark}

Many rough stochastic volatility models in the literature are not based on fractional Brownian motion but on the closely related Riemann--Liouville process. This is in particular the case for the  rough Bergomi model \cite{Forde2022}. Here, we work with the following rescaled version of the Riemann--Liouville process that is given by 
\begin{equation*}
	Z^H_t = \int_0^t (t-s)^{H-\frac{1}{2}}\,dW_s, \qquad t \ge 0. 
\end{equation*}
for a parameter $H\in(0,1)$ and a standard Brownian motion $W$ on $(\Omega,\cF,\bP)$. Moreover, the following shifted Riemann--Liouville process is sometimes considered in the literature. For a parameter  $\nu \ge 0$, it is by 
$$Z^{H,\nu}_t := Z^H_{t + \nu} - Z^H_\nu,\qquad t \ge 0.$$

\begin{theorem}\label{Thm Rough Bergomi}
	Suppose that $g \in C^2(\bR)$ is strictly monotone and that $X$ is given by 
	\begin{equation}\label{eq RL with drift}
		X_t := x_0 + Z^{H,\nu} _t+  \int_{0}^{t}\xi_s \, ds \qquad 0 \le t \le 1,
	\end{equation}
	where $\nu\ge0$ and $\xi_t \in C[0,1]$ is progressively measurable with respect to the natural filtration of $Z^{H,\nu}$ and satisfies conditions \ref{thm main rv part a} and \ref{thm main rv part b} in \Cref{thm main rv}.
Then $\bP$-a.e.~sample path of the stochastic process
	\begin{equation*}
		g(X_t) = g\left(x_0 + Z^{H,\nu} _t + \int_{0}^{t}\xi_s \,ds\right), \qquad 0 \le t \le 1,
	\end{equation*}
	admits the roughness exponent $H$. Moreover, for $Y_t = \int_0^t g(X_s)\,ds$, we have $\lim_n \wh \cR_n(Y) = H$ $\bP$-a.s.
\end{theorem}
The above theorem in particular applies to the following class of stochastic volatility models whose volatility processes $\sigma$ admit the following dynamics,
\begin{equation}\label{eq VHu}
	\sigma_t = \exp(V^{H,\nu}_t) \quad \text{and} \quad V^{H,\nu}_t := \gamma Z^{H,\nu}_t - \frac{\gamma^2}{4H}t^{2H}, \qquad  t \in [0,1].
\end{equation}
Taking $\nu = 0$ recovers the rough Bergomi model in the form discussed in \cite[Equation (28)]{Forde2022}. On the other hand, as $\nu \ua \infty$, the process $Z^{H,\nu}$ converges to fractional Brownian motion  in distribution. Thus, varying $\nu$ generates a continuum of stochastic volatility models that interpolate between the rough Bergomi model and a model based on fractional Brownian motion. As a direct corollary of \Cref{Thm Rough Bergomi}, stochastic volatility models of the form \eqref{eq VHu} admit the roughness exponent $H$, and our estimator $\wh \cR_n$ is a consistent estimator of this exponent.
\begin{corollary}\label{Cor Rough Bergomi}
	Let $\sigma$ be as in \eqref{eq VHu} for given $\nu \ge 0$. Then the volatility process $\sigma_t$ admits the roughness exponent $H$ with probability one, and we have 
	\begin{equation}
		\lim_{n \ua \infty} \wh \cR_n \left(\int_{0}^{\cdot}\sigma_s^2 \,ds\right) = H \qquad \bP\text{-}a.s.
	\end{equation}
\end{corollary}

\subsection{A scale-invariant estimator}\label{scale-invariant section}

By definition, the roughness exponent is scale-invariant, but our estimator $\wh \cR_n$ is not. To wit,  for every trajectory $y\in C[0,1]$ we have
\begin{equation}\label{eq scale invariant}
	\wh\cR_n(\lambda y) - \wh\cR_n(y) = -\frac{\log_2 |\lambda|}{n} \quad \text{for} \quad \lambda \neq 0.
\end{equation}
Consequently, a scaling factor $\lambda$ may either remove or introduce a bias into an estimate, and it can notably slow down or speed up the convergence of $\wh\cR_n(y)$. This will be illustrated by the simulation studies provided in \Cref{simulation section}. Since the scaling behavior \eqref{eq scale invariant} is identical to that of the estimator $\wh R^\ast_n$ in \cite{HanSchiedHurst}, we can construct scale-invariant modifications of $\wh\cR_n$ in a manner completely analogous to the definitions in \cite[Section 8]{HanSchiedHurst}. 
Here, we carry this out for the analogue of sequential scaling proposed in \cite[Definition 8.1]{HanSchiedHurst}. The underlying idea is fairly simple: We choose $m<n$ and then search for that scaling factor $\lambda$ that minimizes  the weighted mean-squared differences $\wh\cR_k(\lambda y) - \wh\cR_{k-1}({\lambda y})$ for $k = m+1,\dots,n$. The intuition is that such an optimal scaling factor $\lambda$ enforces the convergence of the estimates $\wh\cR_k(\lambda y)$.

 \begin{definition}\label{sls and tls def}Fix $m\in\bN$ and $\alpha_0,\dots,\alpha_{m}\ge0$ with $\alpha_0>0$.
For $n>m$, the \textit{sequential scaling factor} $\lambda_{n}^{s}$ and the \textit{sequential scale estimate} $\cR^s_{n}(y)$ are defined as follows,
	\begin{equation}\label{eq_def_seqloc}
		\begin{split}
			\lambda^{s}_{n}&:= \argmin_{\lambda > 0} \sum_{k = n-m}^{n}\alpha_{n-k}\Big(\wh\cR_k(\lambda y) - \wh\cR_{k-1}(\lambda y)\Big)^2 \quad \text{and} \quad
			\cR^s_{n}(y):= \wh\cR_{n}(\lambda^{s}_{n} y).
		\end{split}
	\end{equation}
	The corresponding mapping $\cR^s_{n}:C[0,1]\rightarrow \bR$ will be called the \textit{sequential scale estimator}.
	 \end{definition}

Just as Proposition 8.3 in \cite{HanSchiedHurst}, one can prove the following result.

\begin{proposition}\label{scale-inv prop} Consider the context of Definition~\ref{sls and tls def}  with fixed $m\in\bN$ and $\alpha_0,\dots,\alpha_{m}\ge0$ such that $\alpha_0>0$.
\begin{enumerate}
\item The  optimization problem  \eqref{eq_def_seqloc} admits a unique solution for every function $y\in C[0,1]$. In particular, all objects in Definition~\ref{sls and tls def} are well defined.
\item The sequential scale estimator $\cR^s_{n}$ can be represented as follows as a  linear combination of $\wh\cR_{n-m-1},\dots, \wh\cR_n$,
\begin{equation*}
	\cR_n^s = \beta_{n,n}\wh\cR_n + \beta_{n,n-1}\wh\cR_{n-1} + \cdots + \beta_{n,n-m-1}\wh\cR_{n-m-1},
\end{equation*}
where
\begin{equation}\label{eq beta nk}
	\scalemath{0.9}{\beta_{n,k}=\begin{cases}\displaystyle1+\frac{\alpha_0}{c^{\textrm{\rm s}}_{n}n^2(n-1)}&\text{if $k=n$,}\\
		\displaystyle\frac1{c^{\textrm{\rm s}}_{n}nk}\Big(\frac{\alpha_{n-k}}{k-1}-\frac{\alpha_{n-k-1}}{k+1}\Big)&\text{if $n-m\le k\le n-1$,}\\
		\displaystyle
		\frac{-\alpha_m}{c^{\textrm{\rm s}}_{n}n(n-m)(n-m-1)}&\text{if $ k= n-m-1$,}
	\end{cases}\quad\text{for}\quad c^{s}_{n}:= \sum_{k = n-m}^{n}\frac{\alpha_{n-k}}{k^{2}(k-1)^{2}}.}
\end{equation}

\item The sequential scale estimator is scale-invariant. That is,  for $n>m$, $y \in C[0,1]$, and $\lambda \neq 0$, we have $\cR^s_{n}(\lambda y) = \cR^s_{n}(y)$.
	\item If $y\in C[0,1]$ and $R\in[0,1]$ are such that there exists $\lambda\neq 0$ for which $|\wh\cR_n(\lambda y)-R|=\mathcal{O}(a_n)$ as $n\ua\infty$ for some sequence $(a_n)$ with $a_n=o(1/n)$, then $			|\cR^s_{n}(y)-R| =\mathcal{O}(na_n)$. \label{scale-inv item d}
\end{enumerate}
\end{proposition}

In particular, assertion \ref{scale-inv item d} of \Cref{scale-inv prop} establishes the consistency criteria and obtains the rate of convergence of the sequential scale estimator $\cR^s_n$. By applying the above result, the following theorem verifies the strong consistency and provides the rate of convergence of $ \cR^s_n$ for the fractional Brownian motion with drift. The proof of \Cref{Theorem scale rate} is provided in \Cref{Section Proof Rate}.

\begin{theorem}\label{Theorem scale rate}
Let $m \in \bN$, $\alpha_0 > 0$ and $\alpha_1, \cdots, \alpha_m \ge 0$ be fixed. Let $X$ be as in \Cref{cor main rv} and 
\begin{equation*}
	Y_t = \int_0^t X_s \,ds, \qquad t \in [0,1].
\end{equation*}
 Then the following almost sure rate of convergence holds for the sequential scale estimator $\cR^s_n$,
		\begin{equation}\label{eq rate of convergence Y}
			|\cR^s_n(Y) - H| = \mathcal{O}\big(2^{-n/2}\sqrt{\log n}\big).
	\end{equation}
\end{theorem}

\begin{remark}\label{Remark scale rate}
We now investigate the impact of the integration operation on the estimation of the roughness exponent. Let $R^s_n$ be the sequential scale estimator based on $\wh R_n$ as in \cite[Definition 8.1]{HanSchiedHurst}, suppose that, in contrast to the setting of \Cref{Theorem scale rate}, the process $X$ as in \eqref{XH eq} can be directly observed. It is shown in \cite[Corollary 8.4]{HanSchiedHurst} that when applying $R^s_n$ to estimate the Hurst parameter for $X$, the following almost sure rate of convergence holds 
	\begin{equation}\label{eq rate of convergence X}
		|R^s_n(X) - H| = \mathcal{O}\big(2^{-n/2}\sqrt{\log n}\big).
	\end{equation}
  Comparing the upper bounds in \eqref{eq rate of convergence X} and \eqref{eq rate of convergence Y} shows that both $R^s_n$ and $\cR^s_n$ attain the same asymptotic order of convergence in the worst case. Hence, while the actual convergence rates may differ, the integration step does not worsen the asymptotic rate of decay guaranteed by these upper estimates.
   
    It is also worthwhile to point out that \Cref{Theorem scale rate} has recently been extended to  the case in which $Y$ is the antiderivative of $g \circ X$ for a nonlinear function $g \in C^1[0,1]$; see \cite[Theorem 1.2]{HanSchiedSIFIN}.  However, \Cref{Theorem scale rate} is not obsolete, because it is the main ingredient in the proof of \cite[Theorem 1.2]{HanSchiedSIFIN}.
\end{remark}

\subsection{Estimation of the roughness exponent from realized variance}\label{RV section}

As previously mentioned, the integrated variance \eqref{realized variance eq} is usually approximated by a finite sum of the form $\sum_i(\log S_{t_i}-\log S_{t_{i-1}})^2$ based on discrete observations $ S_{t_i}$ of the price process. It has been shown in the literature \cite{BolkoPakkanen2022GMM, Fukasawa2022Estimation} that the error arising from the approximation of the integrated with the realized variance introduces an \lq extra' noise to the true \lq signal' $\<\log S\>$, often resulting in underestimating the roughness exponent of the hidden volatility process $\sigma$. 

The error between the integrated and realized variance has been previously studied in the framework of roughness estimation. Those studies were pioneered by Fukasawa et al.~\cite{Fukasawa2022Estimation}, where a Whittle-type estimator for the Hurst parameter was developed under the assumption that approximation errors are log-normally distributed and independent of the Brownian motion $B$ in \eqref{price process eq}. Under a similar setting, Bolko et al.~\cite{BolkoPakkanen2022GMM} proposed an estimator based on the generalized method of moments approach. Finally, Chong et al.~\cite{Chong2022CLT, Chong2022Minimax} alleviated the assumption on proxy errors in a semi-parametric setup, in which,   with the exception of the Hurst parameter of the underlying fractional Brownian motion, all components are fully nonparametric. 

In this section, we will show that our estimator is effective in accommodating the proxy error between the realized and integrated variance. To this end, we continue to consider a stochastic volatility model whose price process satisfies
\begin{equation}\label{price process eq new}
dS_t=\sigma_tS_t\,dB_t,\qquad S_0=s_0>0,
\end{equation}
where $B$ is a standard Brownian motion and $\sigma_t$ is a continuous and adapted stochastic process.  We denote the integrated variance by
\begin{equation}\label{IV process eq}
	Y_t = \<\log S\>_t = \int_0^t \sigma^2_s\,ds, \qquad t \in [0,1].
\end{equation}
Furthermore, we let $(m_n)_{n \in \bN_0}$ be a fixed increasing sequence, where $m_n \ge 1$ can be regarded as the number of observed data points used to approximate the integrated variance 
$Y_{(k+1)2^{-n}}-Y_{k2^{-n}}$
over each interval $[k2^{-n},(k+1)2^{-n}]$. We denote
by
\begin{equation}\label{eq process RV}
	\wh Y^{(n)}_t := \sum_{i = 1}^{\floor{2^{n}m_nt}}\left(\log S_{\frac{i}{2^n m_n}} - \log S_{\frac{i-1}{2^n m_n}} \right)^2
\end{equation}
the realized variance calculated from the price process \eqref{price process eq new} with a mesh size $(2^n m_n)^{-1}$. Let us define the proxy coefficients $\wt \vartheta_{n,k}$ computed from $\wh Y^{(n)}_t$,
\begin{equation}\label{eq wt vartheta}
	\wt \vartheta_{n,k} := 2^{3n/2+3}\left(\wh Y^{(n+2)}_{\frac{4k}{2^{n+2}}} - 2\wh Y^{(n+2)}_{\frac{4k+1}{2^{n+2}}} + 2\wh Y^{(n+2)}_{\frac{4k+3}{2^{n+2}}} - \wh Y^{(n+2)}_{\frac{4k+4}{2^{n+2}}}\right).
\end{equation}
By replacing $\vartheta_{n,k}$ with $\wt \vartheta_{n,k}$, we can construct an estimator $\wt \cR_n$ that directly estimates the roughness exponent from the realized variance as follows 
\begin{equation*}
	\wt \cR_n(S) := 1 - \frac{1}{n}\log_2 \sqrt{\sum_{k = 0}^{2^n-1}\wt \vartheta^2_{n,k}},
\end{equation*}
where the argument  $S$ of $\wt \cR_n$ emphasizes that $\wt \cR_n$ estimates directly from the price process $S$. The following theorem shows that if the sequence $(m_n)$ is properly chosen, the estimators $\wh \cR_n$ and $\wt \cR_n$ converge to the same value.

\begin{theorem}\label{thm main realized}
	Let $Y_t$ and $\wh Y^{(n)}_t$ be as in \eqref{IV process eq} and \eqref{eq process RV}. Suppose that $\bE\left[\sup_{t \in [0,1]}\sigma^{16}_s\,ds\right] < \infty$ and that there exists $R \in (0,1)$ such that $\wh \cR_n(Y) \rightarrow R$ with probability one. If 
	\begin{equation}\label{RV condition eq}
		\lim_{n \ua \infty}\frac{1}{n}\log_2 m_n > 2R,
	\end{equation}
	then $\wt \cR_n(S) \rightarrow R$ as $n \ua \infty$ with probability one.
\end{theorem}

Let us comment on the above theorem. First of all, \Cref{thm main realized} provides a simple condition guaranteeing that the estimates $\wt \cR_n(S)$ and $\wh \cR_n(Y)$ converge to the same limit as $n$ approaches infinity. Particularly, \Cref{thm main realized} specifies the number of data points required to compute the realized variance for each interval $[k2^{-n},(k+1)2^{-n}]$. For instance, if the volatility process $\sigma_t$ admits a relatively small roughness exponent, the realized variance only needs to be computed on a relatively sparse mesh so that the estimator $\wt \cR_n$ is a consistent estimator. This is because a rough \lq signal' $\sigma_t$ is very persistent against  the proxy error between the integrated and realized variance. In contrast, to achieve a consistent estimation of the roughness exponent for a relatively smooth volatility process $\sigma_t$, the proxy error between the integrated and realized variance has to be very small so that the actual roughness of the \lq signal' $\sigma_t$ is not overshadowed by the \lq noise' $|Y_t - \wh Y^{(n)}_t|$. 

It is also worth noting that the estimators $\wh \cR_n$ and $\wt \cR_n$ take remarkably simple forms. Moreover, condition \eqref{RV condition eq} is straightforward to verify. Furthermore, one conceptual contribution of \Cref{thm main realized} is to show that estimating the \lq roughness' of the volatility process from price trajectory can be decomposed into two distinct tasks, even in a strictly pathwise manner. First, it involves constructing a consistent estimator from discrete observations of integrated variance. Second, one needs to analyze the perturbation caused by observation noise to determine the number of observed prices included in each realized variance calculation.  

As noted in the introduction, Bolko et al.~\cite{BolkoPakkanen2022GMM} and Fukasawa et al.~\cite{Fukasawa2022Estimation} were among the first to address the proxy error between integrated and realized volatility, $\wh Y^{(n)}_t - Y_t$, and to construct consistent estimators of the true Hurst parameter based on price observations. In particular, the condition \eqref{RV condition eq} was first discovered in \cite[Theorem 2.8]{Fukasawa2022Estimation} in a very similar form. However, the methods in \cite{Fukasawa2022Estimation} depend on specific distributional assumptions about the proxy error. For instance, in \cite[Theorem 2.1]{Fukasawa2022Estimation}, the authors assume that the triangular arrays
	\begin{equation*}
	\left(\sqrt{\frac{m_n}{2}}\log \frac{\wh Y^{(n)}_{2^{-n}i} - \wh Y^{(n)}_{2^{-n}(i-1)}  }{Y_{2^{-n}i} - Y_{2^{-n}(i-1)}}\right)_{1 \le i \le 2^n} 
	\end{equation*}
	converge to an (infinite-dimensional) Gaussian vector of standard normal random variables. By contrast, we only impose the moment condition $\bE\left[\sup_{t \in [0,1]}\sigma^{16}_s\,ds\right] < \infty$. By Fernique's theorem, this moment condition is satisfied as soon as the volatility is of the form $\sigma_t=g(X_t)$ for some Gaussian process $X$ and a function $g$ with $|g(x)|\le C e^{|x|^{\beta}}$ for some $C>0$ and $\beta<2$.

	Finally, as stated in \Cref{thm main realized}, the condition \eqref{RV condition eq} is a sufficient condition for the consistency of the estimator $\wt \cR_n$. In the following theorem, we are going to show that this condition is also a necessary condition, at least in the special case of the volatility model \eqref{price process eq new} in which the Brownian motion  $B$ driving the asset price process $S$ is independent of the volatility process $\sigma$.  In this simple case, the proxy error $|Y_t - \wh Y^{(n)}_t|$ is completely traceable even for a finite generation $n$, and this fact allows us to show that the condition \eqref{RV condition eq} is also a necessary condition for the consistency of the estimator $\wt \cR_n$. 
	
	\begin{theorem}\label{thm main realized necessary}
		Suppose that $t \mapsto \sigma_t$ is a progressively measurable process that is $\bP$-a.s.~bounded in the sense that there exists a finite random variable $C$ such that $|\sigma_t(\omega)| \le C(\omega)$ for a.e.~$t \in [0,1]$ and $\bP$-a.e.~$\omega \in \Omega$. 
		Let furthermore $S$ be given by $dS_t=\sigma_tS_t\,dB_s$, where $B$ is a standard Brownian motion independent of $\sigma$ and 
		let $Y_t$ and $\wh Y^{(n)}_t$  as in \eqref{IV process eq} and \eqref{eq process RV}, respectively. Suppose finally that $\lim_n n^{-1}\log_2 m_n = 2\alpha$ and $\lim_n\wh \cR_n(Y) = R$ $\bP$-a.s. Then we have $\bP$-a.s.,
		\begin{equation*}
			\lim_{n \ua \infty} \wt \cR_n(S) = \begin{cases}
				\alpha, &\text{if $\alpha < R,$}\\
				R, &\text{if $\alpha > R.$}
			\end{cases}
		\end{equation*} 
	\end{theorem}
	
To the best of our knowledge, the above theorem provides the first result showing that condition~\eqref{RV condition eq} is not only sufficient but also necessary. In particular, it shows that insufficient sampling in the computation of realized variance leads to an underestimation of the roughness exponent. As a direct corollary, \Cref{thm main realized necessary} also confirms that the asymptotic sampling rate required in condition~\eqref{RV condition eq} cannot be further relaxed in general. Hence, it establishes the optimality of asymptotic sampling rate in condition~\eqref{RV condition eq}. 

In practice, condition \eqref{RV condition eq} has the disappointing effect that $\mathcal{O}(2^{2n})$ data points need to be added between each original observation point, if we want to replace the integrated variance $Y$ with the realized variance $\wh Y^{(n+2)}$ in the estimator $\wt \cR_{n}$ and are agnostic about the true value of $R$. We emphasize that this effect is not a shortcoming of the structure of our estimator  $\wh\cR_n$  but solely result of the poor performance of the realized variance  $\wh Y^{(n+2)}$  as approximation of the integrated variance $Y$. This poor performance has been previously noted in the literature, and it has led to development of other more efficient estimators of the integrated variance; see, e.g., \cite{AitSahaliaJacod} and the references therein. In future research, we aim to analyze the performance of our estimator if in \eqref{eq wt vartheta}, $\wh Y^{(n)}$ is replaced by such a more efficient estimate for the integrated variance $Y$.

\subsection{Simulation study}\label{simulation section}

In this section, we illustrate the application of  \Cref{thm main rv} and \Cref{scale-inv prop}  by means of numerical simulations. We will see that the estimation performance can be significantly boosted by replacing $\wh\cR_n$ with the sequential scale estimator $\cR^s_n$.

We start by illustrating \Cref{thm main rv} for the simple choice $g(x)=x$. Recall from \eqref{eq vartheta} and \eqref{wh Rn eq} that for given $n\in\bN$, the computation of $\wh\cR_n(y)$ requires observations of the trajectory $y$ at all values of the time grid $\bT_{n+2}:=\{k2^{-n-2}:k=0,1,\dots, 2^{n+2}\}$. When using for $y$ the antiderivative of a sample path of fractional Brownian motion $W^H$, we generate the values of $W^H$ on the finer grid $\bT_N$ with $N=n+6$ using the R-package {\tt Yumia} \cite{Yumia}. Then we put
\begin{equation}\label{discrete fBM Y eq}
Y_{k2^{-n-2}}:=2^{-N}\sum_{j = 1}^{2^{N-n-2}k}W^H_{j2^{-N}},\quad k=0,1,\dots, 2^{n+2},\end{equation}
which is an approximation of $\int_0^tW^H_s\,ds$ by Riemann sums. Our corresponding simulation results are displayed in \Cref{fig:Box Ori}.

\begin{figure}[H]
	\centering
	\includegraphics[width=8cm]{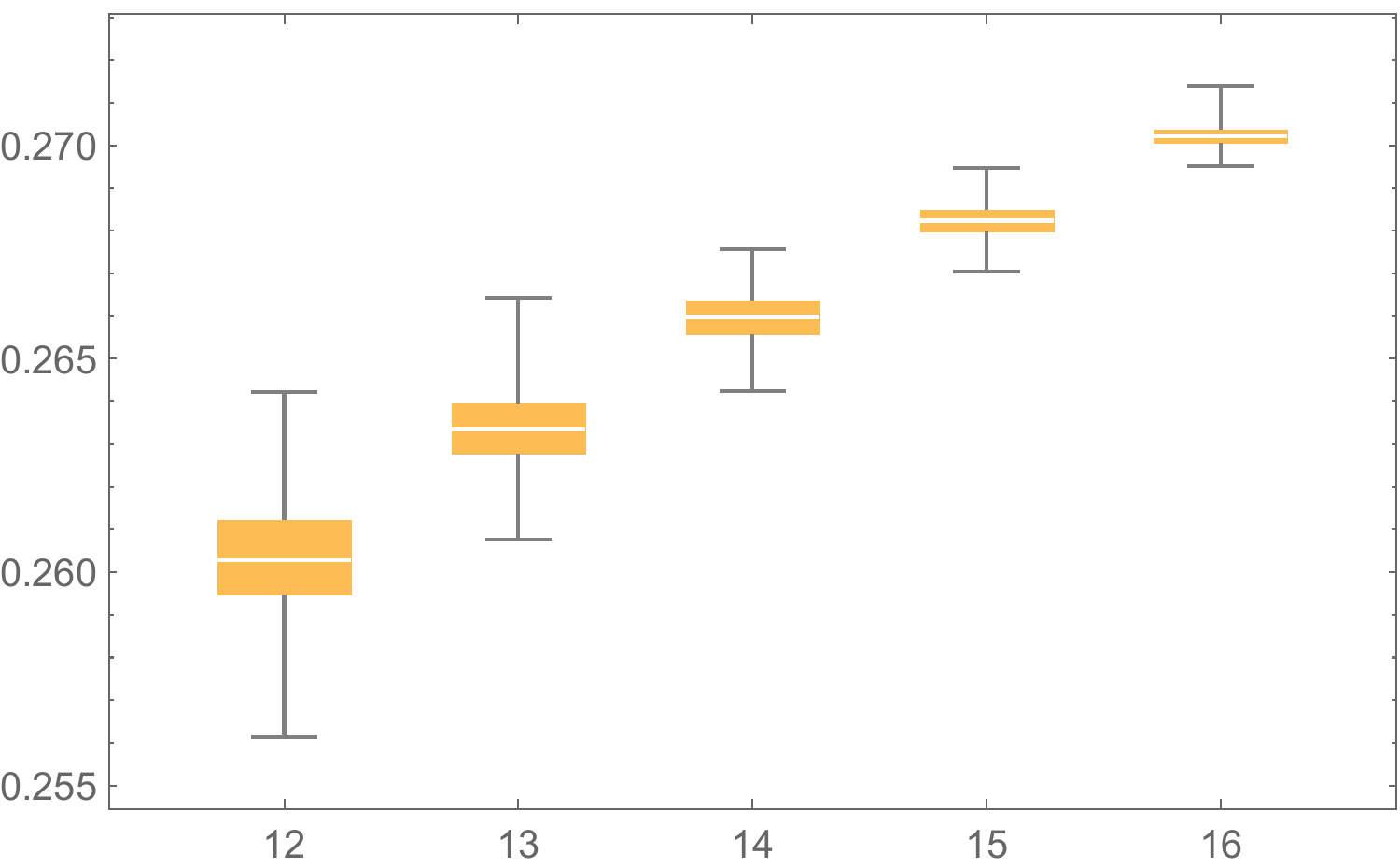}
	\quad		\includegraphics[width=8cm]{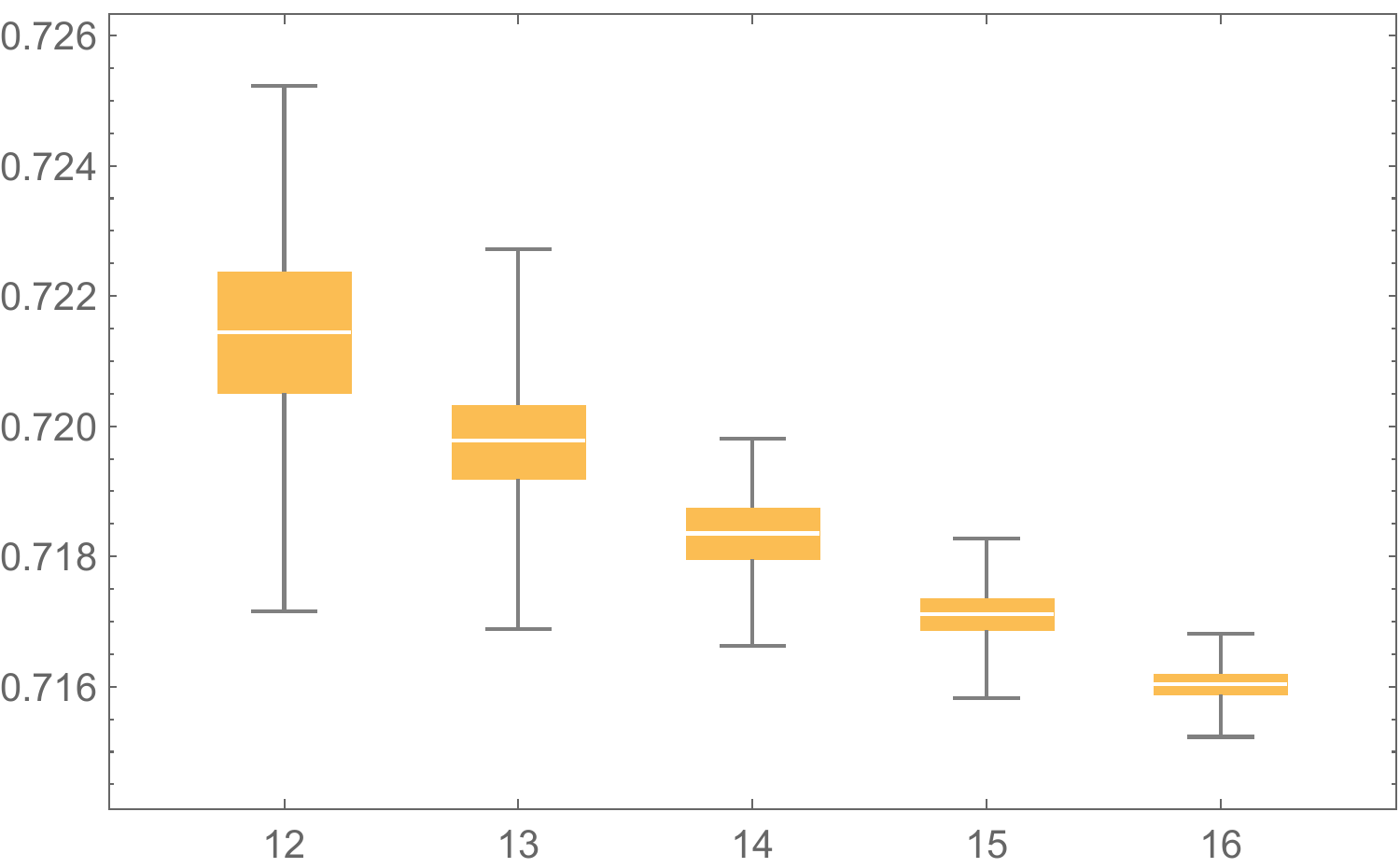}
	\caption{Box plots of the estimates $\wh\cR_n(Y)$ for $n=12,\dots, 16$, based on  1,000 sample paths of fractional Brownian motion with $H = 0.3$ (left), $H = 0.7$ (right), and $Y$ as in \eqref{discrete fBM Y eq}.  }
	\label{fig:Box Ori}
\end{figure}

As one can see from \Cref{fig:Box Ori}, the estimator $\wh\cR_n$ performs relatively well but also exhibits a certain bias. 
This bias can be completely removed by passing to the scale-invariant estimator $\cR^s_n$; see \Cref{fig:Box Ori si}.

\begin{figure}[H]
	\centering
	\includegraphics[width=8cm]{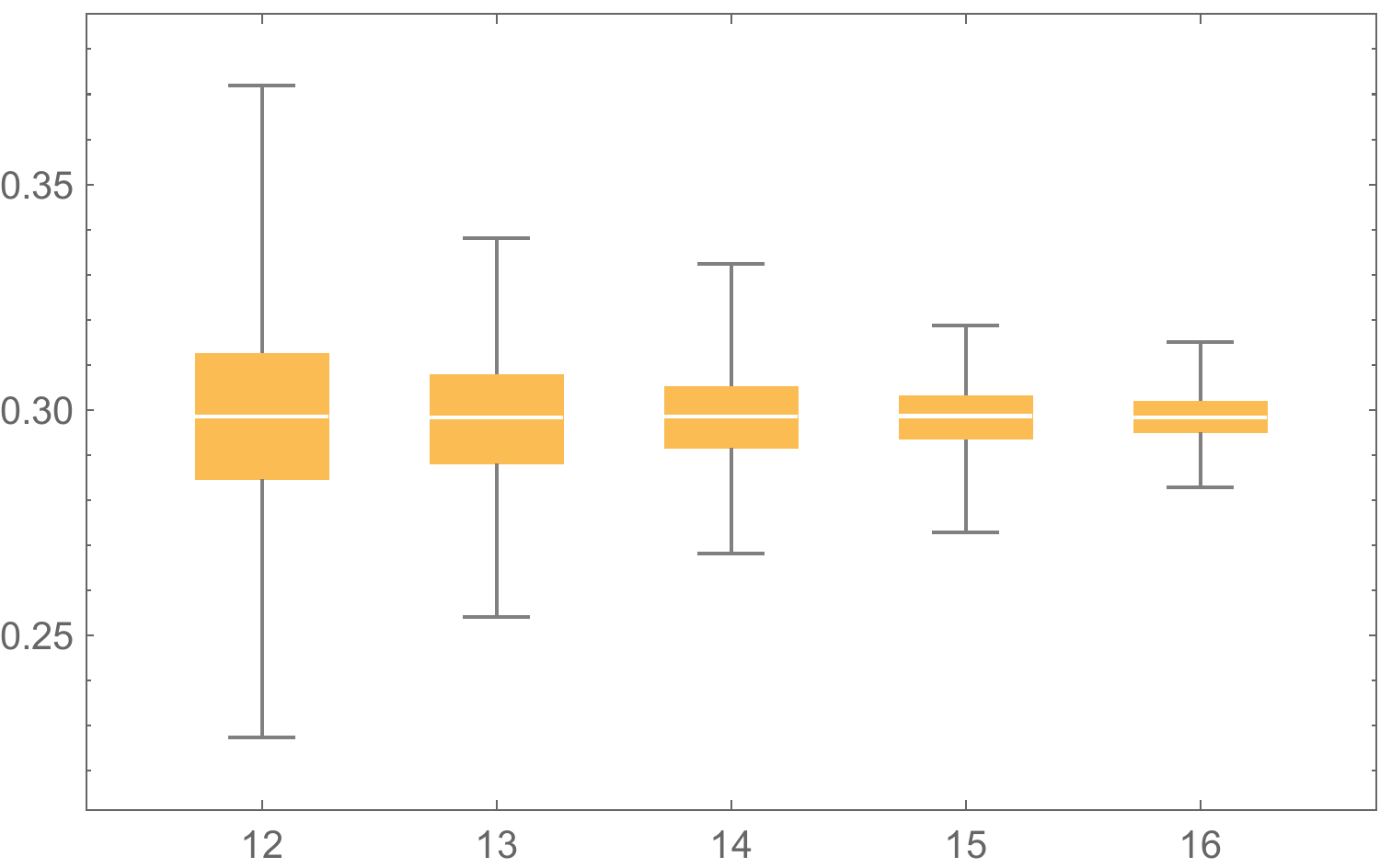}
	\quad		\includegraphics[width=8cm]{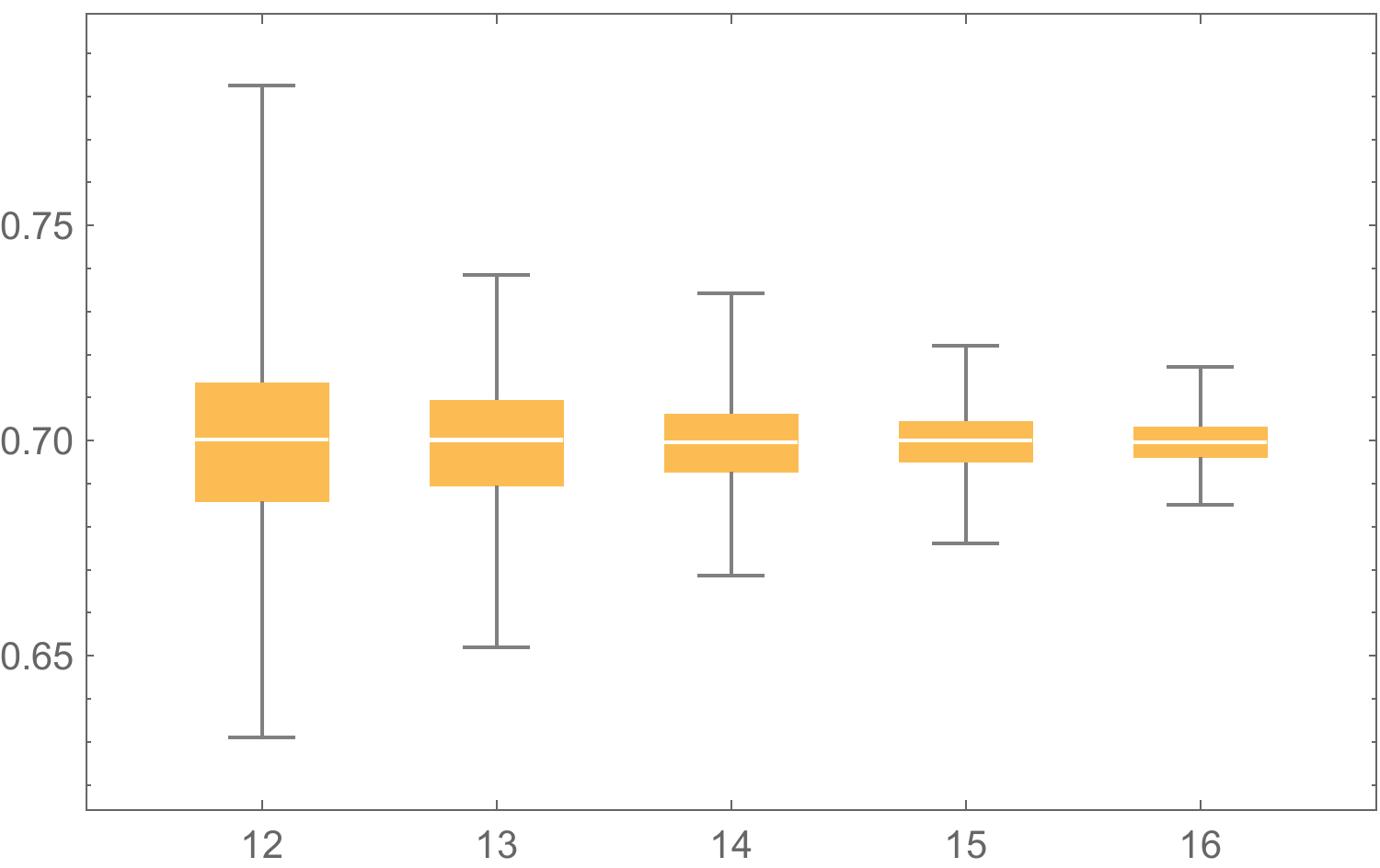}
	\caption{Box plots of the sequential scale estimates $ \cR^s_n(Y)$ for $n=12,\dots, 16$, based on  1,000 sample paths of fractional Brownian motion with $H = 0.3$ (left), $H = 0.7$ (right), and $Y$ as in \eqref{discrete fBM Y eq}. The other parameters are chosen to be $m = 3$ and $\alpha_k = 1$ for $k = 0,1,2,3$.} 
	\label{fig:Box Ori si}
\end{figure}

Now we apply our estimator $\wh\cR_n$ to a model in which log volatility, $\log\sigma_t$, is given by a fractional Ornstein--Uhlenbeck process $X^H$ of the form 
$$X^H_t=x_0+\rho\int_0^t(\mu-X^H_s)\,ds+W^H_t,\qquad t\in[0,1],
$$
and we make discrete observations of the process
$$\int_0^t \sigma^2_s\,ds=\int_0^t e^{2X^H_s}\,ds,\qquad 0\le t\le1.
$$
To this end, we take again $N=n+6$ and simulate the values $X^H_{k2^{-N}}$ ($k=0,\dots, 2^N$) by the R-package {\tt Yumia} \cite{Yumia}. Then we put 
\begin{equation}\label{discrete OU Y eq}
Y^{\sigma}_{k2^{-n-2}}:=2^{-N}\sum_{j = 1}^{2^{N-n-2}k}\exp\Big(2X^H_{j2^{-N}}\Big),\quad k=0,1,\dots, 2^{n+2},\end{equation}
which is an approximation of $\int_0^te^{2X^H_s}\,ds$ by Riemann sums. As one can see from \Cref{fig:Box Ori FOU}, the original estimator $\wh\cR_n$ performs rather poorly in this case, while the sequential scale estimator $\cR^s_n$ performs almost as well as for the simple case  $Y_t=\int_0^tW^H_s\,ds$.  This is due to the fact that the function $g(t)=e^{2t}$ used in \eqref{discrete OU Y eq} distorts substantially the scale of the underlying process, but this distortion can be remedied by using the sequential scale estimator.

\begin{figure}[H]
	\centering
	\includegraphics[width=8cm]{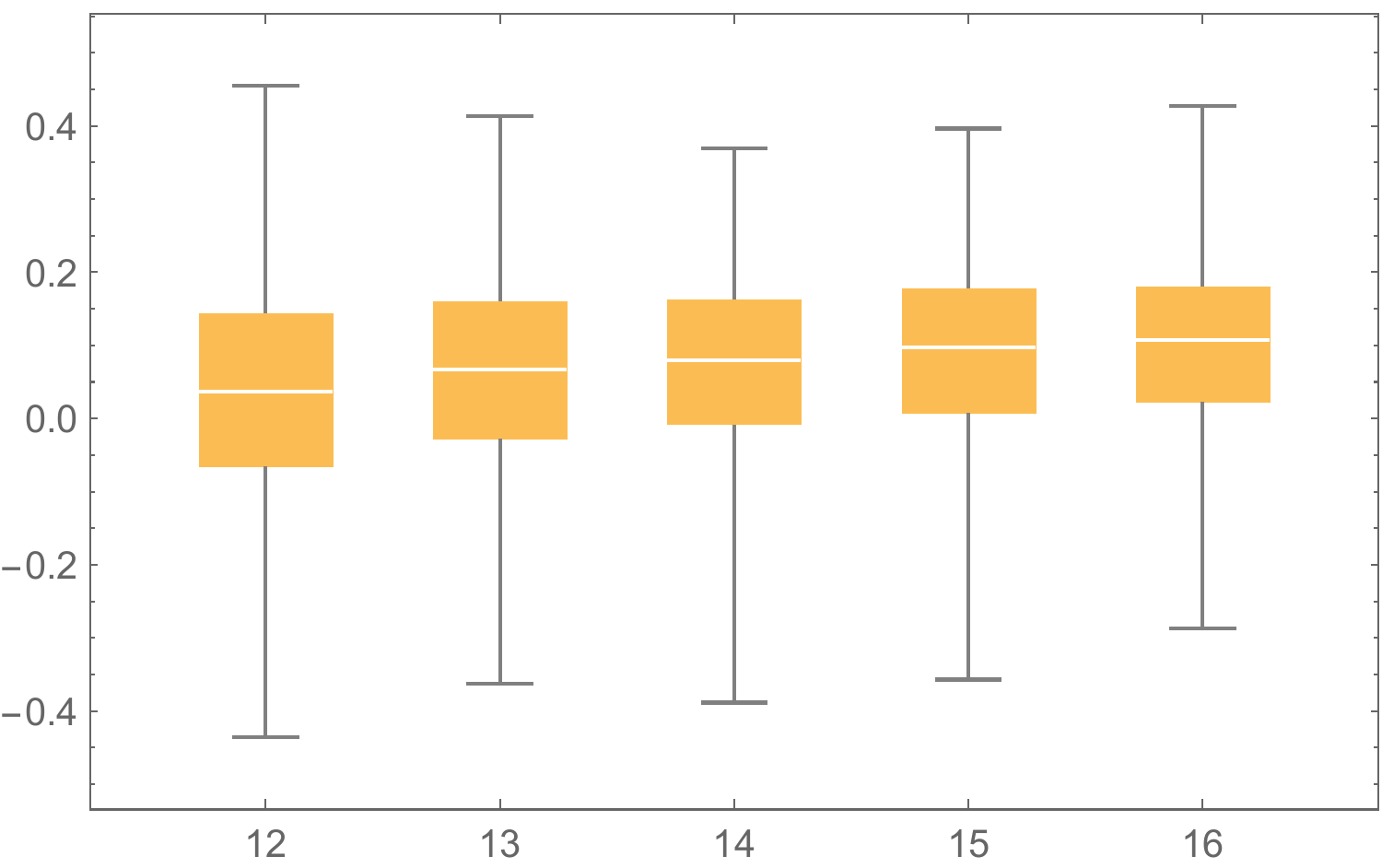}
	\quad		\includegraphics[width=8cm]{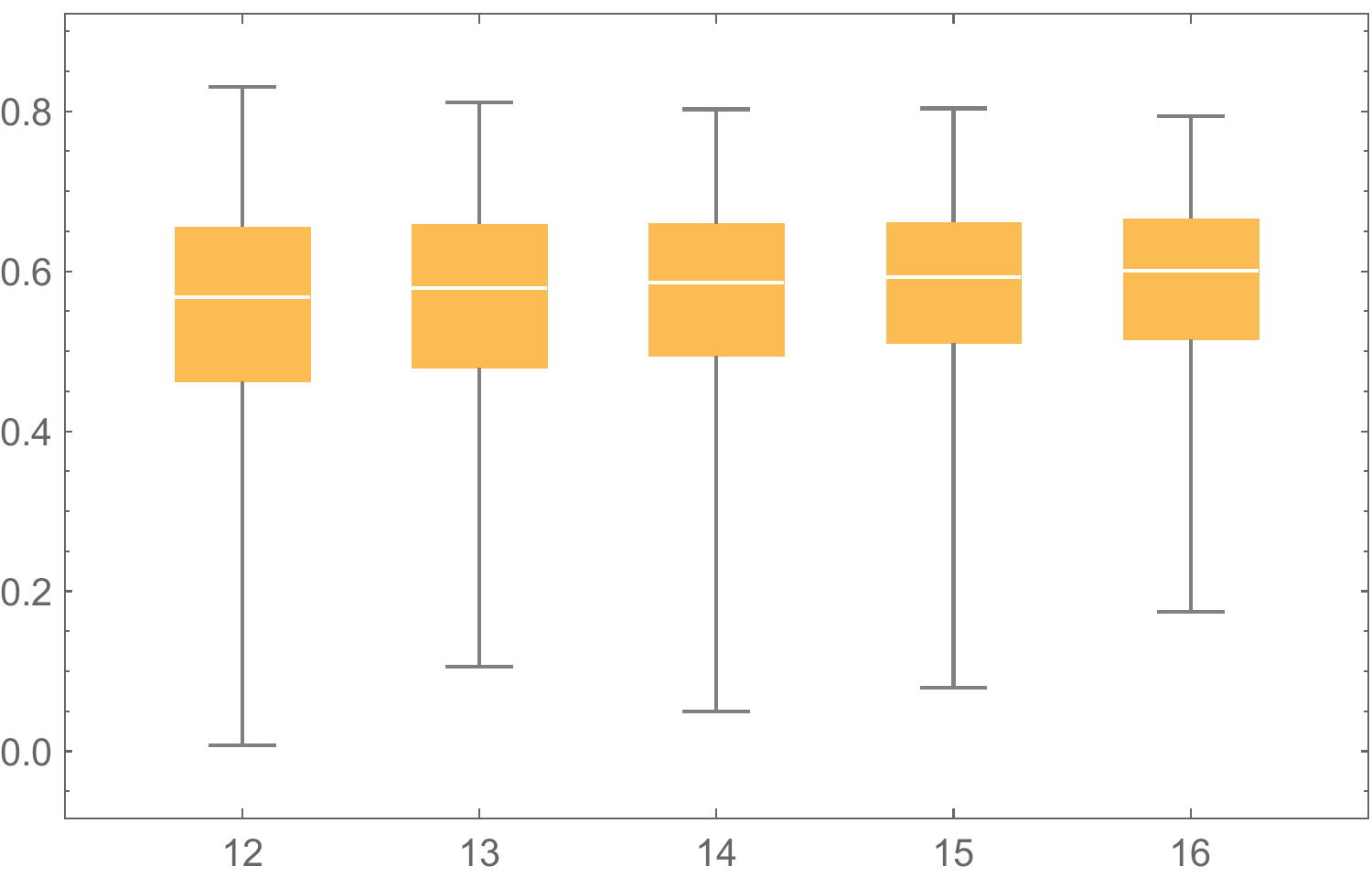}\hfill
	
	\includegraphics[width=8cm]{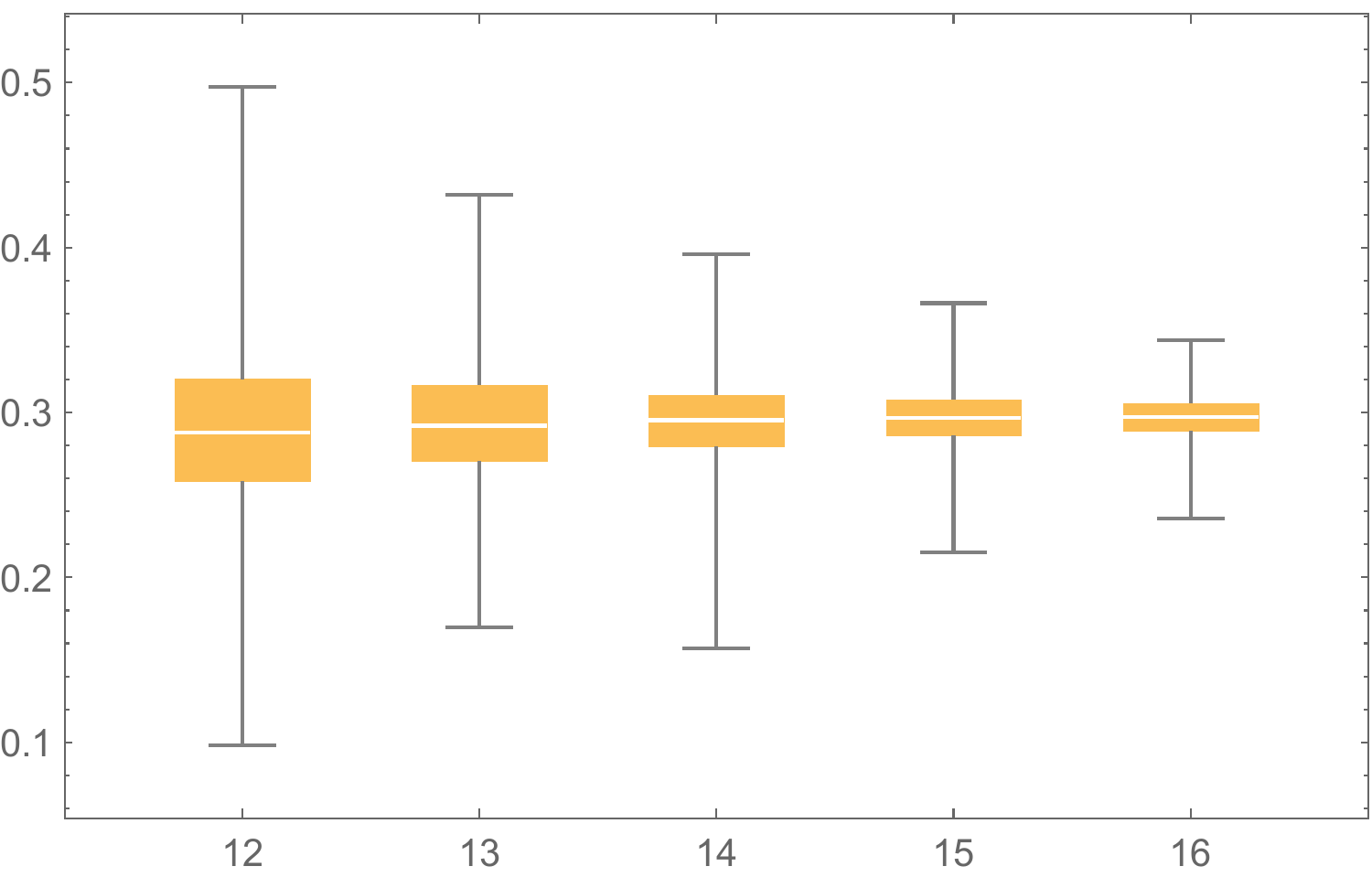}
	\quad		\includegraphics[width=8cm]{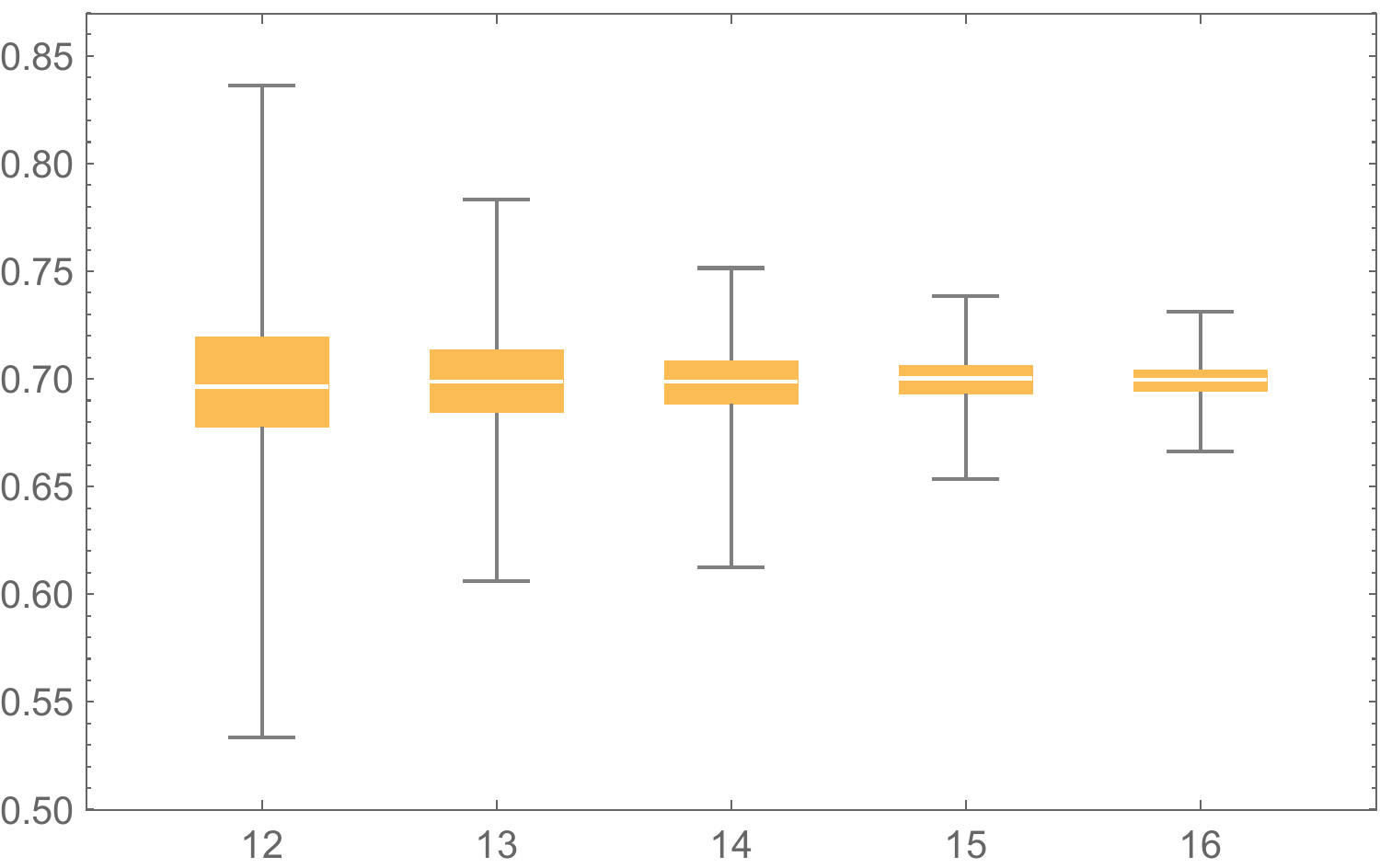}
	\caption{Box plots of the original estimates $\wh\cR_n(Y^{\sigma})$ (top) and the sequential scale estimates $\cR^s_n(Y^{\sigma})$ (bottom) for $n=12,\dots, 16$ based on  1,000 simulations   of the antiderivative of the exponential Ornstein--Uhlenbeck process \eqref{discrete OU Y eq}
   with $H = 0.3$ (left) and $H = 0.7$ (right). The other parameters are chosen as $x_0 = 0$, $\rho = 0.2$,  $\mu = 2$, $m = 3$  and $\alpha_k = 1$ for $k = 0,1,\dots, 3$. }
	\label{fig:Box Ori FOU}
\end{figure}

These results suggest that the sequential scale estimator provides an accurate estimate of the actual roughness exponent. However, the above simulation studies are based on fractional Brownian motion with drift. Next, we will further illustrate the robustness of our estimator $\cR^s_n$ by investigating its performance on processes that are not based on fractional Brownian motion. To this end, consider the Cauchy class introduced by Gneiting and Schlather \cite{Gneiting2004Hurst}, which consists of centered stationary Gaussian process $Z^{\alpha,\beta}$ for $\alpha,\beta \in (0,2)$ with continuous sample paths and covariance function 
\begin{equation*}
	\bE\left[Z^{\alpha,\beta}_sZ^{\alpha,\beta}_t\right] = \big(1 + |t-s|^\alpha\big)^{-\frac{\beta}{\alpha}}, \qquad s,t \ge 0.
\end{equation*}
It was shown by Barndorff-Nielsen et al.~\cite{Barndorff2009BSS} that $Z^{\alpha,\beta}$ admits the roughness exponent $R = \alpha/2$ with probability one. In the following, we take $N = n+5$ and simulate the sample paths of $Z^{\alpha,\beta}$ on the finer grid $\bT_n$, using the R-package  {\tt GeoModels} \cite{GM2024}. Then we approximate the antiderivative $\int_0^t Z^{\alpha,\beta}_s\,ds$ by 
\begin{equation*}
	Y^{z}_{k2^{-n-2}}:=2^{-N}\sum_{j = 1}^{2^{N-n-2}k}Z^{\alpha,\beta}_{j2^{-N}},\quad k=0,1,\dots, 2^{n+2},
\end{equation*}
and apply the estimator $\cR^s_n$ to each simulated sample path. The summary statistics are provided in the following table. 

	\begin{table}[H]
	\ra{1.1}
	\centering
	\begin{tabularx}{0.8\textwidth}{CCCCC}\toprule
		& \multicolumn{4}{c}{Summary statistics of the estimator $\cR^s_n$} \\
		\cmidrule{2-5}
		& \multicolumn{2}{c}{$\alpha = 0.4$} & \multicolumn{2}{c}{$\alpha = 0.8$}\\
		\cmidrule(lr){2-3}
		\cmidrule(lr){4-5} 
		\centering
		& Mean & Standard Deviation & Mean & Standard Deviation \\ \midrule
		$\beta = 0.4$ & 0.15914 & 0.053949 & 0.38983 & 0.053733 \\
		$\beta = 0.8$ & 0.14824 & 0.054001 & 0.38931 & 0.053732 \\
		$\beta = 1.2$ & 0.13735 & 0.054058 & 0.38879 & 0.053730 \\
		$\beta = 1.6$ & 0.12648 & 0.054121 & 0.38825 & 0.053729 \\
		\cmidrule{2-5}
			& \multicolumn{2}{c}{$\alpha = 1.2$} & \multicolumn{2}{c}{$\alpha = 1.6$}\\
		\cmidrule(lr){2-3}
		\cmidrule(lr){4-5} 
		\centering
		& Mean & Standard Deviation & Mean & Standard Deviation \\ \midrule
		$\beta = 0.4$ & 0.59436 & 0.053476 & 0.79680 & 0.053109 \\
		$\beta = 0.8$ & 0.59444 & 0.053475 & 0.79686 & 0.053109 \\
		$\beta = 1.2$ & 0.59451 & 0.053475 & 0.79693 & 0.053109 \\
		$\beta = 1.6$ & 0.59459 & 0.053474 & 0.79699 & 0.053109 \\
		\bottomrule
	\end{tabularx}
	\caption{Summary statistics of the sequential scale estimates $\cR^s_n(Y^z)$ with $n = 14$. The other parameters are chosen as $m = 2$ and $\alpha_k = 1$ for $k = 0,1,2$.}\label{Table Cauchy}
	\end{table}

As we can see from the above table, the sequential scale estimator $\cR^s_n$ provides an almost unbiased estimator for the roughness exponent of $Z^{\alpha,\beta}$ if $\alpha = 0.8, 1.2,1.6$. However, for the case $\alpha = 0.4$, we can see that the estimates $\cR^s_n(Y^z)$ deviate from its actual roughness exponent. This bias might result from the fact that the approximation error $|Y^z - \int_0^{\cdot} Z^{\alpha,\beta}_s\,ds|$ is relatively significant since $Z^{\alpha,\beta}$ generates rough sample paths. Nevertheless, the above numerical studies suggest that our estimator can be applied to a wide range of trajectories and is not limited to fractional Brownian processes with drift.

\section{Pathwise estimation}\label{Section Pathwise}

In this section, we formulate conditions on a single trajectory $x\in C[0,1]$ and its antiderivative $y(t)=\int_0^tx(s)\,ds$ under which the estimates $\wh\cR_n(y)$ converge to the roughness exponent of $x$. In \Cref{Section Proof}, we will then verify that these conditions are satisfied for the typical sample paths of fractional Brownian motion. The results in the present section are, hence, of independent interest in situations in which it is not clear whether a given trajectory $x$ arises from fractional Brownian motion (with drift).
We start by summarizing some key results and concepts from  \cite{HanSchiedMatrix, HanSchiedHurst} and also outline our rationale behind the specific form of the estimator $\wh\cR_n$.

\subsection{The rationale behind the estimator $\wh\cR_n$}\label{rationale section}

Recall that the Faber--Schauder functions are defined as 
	\begin{equation*}
		e_{-1,0}(t):= t, \quad e_{0,0}(t):= (\min\{t,1-t\})^{+}, \quad e_{m,k}(t):= 2^{-m/2}e_{0,0}(2^mt-k)
	\end{equation*}
	for $t \in \bR$, $m \in \bN$ and $k \in \bZ$. It is well known that the restrictions of the Faber--Schauder functions to $[0,1]$ form a Schauder basis for $C[0,1]$. More precisely, our function $x \in C[0,1]$ can be uniquely represented as the uniform limit $x=\lim_nx_n$, where 
	\begin{equation}\label{eq_Faber_expansion}
		x_n= x(0)+\left(x(1)-x(0)\right)e_{-1,0} + \sum_{m = 0}^{n-1}\sum_{k = 0}^{2^m-1}\theta_{m,k}e_{m,k},
	\end{equation}
	and the Faber--Schauder coefficients $\theta_{m,k}$ are given by 
	\begin{equation}\label{Faber--Schauder eq}
		\theta_{m,k} = 2^{m/2}\left(2x\Big(\frac{2k+1}{2^{m+1}}\Big)-x\Big(\frac{k}{2^m}\Big)-x\Big(\frac{k+1}{2^m}\Big)\right).
	\end{equation}
	As a matter of fact, it is easy to see that the function $x_n$ is simply the linear interpolation of $x$ based on the supporting grid $\bT_n=\{k2^{-n}:k=0,\dots, 2^n\}$.

 In \cite{HanSchiedHurst}, we derived simple conditions under which the trajectory $x$ admits a roughness exponent ${R}\in[0,1]$ and also suggested a way in which ${R}$ can be estimated from discrete observations of $x$. Specifically, it follows from Theorem 2.4 and Proposition 4.1 in \cite{HanSchiedHurst} that if the Faber--Schauder coefficients satisfy the so-called reverse Jensen condition (see Definition 2.3 in \cite{HanSchiedHurst}) and the sequence
\begin{equation}\label{cH eq}
\wh R^*_n(x):=1-\frac1{n}\log_2\sqrt{\sum_{k=0}^{2^n-1}\theta_{n,k}^2}
\end{equation}
converges to a finite limit $R$, then $x$ admits the roughness exponent $R$.

Note that it is assumed in \cite{HanSchiedHurst} that the trajectory $x$ can be observed directly. This, however, is not the case in the context of our present paper, where 
 $x$ is the (squared) volatility in a stochastic volatility model. So let us suppose now
 that we can only observe  the values of the antiderivative $y(t)=\int_0^tx(s)\,ds$ takes on the supporting grid $\bT_{n+2}$. If we can interpolate the data points $\{y(t):t\in\bT_{n+2}\}$ by means of a piecewise quadratic function $y_{n+2}\in C^1(\bR)$, then its derivative $y'_{n+2}$ will be a continuous and piecewise linear function with supporting grid $\bT_{n+1}$ and hence  representable in the form 
\begin{equation}\label{estimated FS coefficients eq}
y'_{n+2}=\hat x_0+\wh\theta_{-1,0}e_{-1,0} + \sum_{m = 0}^{n+1}\sum_{k = 0}^{2^m-1}\wh\theta_{m,k}e_{m,k}
\end{equation}
for some initial value $\hat x_0$ and certain coefficients $\wh\theta_{m,k}$. Such a piecewise quadratic $C^1$-interpolation $y_{n+2}$ exists in the form of the standard quadratic spline interpolation. Unfortunately, though, it is well known that  quadratic spline interpolation suffers  some  serious drawbacks: 
\begin{itemize}
\item the initial value $\hat x_0$ is not uniquely determined by the given data $\{y(t):t\in\bT_{n+2}\}$;
\item the values $y_{n+2}(t)$ depend in a highly sensitive manner on the choice of $\hat x_0$;
\item the values $y_{n+2}(s)$ depend in a nonlocal way on the given data  $\{y(t):t\in\bT_{n+2}\}$, i.e., altering one data point $y(t)$ may affect the value   $y_{n+2}(s)$ also if $s$ is located far away from $t$. 
\end{itemize}
In \cite{HanSchiedMatrix}, we investigate the analytical properties of the estimated 
Faber--Schauder coefficients $\wh\theta_{m,k}$ defined in \eqref{estimated FS coefficients eq}. It turns out that,  when looking at quadratic spline interpolation through the lens of these coefficients, a miracle occurs.
To see what happens, let us recall from \cite[Theorem 2.1]{HanSchiedMatrix} the formula for the Faber--Schauder coefficients  of $y'_{n+2}$ for the generations $m=0,\dots, n$ and for generation $n+1$,
\begin{align}\scalemath{0.85}{
\wh\theta_{m,k}}&= \scalemath{0.85}{2^{n+m/2+3}\sum_{j = 1}^{2^{n+1-m}}(-1)^j\left(y\Big(\frac{k}{2^m}+\frac{j}{2^{n+2}}\Big)-y\Big(\frac{k}{2^m}+\frac{j-1}{2^{n+2}}\Big)+y\Big(\frac{k+1}{2^m}-\frac{j-1}{2^{n+2}}\Big)-y\Big(\frac{k+1}{2^m}-\frac{j}{2^{n+2}}\Big)\right)},\label{eq wh vartheta}\\
\scalemath{0.85}{ \wh\theta_{n+1,k} }&=\scalemath{0.85}{-2^{(n+1)/2+2}\hat x_0 -2^{3(n+1)/2+4}\sum_{j = 1}^{2k}(-1)^{j}\left(y\Big(\frac{j}{2^{n+2}}\Big)-y\Big(\frac{j-1}{2^{n+2}}\Big)\right)}\label{eq vartheta n+1}
 \\&\quad\scalemath{0.85}{+3 \cdot 2^{3(n+1)/2+2} \left(y\Big(\frac{2k+1}{2^{n+2}}\Big)-y\Big(\frac{2k}{2^{n+2}}\Big)\right)-2^{3(n+1)/2+2} \left(y\Big(\frac{2k+2}{2^{n+2}}\Big)-y\Big(\frac{2k+1}{2^{n+2}}\Big)\right).}\nonumber
\end{align}
As one can see immediately from those formulas,  the coefficients in generations $m=0,\dots n$ are independent of $\wh x_0$, whereas the coefficients in generation $n+1$ contain the additive term $-2^{(n+1)/2+2}\hat x_0$, which translates any error made in estimating $\hat x_0$ into a $2^{(n+1)/2+2}$-fold error for each final-generation coefficient. 
Moreover, for  $m=0,\dots n$, each $\wh\theta_{m,k}$ depends only on those data points $y(t)$ for which~$t$ belongs to the closure of the support of the corresponding wavelet function $e_{m,k}$. Thus, the entire nonlocality of the function $y_{n+2}$ arises from the coefficients in generation $n+1$, while the coefficients of all lower generations depend locally on the given data. We refer to \cite[Figure 2]{HanSchiedMatrix} for an illustration.

The main results in \cite{HanSchiedMatrix} concern error bounds for the estimated Faber--Schauder coefficients $\wh\theta_{m,k}$. Specifically, we found that the $\ell_2$-norm of the combined errors in generations $m=0,\dots n$ is typically benign, whereas the error in the 
  final generation $m=n+1$ can be larger than a factor of size $\mathcal{O}(2^n)$ times the error of all previous generations combined. While the exact error bounds from \cite{HanSchiedMatrix} will not be needed in our present paper, the proof of \Cref{gamma vs sn lemma} will rely on an algebraic representation of the error terms obtained in \cite[Lemma 3.2]{HanSchiedMatrix} and recalled in \Cref{eq vector w} below.

The above-mentioned facts make it clear that the coefficients in generations $m=0,\dots n$ provide robust estimates for the corresponding true coefficients, while the estimates $\wh\theta_{n+1,k}$ are highly non-robust and should be discarded. It is now obvious that in estimating the roughness exponent of $x$ from the data $\{y(t):t\in\bT_{n+2}\}$, we should replace the true coefficients $\theta_{n,k}$ in our formula \eqref{cH eq} for $\wh R^*_n(x)$ with their estimates $\wh\theta_{n,k}$. It remains to note that $\wh\theta_{n,k}$ is in fact equal to $\vartheta_{n,k}$ defined in \eqref{eq vartheta}, so that we finally arrive at the rationale behind our estimator $\wh\cR_n$. 

The purpose of the following example is twofold. First, it serves as an illustration for the fact that our method is not limited to stochastic processes,  but that it works also in nonparametric situations of extreme 
model uncertainty, where only the trajectory is given and no information whatsoever on its distribution is available.  
Second, it illustrates in a very concise and transparent manner why the final generation of estimated Faber--Schauder functions must be excluded from the estimation process. 

\begin{example}\label{Takagi example} For ${R}\in(0,1]$, let  $x^{R}\in C[0,1]$ be the function with Faber--Schauder coefficients $\theta_{n,k} = 2^{n(1/2-{R})}$. These functions belong to the well-studied class of fractal Takagi--Landsberg functions. It was shown in \cite[Theorem 2.1]{MishuraSchied2}  that $x^{R}$ has the roughness exponent ${R}$. Moreover, for $y^{R}(t) = \int_{0}^{t}x^{R}(s)\,ds$, it was shown in \cite[Example 2.3]{HanSchiedMatrix}  that the robust approximation \eqref{eq vartheta} based on discrete observations of $y^R$ recovers exactly the Faber--Schauder coefficients of $x^{R}$. That is, for $n \in \bN$ and  $0 \le k \le 2^n-1$, we have 
	$
	\vartheta_{n,k} = \theta_{n,k} = 2^{n(\frac{1}{2}-{R})}
	$.
	It follows that
		\begin{equation*}
	\wh\cR_n(y^R)=  1 - \frac{1}{n}\log_2\sqrt{\sum_{k = 0}^{2^{n}-1}2^{(1-2{R})n}} = 1 - \frac{1}{n}\log_22^{(1-{R})n} = {R}.
	\end{equation*}
	Hence, the estimator $\wh\cR_n$ is not only consistent but also exact in the sense that it gives the correct value ${R}$ for every finite $n$.

Now we replace $\vartheta_{n,k}=\wh\theta_{n,k}$ with the final-generation estimates $\wh\theta_{n+1,k}$ as defined in \eqref{eq vartheta n+1}. Note that this requires the choice of an initial value $\hat x_0$. The corresponding estimator is given by
\begin{equation*}
	\wh M_n(y^{R}): = 1 - \frac{1}{n+1}\log_2 \sqrt{\sum_{k = 0}^{2^{n+1}-1} \wh\theta_{n+1,k}^2}.
	\end{equation*}
We get from  \cite[Example 3.2]{HanSchiedMatrix} that for ${R}<1/2$,
	\begin{equation*}
	\wh\theta_{n+1,k} = -2^{(n+1)/2+2}\hat x_0 + \sum_{m = n+1}^{\infty} 2^{m(\frac{1}{2}-{R})} = -2^{(n+1)/2+2}\hat x_0+ \frac{2^{(n+1)(\frac{1}{2}-{R})}}{1 - 2^{\frac{1}{2}-{R}}}.
	\end{equation*}
Hence,
	\begin{equation*}
	\sqrt{\sum_{k = 0}^{2^{n+1}-1} \wh\theta_{n+1,k}^2 }= 2^{n+1}\left|-4\hat x_0+ \frac{2^{-(n+1){R}}}{1 - 2^{\frac{1}{2}-{R}}}\right| .	\end{equation*}
	It follows that
		$$\lim_{n \ua \infty} \wh M_n(y^{R})= \begin{cases}0&\text{if $\hat x_0\neq0$,}\\
	{R}&\text{if $\hat x_0=0$.}
	\end{cases}
$$  
This shows that the estimator $\wh M_n$ is extremely sensitive with respect to the estimate $\hat x_0$ of the exact initial value $x(0)$, which in typical applications will be unknown. Even in the case that $x(0)$ is  known, the correct value ${R}$ is only obtained asymptotically, whereas $\wh\cR_n(y^{R})={R}$ for all finite $n$. These observations illustrate once again why we deliberately discard the final generation $\wh\theta_{n+1,k}$ of estimated Faber--Schauder coefficients.

\end{example}

\subsection{Pathwise consistency of $\wh\cR_n(y)$}\label{pathwise section}

Let us fix $x \in C[0,1]$ and denote by  $\theta_{m,k}$  its  Faber--Schauder coefficients \eqref{Faber--Schauder eq}. As before, we denote by $y(t)=\int_0^tx(s)\,ds$ the antiderivative of $x$ and by $\vartheta_{n,k}$ the coefficients defined in \eqref{eq vartheta}. To be consistent with   \cite{HanSchiedMatrix}, we introduce the following vector notation,
	\begin{equation}\label{eq vector}
\bar{\bm \theta}_n := \big(\theta_{n,0}, \theta_{n,1} \cdots, \theta_{n,2^{n}}\big)^\top \in \bR^{2^{n}}\quad \text{and} \quad \bar{\bm \vartheta}_n = \big(\vartheta_{n,0}, \vartheta_{n,1} \cdots, \vartheta_{n,2^{n}-1}\big)^\top \in \bR^{2^{n}},
\end{equation}
Then the estimators $\wh R^*_n$ and $\wh\cR_n$ defined in \eqref{cH eq} and \eqref{wh Rn eq} can be written as
\begin{equation}\label{Rn* and cRn eq}
\wh R_n^*(x)=1-\frac1{n}\log_2\|\bar{\bm \theta}_n\|_{\ell_2}\quad\text{and}\quad\wh\cR_n(y)=1-\frac1{n}\log_2\|\bar{\bm \vartheta}_n\|_{\ell_2}.
\end{equation}
Following  \cite{HanSchiedMatrix}, we introduce the column vector $\bm z_n:= (z^{(n)}_i)_{1 \le i \le 2^n}$ with components
\begin{equation}\label{eq def zn}
z^{(n)}_i = 2^{3n/2}\sum_{m = n}^{\infty}2^{-3m/2}\sum_{k = 0}^{2^{m-n}-1}\theta_{m,k+2^{m-n}(i-1)} = \sum_{m = 0}^{\infty}2^{-3m/2}\sum_{k = 0}^{2^{m}-1}\theta_{m+n,k+2^{m}(i-1)} \quad \text{for} \quad 1 \le i \le 2^n.
\end{equation} 
As observed in \cite{HanSchiedMatrix}, the infinite series in \eqref{eq def zn} converges absolutely if $x$ satisfies a H\"older condition, and for simplicity, we are henceforth going to make this assumption.
For $1 \le i,j \le 2^n$, we let furthermore
\begin{equation}\label{etaij def}
\bm \eta_{i,j} = \begin{cases}
\bm r &\quad \text{for} \quad 1 \le i = j \le 2^n,\\
\bm 0_{1 \times 4}  &\quad \text{for} \quad 1 \le i \neq j \le 2^n,
\end{cases}
\end{equation}
where $\bm r := \frac14(-1,+1,+1,-1)$,   and $\bm 0_{m \times n}$ denotes the $m \times n$-dimensional zero matrix. Moreover, we denote 
\begin{equation}\label{eq Cm}
\mathscr{C}_n := \left[
\begin{array}{c c c c c}
\bm \eta_{1,1} & \bm \eta_{1,2} &\cdots  & \bm \eta_{1,2^n-1}  & \bm \eta_{1,2^n}\\
\bm \eta_{2,1} & \bm \eta_{2,2} &\cdots  & \bm \eta_{2,2^n-1}  & \bm \eta_{2,2^n}\\
\vdots & \vdots & \ddots& \vdots & \vdots \\
\bm \eta_{2^n,1} & \bm \eta_{2^n,2} &\cdots  & \bm \eta_{2^n,2^n-1}  & \bm \eta_{2^n,2^n}\\
\end{array}
\right] \in \bR^{2^n \times 2^{n+2}}.
\end{equation}
It was shown in \cite[Lemma 3.1 and  3.2]{HanSchiedMatrix} that the error between the true and estimated Faber--Schauder coefficients can be represented as follows,
\begin{equation}\label{eq vector w}
\bar{\bm\vartheta}_n-\bar{\bm\theta}_n=\bm w_n, \quad\text{where}\quad \bm w_n = \mathscr{C}_n \bm z_{n+2}\in\bR^{2^n}.
\end{equation}
Consider the following condition: 
\begin{equation}\label{(W)}
\text{There exists $\kappa\in\bR\setminus\{1\}$ such that}\quad \frac{\norm{\bm w_n}_{\ell_2}}{\|\bar{\bm \theta}_n\|_{\ell_2}}\lra \kappa\quad\text{as $n\ua\infty$.}
\end{equation}
We will see in \Cref{lemma fbm decaying} that condition \eqref{(W)} is $\bP$-a.s.~satisfied for fractional Brownian motion. 

\begin{lemma}\label{gamma vs sn lemma}
 Under condition \eqref{(W)}, there exist $n_0\in\bN$ and constants $0<\kappa_-\le\kappa_+<\infty$ such that 
\begin{equation}\label{gamma vs sn eq}
 \kappa_-\|\bar{\bm \theta}_n\|_{\ell_2}\le \norm{\bar{\bm \vartheta}_n}_{\ell_2} \le  \kappa_+\|\bar{\bm \theta}_n\|_{\ell_2}\quad\text{for all $n\ge n_0$.}
\end{equation}
\end{lemma}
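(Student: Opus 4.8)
The plan is to read the two-sided bound off directly from the additive error decomposition \eqref{eq vector w} together with the convergence asserted in condition \eqref{(W)}; no further structural input about $Q_n$ or $\bm z_{n+2}$ is needed. Since \eqref{eq vector w} gives $\bar{\bm\vartheta}_n = \bar{\bm\theta}_n + \bm w_n$, the triangle inequality and its reverse form yield, for every $n$ with $\|\bar{\bm\theta}_n\|_{\ell_2} > 0$,
\[
\Big|\,1 - \frac{\|\bm w_n\|_{\ell_2}}{\|\bar{\bm\theta}_n\|_{\ell_2}}\,\Big| \;\le\; \frac{\|\bar{\bm\vartheta}_n\|_{\ell_2}}{\|\bar{\bm\theta}_n\|_{\ell_2}} \;\le\; 1 + \frac{\|\bm w_n\|_{\ell_2}}{\|\bar{\bm\theta}_n\|_{\ell_2}}.
\]
Note that condition \eqref{(W)} already presupposes $\|\bar{\bm\theta}_n\|_{\ell_2} > 0$ for all large $n$, since otherwise the ratio appearing in \eqref{(W)} would be undefined; hence the displayed inequalities are available from some index on.

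Next I would pass to the limit $n\ua\infty$ in the outer members. By \eqref{(W)} the ratio $\|\bm w_n\|_{\ell_2}/\|\bar{\bm\theta}_n\|_{\ell_2}$ converges to $\kappa$, and this ratio is nonnegative, so $\kappa\ge0$. Consequently the upper bound tends to $1+\kappa < \infty$ while the lower bound tends to $|1-\kappa|$. This is exactly the point at which the hypothesis $\kappa\neq1$ is decisive: it guarantees $|1-\kappa|>0$, so that the limiting lower bound is strictly positive rather than merely nonnegative.

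Finally I would turn these two limits into the claimed uniform estimate. Setting, for instance, $\kappa_- := \tfrac12|1-\kappa| > 0$ and $\kappa_+ := 2+\kappa$ (any constant strictly exceeding $1+\kappa$ works), the definition of convergence supplies an index $n_0$ beyond which $\kappa_- \le \|\bar{\bm\vartheta}_n\|_{\ell_2}/\|\bar{\bm\theta}_n\|_{\ell_2} \le \kappa_+$, which is precisely \eqref{gamma vs sn eq} after multiplying through by $\|\bar{\bm\theta}_n\|_{\ell_2}$.

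The deduction itself is elementary; the only step requiring genuine attention — and the reason the hypothesis is phrased with $\kappa\in\bR\setminus\{1\}$ rather than merely $\kappa\in\bR$ — is securing the strict positivity of $\kappa_-$, which would collapse to $0$ in the excluded borderline case $\kappa=1$. I therefore do not expect any serious obstacle here: the substantive analytic work of the paper lies in verifying condition \eqref{(W)} for (drifted) fractional Brownian motion, carried out later in \Cref{lemma fbm decaying}, and not in the present argument.
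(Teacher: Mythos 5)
Your proof is correct and is essentially the paper's own argument: both rest on the decomposition \eqref{eq vector w} combined with the triangle and reverse triangle inequalities, with the hypothesis $\kappa\neq 1$ serving exactly to keep the limiting lower constant $|1-\kappa|$ strictly positive. The only cosmetic differences are that you pass to the limit of the ratio $\norm{\bm w_n}_{\ell_2}/\|\bar{\bm \theta}_n\|_{\ell_2}$ and then choose explicit constants, while the paper works with a fixed $\varepsilon<|\kappa-1|/2$ throughout (and, incidentally, your sign convention $\bar{\bm\vartheta}_n=\bar{\bm\theta}_n+\bm w_n$ matches \eqref{eq vector w} exactly, whereas the paper's proof writes $\bar{\bm\theta}_n-\bm w_n$, a discrepancy immaterial to the norm bounds).
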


\begin{proof} Let $\kappa$ be as in  \eqref{(W)}. Then, for any $\varepsilon < |\kappa - 1|/2$, there exists $n_\varepsilon \in \bN$ such that for $n \ge n_\varepsilon$, we have  $\norm{\bar{\bm\theta}_{n}}_{\ell_2}(\kappa - \varepsilon) < \norm{\bm w_n}_{\ell_2} < \norm{\bar{\bm\theta}_{n}}_{\ell_2}(\kappa + \varepsilon)$. Using the representation \eqref{eq vector w} and applying the triangle inequality gives
	\begin{equation*}
	 \norm{\bar{\bm \vartheta}_n}_{\ell_2} = \norm{\bar{\bm \theta}_n - \bm w_n}_{\ell_2}  \le \norm{\bar{\bm\theta}_{n}}_{\ell_2} + \norm{\bm w_{n}}_{\ell_2} \le (\kappa + \varepsilon + 1) \norm{\bar{\bm\theta}_{n}}_{\ell_2}.
	\end{equation*}
	On the other hand, we have 
	\begin{equation*}
	 \norm{\bar{\bm \vartheta}_n}_{\ell_2} = \norm{\bar{\bm \theta}_n - \bm w_n}_{\ell_2}  \ge \Big|\norm{\bar{\bm\theta}_{n}}_{\ell_2} - \norm{\bm w_{n}}_{\ell_2}\Big| \ge \big(|1 - \kappa - \varepsilon| \wedge |1 - \kappa + \varepsilon|\big) \norm{\bar{\bm\theta}_{n}}_{\ell_2}.
	\end{equation*}
	This completes the proof.
\end{proof}

By taking logarithms in \eqref{gamma vs sn eq}, \Cref{gamma vs sn lemma} immediately yields the following result.

\begin{proposition}\label{prop model free} Under condition \eqref{(W)}, the  limit $ \lim_{n}\wh\cR_n(y)$ exists if and only if  $ \lim_{n}\wh R^*_n(x)$ exists. Moreover, in this case, $\lim_{n}\wh\cR_n(y) = \lim_{n}\wh R^*_n(x)$.
	\end{proposition}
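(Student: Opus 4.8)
The plan is to deduce Proposition~\ref{prop model free} directly from \Cref{gamma vs sn lemma} by applying a monotone transformation to the two-sided estimate \eqref{gamma vs sn eq}. Recall that the two estimators are expressed in \eqref{Rn* and cRn eq} as $\wh R_n^*(x)=1-\frac1n\log_2\|\bar{\bm\theta}_n\|_{\ell_2}$ and $\wh\cR_n(y)=1-\frac1n\log_2\|\bar{\bm\vartheta}_n\|_{\ell_2}$, so the quantity that distinguishes them is precisely the ratio $\|\bar{\bm\vartheta}_n\|_{\ell_2}/\|\bar{\bm\theta}_n\|_{\ell_2}$, which \Cref{gamma vs sn lemma} pins between the positive constants $\kappa_-$ and $\kappa_+$ for all $n\ge n_0$.

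First I would take $\log_2$ of the chain of inequalities \eqref{gamma vs sn eq}. Since $\log_2$ is increasing and $\kappa_-,\kappa_+$ are strictly positive, this yields
\begin{equation*}
\log_2\kappa_- + \log_2\|\bar{\bm\theta}_n\|_{\ell_2}\le \log_2\|\bar{\bm\vartheta}_n\|_{\ell_2}\le \log_2\kappa_+ + \log_2\|\bar{\bm\theta}_n\|_{\ell_2}
\end{equation*}
for all $n\ge n_0$. Dividing by $-n$ (which reverses the inequalities) and adding $1$, I rewrite everything in terms of the estimators to obtain
\begin{equation*}
\wh R^*_n(x)-\frac{\log_2\kappa_+}{n}\le \wh\cR_n(y)\le \wh R^*_n(x)-\frac{\log_2\kappa_-}{n}.
\end{equation*}
The key observation is now that both $\log_2\kappa_\pm$ are fixed finite constants, so the two flanking terms $\log_2\kappa_\pm/n$ tend to $0$ as $n\ua\infty$. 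Consequently $\wh\cR_n(y)-\wh R^*_n(x)\to0$.

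From the relation $\wh\cR_n(y)-\wh R^*_n(x)\to0$ the proposition follows immediately: if one of the two sequences converges to a finite limit, then so does the other, and the two limits coincide. Both implications of the stated equivalence, together with the equality of limits, are thus settled simultaneously. I do not anticipate a genuine obstacle here, since the entire analytic content has already been absorbed into \Cref{gamma vs sn lemma}; the only point requiring care is the positivity of $\kappa_-$, which guarantees that $\log_2\kappa_-$ is finite and hence that the lower correction term vanishes rather than blowing up. This is exactly what the lemma supplies, so the remaining argument is the short monotone-transformation computation sketched above.
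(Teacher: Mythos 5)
Your proposal is correct and matches the paper's argument exactly: the paper states that \Cref{prop model free} follows ``by taking logarithms'' in \eqref{gamma vs sn eq}, and your computation simply makes that one-line deduction explicit, yielding $\wh R^*_n(x)-\frac{\log_2\kappa_+}{n}\le \wh\cR_n(y)\le \wh R^*_n(x)-\frac{\log_2\kappa_-}{n}$ and hence $\wh\cR_n(y)-\wh R^*_n(x)\to0$. Nothing further is needed.
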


\begin{example}\label{Takagi example 2} In the situation of \Cref{Takagi example}, we have seen that $
	\vartheta_{n,k} = \theta_{n,k} = 2^{n(\frac{1}{2}-{R})}
	$. Applying the representation \eqref{eq vector w}  yields that $\bm w_n = \bm 0_{2^{n} \times 1}$. This implies
	$
	\lim_{n}\norm{\bm w_n}_{\ell_2}/\|\bar{\bm \theta}_n\|_{\ell_2} = 0
	$. That is, $x^{R}$ satisfies condition \eqref{(W)}.
	Hence,  \Cref{prop model free} applies, which gives an additional proof of the previously observed fact that $\lim_n\wh\cR_n(y)= {R}$.\end{example}

\begin{example}\label{Random Takagi example}
	Now we consider a variant of  \Cref{Takagi example} in which each Faber--Schauder coefficient is assigned a random sign. That is, we let $\theta_{n,k} = \zeta_{n,k}2^{n(\frac{1}{2}-R)}$, where $R\in(0,1)$ and $(\zeta_{m,k})$ is an i.i.d.~sequence of $\{-1,+1\}$-valued random variables with symmetric Bernoulli distribution. 	Then 	\begin{equation*}
		X^R_t = \sum_{m = 0}^{\infty}\theta_{m,k}e_{m,k}(t)
			\end{equation*}
	is a   stochastic process whose sample paths belong to the class of signed Takagi--Landsberg functions; see, e.g.,  \cite{Allaartflexible, HanSchiedZhang2,MishuraSchied2}.  It was shown in \cite[Theorem 2.1]{MishuraSchied2} that all sample paths of $X^R$ admit the roughness exponent $R$. Let $Y^R_t := \int_{0}^{t}X^R_s\,ds$. Then
\eqref{eq def zn} and \eqref{eq vector w} yield that
	\begin{equation*}
		w_{n,k} = 2^{(n+2)(\frac{1}{2} - R)-2}\sum_{m = 0}^{\infty}2^{-m(1+R)}\sum_{i = 0}^{2^{m+2}-1}\wt\zeta_{m+n+2,i + 2^{m+2}(k-1)}, \qquad 1 \le k \le 2^n,
	\end{equation*}
where each $\wt\zeta_{n,k}$ may differ from $\zeta_{n,k}$ by a deterministic sign. Clearly, $w_{n,k}$ is well-defined. It is also clear that $w_{n,k}$ is independent of $w_{n,j}$. We denote 
	$$\kappa_R := 2^{-4R}\frac{2^{1 + 2R}}{2^{1+2R}-1}\quad\text{and}\quad R^*:=\frac{\log \left(1+\sqrt{17}\right)}{2\log 2}-1.$$
	One checks that $\kappa_{R^*}=1$ and $\kappa_R \neq 1$ for any $R \neq R^\ast $. Then,
	\begin{equation*}
		\begin{split}
			\bE\left[w^2_{n,k}\right] &= 2^{(n+2)(1 - 2R)-4}\sum_{m = 0}^\infty 2^{-2m(1+R)}\sum_{i = 0}^{2^{m+2}-1}\bE\left[\wt\zeta^2_{m+n+2,i + 2^{m+2}(k-1)}\right]\\&= 2^{(n+2)(1 - 2R)-2}\sum_{m = 0}^{\infty}2^{-m(1+2R)} = 2^{(1-2R)n}\kappa_R.
		\end{split}
	\end{equation*}
	Furthermore, an analogous argument shows that for $n \in \bN$, we have $\var[w^2_{n,k}] = 2^{(2-4R)n}\sigma^2_R$ for some $\sigma^2_R > 0$ that does not depend on $n$. Hence, Chebyshev's inequality gives 
	\begin{equation*}
		\bP\left[\left|\frac{\norm{\bm w_n}^2_{\ell_2}}{\|\bar{\bm \theta}_n\|_{\ell_2}^2} - 	\bE\left[\frac{\norm{\bm w_n}^2_{\ell_2}}{\|\bar{\bm \theta}_n\|_{\ell_2}^2}\right] \right| \le \varepsilon\right] = 	\bP\left[\left|2^{(2R-2)n}\sum_{k = 0}^{2^n-1}w_{n,k}^2 - \kappa_R\right| \le \varepsilon\right]  \le 2^{-n}\varepsilon^{-2}.
	\end{equation*}
	Thus, it follows from the Borel--Cantelli lemma that ${\|\bar{\bm w}_n\|_{\ell_2}}/{\|\bar{\bm \theta}_n\|_{\ell_2}} \longrightarrow \sqrt{\kappa_R}$. Hence, condition \eqref{(W)} yields that $\lim_n \wh \cR_n(Y^R) \longrightarrow R$ with probability one for any $R \neq R^\ast$.
\end{example}

In the following, we investigate the roughness exponent of the sum and composition of functions. Furthermore, we also formulate conditions under which the estimator $\wh \cR_n$ is consistent for the sum and composition of functions, respectively. These conditions are crucial in the proof of \Cref{Cor Rough Bergomi}. To wit, let $z$ be a sample trajectory of the (re-scaled) Riemann--Liouville process $Z^{H, \nu}$, then the volatility process $\sigma_t$ can be replicated pathwise by $\exp(\gamma z_t + \frac{\gamma^2}{4H}t^{2H})$ for $t \in [0,1]$. Given that $\wh \cR_n$ is a consistent estimator for $z$, then it remains to verify the following conditions to show that $\wh \cR_n$ is also a consistent estimator for $\sigma$.

We first establish conditions on $x$ and $g$, under which $u:=g\circ x$ admits the same roughness exponent as $x$. To this end, we fix the following notation for Propositions \ref{thm roughness composition} and \ref{Prop tilde H z}.
\begin{equation}\label{u and v eq}
u(t)=g(x(t))\qquad\text{and}\qquad v(t)=\int_0^tu(s)\,ds=\int_0^tg(x(s))\,ds.
\end{equation}
	
\begin{proposition}\label{thm roughness composition}
	If $x$ admits the roughness exponent ${R}$, $g$ belongs to $C^1(\bR)$, and $g'$ is nonzero on the range $x([0,1])$ of $x$, then $u = g\circ x$ also admits the roughness exponent ${R}$.
\end{proposition}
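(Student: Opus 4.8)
The plan is to reduce everything to a two-sided comparison between the dyadic increments of $u=g\circ x$ and those of $x$, and then read off the roughness exponent directly from the definition in terms of the $p$-th variations $\<x\>^{(p)}_n$ and $\<u\>^{(p)}_n$.

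First I would observe that, since $x$ is continuous on the compact interval $[0,1]$, its range $K:=x([0,1])$ is a compact subinterval of $\bR$. Because $g\in C^1(\bR)$, the derivative $g'$ is continuous, so $|g'|$ attains its minimum and maximum on $K$; since $g'$ is assumed nonzero on $K$, that minimum is strictly positive. Hence there exist constants $0<c\le C<\infty$ with $c\le|g'(\xi)|\le C$ for all $\xi\in K$.

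Next, for any $s,t\in[0,1]$ the mean value theorem gives $u(t)-u(s)=g'(\xi)\,(x(t)-x(s))$ for some $\xi$ lying between $x(s)$ and $x(t)$; since $K$ is an interval this $\xi$ belongs to $K$, whence $c\,|x(t)-x(s)|\le|u(t)-u(s)|\le C\,|x(t)-x(s)|$. Applying this to each dyadic increment, raising to the power $p$, and summing over $k$ yields, for every $p\ge1$ and every $n$,
$$c^p\,\<x\>^{(p)}_n\le\<u\>^{(p)}_n\le C^p\,\<x\>^{(p)}_n.$$

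Finally I would invoke the definition of the roughness exponent. Fixing $p>1/R$ gives $\<x\>^{(p)}_n\to0$, and the upper bound forces $\<u\>^{(p)}_n\to0$ as well; fixing $p<1/R$ gives $\<x\>^{(p)}_n\to\infty$, and the lower bound forces $\<u\>^{(p)}_n\to\infty$. Since $c^p$ and $C^p$ are positive and finite for each fixed $p$, these limits are unaffected, so $u$ admits the same roughness exponent $R$. The only point requiring care—and the nearest thing to an obstacle—is the uniform two-sided control of $g'$: it is essential that $g'$ be bounded away from zero on the range of $x$, which is exactly where compactness of $x([0,1])$ combines with the nonvanishing hypothesis on $g'$. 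Without it the lower comparison constant could degenerate to $0$ and the ``$\to\infty$'' half of the argument would break down.
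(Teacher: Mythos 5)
Your proposal is correct and follows essentially the same route as the paper: a mean-value-theorem comparison of dyadic increments of $u=g\circ x$ with those of $x$, two-sided bounds $0<c_-\le|g'|\le c_+$ on the compact range of $x$, the sandwich $c_-^p\<x\>^{(p)}_n\le\<u\>^{(p)}_n\le c_+^p\<x\>^{(p)}_n$, and passing to the limit for $p>1/R$ and $p<1/R$. The only cosmetic difference is that the paper additionally invokes the intermediate value theorem to realize the mean-value point as $g'(x(\tau_{n,k}))$ for some $\tau_{n,k}$ in the dyadic subinterval, whereas you place the point $\xi$ directly in the range interval $x([0,1])$ --- both serve the identical purpose of applying the uniform bounds on $g'$.
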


Now we turn to the following question: Under which conditions do we have $\wh\cR_n(v)\to {R}$, where $v$ is as in \eqref{u and v eq}? The conditions we are going to introduce to answer this question are relatively strong. Nevertheless, they hold for the sample paths of fractional Brownian motion. In the following proposition, we need to work with various functions $x$, $y$, $u$ and $v$. For this reason, we will temporarily use a superscript to indicate from which function the Faber--Schauder coefficients will be computed from. That is, for any function $f$, we write
\begin{equation}\label{coefficients depending on f eq}
	\begin{split}
		\theta^f_{n,k}& = 2^{n/2}\left(2f\Big(\frac{2k+1}{2^{n+1}}\Big)-f\Big(\frac{k}{2^{n}}\Big) - f\Big(\frac{k+1}{2^{n}}\Big)\right), \\
		\vartheta^f_{n,k}&= 2^{3n/2+3}\left(f\Big(\frac{4k}{2^{n+2}}\Big)-2f\Big(\frac{4k+1}{2^{n+2}}\Big)+2f\Big(\frac{4k+3}{2^{n+2}}\Big)-f\Big(\frac{4k+4}{2^{n+2}}\Big)\right).
	\end{split}
\end{equation}
 With this notation, the coefficients $\vartheta_{n,k}$ in \eqref{eq vartheta} should be re-written as $\vartheta_{n,k}^y$.

\begin{proposition}\label{Prop tilde H z} Suppose there exists ${R}\in(0,1)$ such that the following conditions hold.
\begin{enumerate}
\item \label{Prop tilde H z part a}We have  
	\begin{equation}\label{Prop tilde H z a assumption eq}
	0< \liminf_{n \ua \infty}2^{n(2{R}-2)}\sum_{k = 0}^{2^n-1}\left(\vartheta^y_{n,k}\right)^2\le \limsup_{n \ua \infty}2^{n(2{R}-2)}\sum_{k = 0}^{2^{n}-1}\left(\vartheta^y_{n,k}\right)^2<\infty.
	\end{equation}
	
\item\label{Prop tilde H z part b}  The function $x$ is H\"older continuous with exponent $\alpha\in(2{R}/5,1]$. 
\end{enumerate}
Then, if $g\in C^2(\bR)$ is strictly monotone,  we have $\lim_n\wh\cR_n(v)={R}$.
\end{proposition}

Subsequently, we consider the following question: What is the roughness exponent of $u + x$, given that $x, u \in C[0,1]$ admit the roughness exponent $R_x$ and $R_u$ respectively? The answer is provided in the following general lemma, which concerns the roughness exponent of the sum of two continuous functions and which is of independent interest. 

\begin{proposition}\label{lemma roughness sum}
	Suppose that two functions $x,z \in C[0,1]$ admit the roughness exponent $R_x$ and $R_{z}$ respectively, such that $R_x < R_{z}$, then $x+z$ admits the roughness exponent $R_x$. 
\end{proposition}

Note that \Cref{lemma roughness sum} does not extend to the case $R_x=R_{z}$. Indeed, taking $z=-x$ yields a trivial counterexample.

Our final result in this section investigates the asymptotic behavior of the estimator $\wh \cR_n$ for the antiderivative of the sum of continuous functions. To this end, let $x,u \in C[0,1]$, and we denote the antiderivatives of $x$ and $u$ by $y$ and $v$, respectively. In the following proposition, we continue to use a superscript to indicate from which function the Faber--Schauder coefficients will be computed from. In addition, for $f \in C[0,1]$,  $s \ge 0$ and $p \ge 1$, we denote the $p^{\rm th}$ variation of the function $t \mapsto f(s+t)$ along the $n^{\rm th}$ dyadic partition for given $s \ge 0$ by
\begin{equation}\label{eq p variation s}
	\<f\>^{(p)}_n(s):= \sum_{k = 0}^{2^n-1}\left|f\left(\frac{k+1}{2^{n}}+s\right)-f\left(\frac{k}{2^{n}}+s\right)\right|^p,
\end{equation}
where we put $f(t):= f(t \wedge 1)$ for all $t \ge 1$ to avoid undefined arguments of functions.
The following lemma establishes sufficient conditions under which $\wh \cR_n$ is a consistent estimator for the sum of continuous functions.

\begin{proposition}\label{lemma add}
	Suppose that there exists $R \in (0,1)$ such that the following conditions hold.
	\begin{enumerate}
		\item \label{item a lemma add} We have 
		\begin{equation}\label{eq y vartheta}
			0 < \liminf_{n \ua \infty} 2^{n(2R-2)} \sum_{k = 0}^{2^n-1} \left(\vartheta^y_{n,k}\right)^2 \le \limsup_{n \ua \infty} 2^{n(2R-2)} \sum_{k = 0}^{2^n-1} \left(\vartheta^y_{n,k}\right)^2 < \infty.
		\end{equation}
		\item \label{item b lemma add}The function $u$ admits the following asymptotic behavior:  
		\begin{equation}\label{eq u QV}
			\lim_{n \ua \infty} 2^{n(2R-1)}\int_0^{1}\<u\>^{(2)}_n\left(\frac{s}{2^n}\right)\,ds= 0.
		\end{equation}
	\end{enumerate}
	Then, we have
	\begin{equation}\label{eq target vartheta}
		0 < \liminf_{n \ua \infty} 2^{n(2R-2)} \sum_{k = 0}^{2^n-1} \left(\vartheta^{y + v}_{n,k}\right)^2 \le \limsup_{n \ua \infty} 2^{n(2R-2)} \sum_{k = 0}^{2^n-1} \left(\vartheta^{y+v}_{n,k}\right)^2 < \infty,
	\end{equation}
	and thus, $\lim_n \wh \cR_n(y+v) = R$. 
\end{proposition}

\section{Proofs}\label{Section Proof}
\subsection{Proof of results in \Cref{pathwise section}}

We begin with proving \Cref{thm roughness composition} and \Cref{Prop tilde H z}, and we fix the notation \eqref{u and v eq} throughout those proofs.

\begin{proof}[Proof of \Cref{thm roughness composition}]	For any $p > 0$, the mean value theorem and the intermediate value theorem yield numbers $\tau_{n,k} \in [k2^{-n},(k+1)2^{-n}]$ such that 
	\begin{equation}\label{eq varitaion u}
		\begin{split}
			\<u\>^{(p)}_n& =    \sum_{k = 0}^{2^n-1}\left|g'(x(\tau_{n,k}))\Big(x\Big(\frac{k+1}{2^n}\Big)-x\Big(\frac{k}{2^n}\Big)\Big)\right|^p	 = \sum_{k = 0}^{2^n-1}\left|g'(x(\tau_{n,k}))\right|^p\left|x\Big(\frac{k+1}{2^n}\Big)-x\Big(\frac{k}{2^n}\Big)\right|^p,
		\end{split}
	\end{equation}
	where  the notation $\<u\>^{(p)}_n$ was introduced in \eqref{eq p variation}.
	Since $g'$ is continuous and nonzero, there are constants $0<c_-<c_+<\infty$ such that $c_-\le|g'(x(t))|\le c_+$ for all $t\in[0,1]$. Hence, $c_-^p	 \<x\>^{(p)}_n\le  \<u\>^{(p)}_n\le c_+^p \<x\>^{(p)}_n$ holds for all $n$. Passing to the limit $n\ua\infty$ for $p>1/{R}$ and $p<1/{R}$ yields the result.
\end{proof}

\begin{proof}[Proof of \Cref{Prop tilde H z}]
	In this proof, we will work with the actual and estimated Faber--Schauder coefficients of the functions $x$,  $u=g\circ x$, and their respective antiderivatives $y$ and $v$. Here, we will use a superscript as in \eqref{coefficients depending on f eq} to indicate from which function the Faber--Schauder coefficients will be computed. Our goal in this proof is to show that \eqref{Prop tilde H z a assumption eq} carries over to the coefficients $\vartheta^v_{n,k}$.
	That is, 
	\begin{equation}\label{Prop tilde H z first goal eq}
		0<\liminf_{n \ua \infty}2^{n(2{R}-2)}\sum_{k = 0}^{2^{n}-1}\big(\vartheta^v_{n,k}\big)^2\le \limsup_{n \ua \infty}2^{n(2{R}-2)}\sum_{k = 0}^{2^{n}-1}\big(\vartheta^v_{n,k}\big)^2<\infty.	\end{equation}
Once this has been achieved, taking logarithms in \eqref{Prop tilde H z first goal eq}, dividing by $2n$, and passing to the limit will then yield
	${R}-\wh\cR_n(v)\to0$, which is the assertion. 
	
	It remains to establish \eqref{Prop tilde H z first goal eq}. 
	Rewriting the second line in \eqref{coefficients depending on f eq}
	gives after a short computation that 
	\begin{equation}\label{eq FS y}
		\vartheta^f_{n,k} =   2^{n+5/2}\left(\theta^f_{n+1,2k+1}-\theta^f_{n+1,2k}\right).	\end{equation}
	Let us  introduce the notation $\theta_{m,k}^f(s):=\theta_{m.k}^{f(s+\cdot)}$. That is, $\theta_{m,k}^f(s)$ are the Faber--Schauder coefficients of the function $t\mapsto f(s+t)$ for given $s\ge0$. Here and in the sequel, we will avoid undefined arguments of functions in case $s+t>1$ by setting $f(t):= f(t \wedge 1)$ for $t \ge 1$. With this notation, 
	we get from \eqref{eq FS y} that for $f\in C^1[0,\infty)$,		\begin{equation}\label{eq zeta v}
		\vartheta^f_{n,k} = 2^{n+5/2}\int_{0}^{2^{-n-1}}\theta^{f'}_{n+1,2k}(s)\,ds .
	\end{equation}
	Applying the mean-value theorem and the intermediate value theorem yields certain intermediate times $\tau_{n+2,k}(s) \in [2^{-n-2}k+s, 2^{-n-2}(k+1)+s]$ such  that for $s \in [0,2^{-n-1}]$, 
	\begin{equation*}
		\begin{split}
			\theta^{u}_{n+1,2k}(s) &= 2^{(n+1)/2}\left(2u\Big(\frac{4k+1}{2^{n+2}}+s\Big)-u\Big(\frac{4k}{2^{n+2}}+s\Big)-u\Big(\frac{4k+2}{2^{n+2}}+s\Big)\right)\\&= 2^{(n+1)/2} g'\big(x(\tau_{n+2,4k}(s))\big)\left(x\Big(\frac{4k+1}{2^{n+2}}+s\Big)-x\Big(\frac{4k}{2^{n+2}}+s\Big)\right) \\
			&+2^{(n+1)/2} g'\big(x(\tau_{n+2,4k+1}(s))\big)\left(x\Big(\frac{4k+1}{2^{n+2}}+s\Big)-x\Big(\frac{4k+2}{2^{n+2}}+s\Big)\right),\\
			&=2^{(n+1)/2}\scalemath{0.9}{\left(\frac{g'\big(x(\tau_{n+2,4k}(s))\big)+g'\big(x(\tau_{n+2,4k+1}(s))\big)}{2}\left(2x\Big(\frac{4k+1}{2^{n+2}}+s\Big)-x\Big(\frac{4k}{2^{n+2}}+s\Big)-x\Big(\frac{4k+2}{2^{n+2}}+s\Big)\right)\right)}\\&\quad+ 2^{(n+1)/2}\left(\frac{g'\big(x(\tau_{n+2,4k}(s))\big)-g'\big(x(\tau_{n+2,4k+1}(s))\big)}{2}\left(x\Big(\frac{4k+2}{2^{n+2}}+s\Big)-x\Big(\frac{4k}{2^{n+2}}+s\Big)\right)\right).
		\end{split}
	\end{equation*}
	The intermediate value theorem  and the mean-value theorem  also imply that there are intermediate times $\tau^\sharp_{n+1,2k}(s),\tau^\flat_{n+1,2k}(s) \in [2^{-n-1}2k+s,2^{-n-1}(2k+1)+s]$ such that 
	\begin{align*}
		\frac{1}{2}\left(g'\big(x(\tau_{n+2,4k}(s))\big)+g'\big(x(\tau_{n+2,4k+1}(s))\big)\right) &= g'\big(\tau^\sharp_{n+1,2k}(s)\big),\\
		\frac{1}{2}\left(g'\big(x(\tau_{n+2,4k}(s))\big)-g'\big(x(\tau_{n+2,4k+1}(s))\big)\right) &= \frac{1}{2}g''\big(\tau^\flat_{n+1,2k}(s)\big)\left(x(\tau_{n+2,4k}(s))-x(\tau_{n+2,4k+1}(s))\right).
	\end{align*} 
	With the shorthand notation
	\begin{equation*}
		\zeta^x_{n+1,2k}(s):= 2^{(n+1)/2}\left(x\Big(\frac{4k+2}{2^{n+2}}+s\Big)-x\Big(\frac{4k}{2^{n+2}}+s\Big)\right)\big(x(\tau_{n+2,4k}(s))-x(\tau_{n+2,4k+1}(s))\big),
	\end{equation*}
	we then have
	\begin{equation*}
		\theta^{u}_{n+1,2k}(s) =g'\big(\tau^\sharp_{n+1,2k}(s)\big)\theta^{x}_{n+1,2k}(s) + g''\big(\tau^\flat_{n+1,2k}(s)\big)\zeta^{x}_{n+1,2k}(s).
	\end{equation*}
	Plugging the preceding equation into \eqref{eq zeta v} and applying the mean value theorem for integrals yields intermediate times $\tau^\sharp_{n+1,k}, \tau^\flat_{n+1,k} \in [2^{-n-1}{k},2^{-n-1}(k+1)]$  that are independent of $s$ such that
	\begin{equation*}
		\begin{split}
			\vartheta^v_{n,k} &=  2^{n+5/2}g'\big(x(\tau^\sharp_{n+1,2k})\big)\int_{0}^{2^{-n-1}}\theta^{x}_{n+1,2k}(s)\,ds + 2^{n+5/2}g''\big(x(\tau^\flat_{n+1,2k})\big)\int_{0}^{2^{-n-1}} \zeta^{x}_{n+1,2k}(s)\,ds \\&= g'\big(x(\tau^\sharp_{n+1,2k})\big)\vartheta^y_{n,k} + 2^{n+5/2}g''\big(x(\tau^\flat_{n+1,2k})\big)\int_{0}^{2^{-n-1}} \zeta^{x}_{n+1,2k}(s)\,ds.
		\end{split}
	\end{equation*}
	Introducing the shorthand notation	\begin{equation*}
		\wt \zeta^x_{n+1,2k}:= 2^{n+5/2}\int_{0}^{2^{-n-1}} \zeta^{x}_{n+1,2k}(s)\,ds,
	\end{equation*}
	we can write 
	\begin{equation}\label{three terms eq}
		\begin{split}
			\big(	\vartheta^v_{n,k} \big)^2
			& = \big(g'\big(x(\tau^\sharp_{n+1,2k})\big)\big)^2\big(\vartheta^y_{n,k}\big)^2 + \left(g''\big(x(\tau^\flat_{n+1,2k})\big)\right)^2\big(\wt\zeta^x_{n+1,2k}\big)^2\\
			&\qquad + 2g'\big(x(\tau^\sharp_{n+1,2k})\big)g''\big(x(\tau^\flat_{n+1,2k})\big)\vartheta^y_{n,k}\wt\zeta^x_{n+1,2k}.	\end{split}
	\end{equation}
	For each of the three terms on the right,  we will now analyze its contribution to the quantities in~\eqref{Prop tilde H z first goal eq}. 
	The main contribution comes from the first term on the right.	Indeed, our assumptions on $g$ imply that there are constants $0<c_-\le c_+<\infty$ such that $c_-<(g'(x(t)))^2<c^+$ for all $t\in[0,1]$, and so 	
	$$c_-2^{n(2{R}-2)}\sum_{k = 0}^{2^{n}-1}\big(\vartheta^y_{n,k}\big)^2\le 2^{n(2{R}-2)}\sum_{k = 0}^{2^{n}-1}\big(\vartheta^v_{n,k}\big)^2\le c_+2^{n(2{R}-2)}\sum_{k = 0}^{2^{n}-1}\big(\vartheta^y_{n,k}\big)^2.
	$$
	This will establish \eqref{Prop tilde H z first goal eq} as soon as
	we have shown that the contributions of the two remaining terms in \eqref{three terms eq} are asymptotically negligible.	For the second term, we use the H\"older continuity of $x$ to get a constant $c_x$ for which
	\begin{equation*}
		|x(\tau_{n+2,4k}(s))-x(\tau_{n+2,4k+1}(s))| \le c_x|\tau_{n+2,4k}(s) - \tau_{n+2,4k+1}(s)|^\alpha \le c_x2^{-\alpha n}
	\end{equation*}
	Furthermore,  there exists $\kappa_x > 0$ such that $32 (g''(x(s)))^2 \le \kappa_x$ for all $s\in[0,1]$. Thus,
	\begin{equation*}
		\begin{split}
			&2^{(2{R}-2)n}\sum_{k = 0}^{2^{n}-1}\left(g''\big(x(\tau^\flat_{n+1,2k})\big)\right)^2\left(\wt\zeta_{n+1,2k}
			\right)^2 \\
			&= 2^{(2{R}-2)n}\sum_{k = 0}^{2^{n}-1}\left(g''\big(x(\tau^\flat_{n+1,2k})\big)\right)^2\left(2^{n+5/2}\int_{0}^{2^{-n-1}}\zeta^{x}_{n+1,2k}(s)\,ds\right)^2\\
			&\le \kappa_x 2^{2{R}n}\sum_{k = 0}^{2^{n}-1}\left(\int_{0}^{2^{-n-1}}\zeta^{x}_{n+1,2k}(s)\,ds\right)^2 \le \kappa_x 2^{2{R}n}\sum_{k = 0}^{2^{n}-1}2^{-n-1}\int_{0}^{2^{-n-1}}\left(\zeta^{x}_{n+1,2k}(s)\right)^2\,ds \\
			&\le \kappa_x 2^{(2{R}-1)n}\int_{0}^{2^{-n-1}}\sum_{k = 0}^{2^{n}-1}\left(x\Big(\frac{4k+2}{2^{n+2}}+s\Big)-x\Big(\frac{4k}{2^{n+2}}+s\Big)\right)^2\left(x(\tau_{n+2,4k}(s))-x(\tau_{n+2,4k+1}(s))\right)^2\,ds \\
			&= \kappa_x 2^{\alpha n}\int_{0}^{2^{-n-1}}2^{(2{R}-1-\alpha)n}\sum_{k = 0}^{2^{n}-1}\left(x\Big(\frac{2k+1}{2^{n+1}}+s\Big)-x\Big(\frac{2k}{2^{n+1}}+s\Big)\right)^2\big(c^2_x2^{-2\alpha n}\big)\,ds \\
			&\le \kappa_x c^2_x \int_{0}^{2^{-n-1}}\sup_{s \in [0,2^{-n-1}]}\left(2^{(2{R}-1-3\alpha)n}\sum_{k = 0}^{2^{n+1}-1}\left(x\Big(\frac{2k+1}{2^{n+1}}+s\Big)-x\Big(\frac{2k}{2^{n}}+s\Big)\right)^2\right)\,ds.	\end{split}
	\end{equation*}
	Moreover, \ref{Prop tilde H z part b} implies the  integrand in the final term converges to zero:
	\begin{equation}\label{Prop tilde H z aux claim}
		\lim_{n\ua\infty}\sup_{0\le s\le 2^{-n-1}}2^{(2{R}-1-3\alpha)n}\sum_{k = 0}^{2^{n+1}-1}\left(x\Big(\frac{2k+1}{2^{n+1}}+s\Big)-x\Big(\frac{2k}{2^{n}}+s\Big)\right)^2=0.
	\end{equation}
	Indeed, by the H\"older continuity of $x$, we can again use the  constant $c_x$ to get 
	\begin{align*}
		2^{(2{R}-1-3\alpha)n}\sum_{k = 0}^{2^{n+1}-1}\left(x\Big(\frac{2k+1}{2^{n+1}}+s\Big)-x\Big(\frac{2k}{2^{n}}+s\Big)\right)^2&\le 2^{(2{R}-1-3\alpha)n}\cdot 2^{n+1}\cdot c_x^22^{-2(n+1)\alpha};
	\end{align*}
	the right-hand side is equal to $c_x^2\cdot2^{1-2\alpha}\cdot2^{(2{R}-5\alpha)n}$, which converges to zero as $n\ua\infty$. Altogether, this shows that the contribution of the second term on the right-hand side of \eqref{three terms eq} is negligible.

	For the cross-product term on the rightmost side of \eqref{three terms eq}, we get from the Cauchy--Schwarz inequality, 
	\begin{equation*}
		\begin{split}
			\lefteqn{\lim_{n \ua \infty}2^{(2{R}-2)n}\sum_{k = 0}^{2^n-1}g'\big(x(\tau^\sharp_{n+1,2k})\big)g''\big(x(\tau^\flat_{n+1,2k})\big)\vartheta_{n,k}\wt\zeta^x_{n+1,2k}}
			\\&\le \sqrt{\lim_{n \ua \infty}2^{(2{R}-2)n}\sum_{k = 0}^{2^n-1}\left(g'\big(x(\tau^\sharp_{n+1,2k})\big)\right)^2\big(\vartheta_{n,k}\big)^2} \sqrt{\lim_{n \ua \infty}2^{(2{R}-2)n}\sum_{k = 0}^{2^n-1}\left(g''\big(x(\tau^\flat_{n+1,2k})\big)\right)^2\big(\wt\zeta^x_{n+1,2k}\big)^2} = 0.
		\end{split}
	\end{equation*}
	Altogether, \eqref{Prop tilde H z first goal eq} follows.\end{proof}

Throughout the proofs of \Cref{lemma roughness sum} and \Cref{lemma add}, we let $x,u \in C[0,1]$, and we denote their antiderivatives by $y$ and $v$ respectively.

\begin{proof}[Proof of \Cref{lemma roughness sum}] 
For simplicity, for $f \in C[0,1]$, let us denote 
	\begin{equation*}
		\Delta \bm f_n = \left(f\left(\frac{1}{2^n}\right)-f(0), f\left(\frac{2}{2^n}\right)-f\left(\frac{1}{2^n}\right), \cdots, f(1)-f\left(\frac{2^n-1}{2^n}\right)\right) \in \bR^{2^n}.
	\end{equation*}
	Then, for $n \in \bN$ and $p \ge 1$, we have $\<f\>^{(p)}_n = \norm{\Delta \bm f}_{\ell_p}^p$. Moreover, the Minkowski inequality yields that 
	\begin{equation*}
		\left(\left|\norm{\Delta \bm x_n}_{\ell_p} - \norm{\Delta \bm z_n}_{\ell_p}\right|\right)^p
		\le \<x +z\>^{(p)}_n = \norm{\Delta \bm x_n + \Delta \bm z_n}^p_{\ell_p} \le \left(\norm{\Delta \bm x_n}_{\ell_p} + \norm{\Delta \bm z_n}_{\ell_p}\right)^p. 
	\end{equation*}
	By definition, for $p > 1/{R_x} > 1/{R_z}$, we have $\lim_{n} \<x\>^{(p)}_n = \lim_{n}\<z\>^{(p)}_n = 0$. Hence,  $\lim_{n} \norm{\Delta \bm x_n}_{\ell_p} = \lim_{n \ua \infty}\norm{\Delta \bm z_n}_{\ell_p} = 0$.
	Thus, for $p > 1/{R_x}$, we have 
	\begin{equation*}
		\lim_{n \ua \infty}\<x + z\>^{(p)}_n \le \lim_{n \ua \infty}\left(\norm{\Delta \bm x_n}_{\ell_p} + \norm{\Delta \bm z_n}_{\ell_p}\right)^p = 0.
	\end{equation*}
	Next, for $1/{R_z} < p < 1/{R_x}$, we have 
	\begin{equation*}
		\lim_{n \ua \infty} \<x\>^{(p)}_n = \lim_{n \ua \infty}\norm{\Delta \bm x_n}_{\ell_p}^p = \infty \qquad \text{and} \qquad \lim_{n \ua \infty} \<z\>^{(p)}_n = \lim_{n \ua \infty}\norm{\Delta \bm z_n}_{\ell_p}^p = 0.
	\end{equation*}
	Hence, we have 
	\begin{equation}\label{eq fg lower bound}
		\lim_{n \ua \infty}\<x + z\>^{(p)}_n \ge \lim_{n \ua \infty} \left(\left|\norm{\Delta \bm x_n}_{\ell_p} - \norm{\Delta \bm z_n}_{\ell_p}\right|\right)^p = \infty.
	\end{equation}
	As the quantity $\<x+z\>^{(p)}_n$ is decreasing with respect to $p$ for sufficiently large $n$, therefore, the above identity \eqref{eq fg lower bound} naturally extends to all $p < {1}/{R_x}$. This completes the proof. 
\end{proof}

\begin{proof}[Proof of \Cref{lemma add}]
	 If \eqref{eq target vartheta} holds, taking logarithms in \eqref{eq target vartheta}, dividing by $2n$, and passing to the limit will yield $\lim_n \wh \cR_n(y+v)  = R$. Moreover, using a Cauchy--Schwarz-type argument as in the proof of \Cref{Prop tilde H z} and applying the fact $\vartheta^{y + v}_{n,k} = \vartheta^v_{n,k} + \vartheta^y_{n,k}$, it suffices to establish the following identity
	\begin{equation}\label{eq target id}
		\limsup_{n \ua \infty} 2^{n(2R-2)}\sum_{k = 0}^{2^n-1}\big(\vartheta^v_{n,k}\big)^2 = 0.
	\end{equation} To this end, we continue to use the notation $\theta_{m,k}^f(s):=\theta_{m.k}^{f(s+\cdot)}$. Applying Jensen's inequality to \eqref{eq FS y} yields that 
	\begin{equation*}
		\big(\vartheta^{v}_{n,k}\big)^2 = \left(2^{n + 5/2}\int_{0}^{2^{-n-1}}\theta^{u}_{n+1,2k}(s)\,ds\right)^2 = \left(2\sqrt{2}\int_0^{1}\theta^u_{n+1,2k}\left(\frac{s}{2^{n+1}}\right)\,ds\right)^2 \le 8 \int_0^1 \left(\theta^u_{n+1,2k}\left(\frac{s}{2^{n+1}}\right)\right)^2\,ds.
	\end{equation*}
	Thus, 
	\begin{equation*}
		\begin{split}
			2^{n(2R-2)}\sum_{k = 0}^{2^n-1}\big(\vartheta^{v}_{n,k}\big)^2 &\le 2^{n(2R-2)+3}\int_0^1 \sum_{k = 0}^{2^n-1}\left(\theta^u_{n+1,2k}\left(\frac{s}{2^{n+1}}\right)\right)^2\,ds\\ &\le 2^{n(2R-2)+3}\int_0^1 \sum_{k = 0}^{2^{n+1}-1}\left(\theta^u_{n+1,k}\left(\frac{s}{2^{n+1}}\right)\right)^2\,ds \\&= 2^{n(2R-1)+2}\int_0^1 \<u\>^{2}_{n+1}\left(\frac{s}{2^{n+1}}\right)\,ds,
		\end{split}
	\end{equation*}
	where the final equality follows from \cite[Proposition 2.1]{MishuraSchied}. Moreover, condition \eqref{eq u QV} yields that the rightmost term  in the above inequality converges to zero as $n \ua \infty$, which in turn leads to \eqref{eq target id}. 
\end{proof}

\subsection{Proof of \Cref{thm main rv}}\label{Section Proof fbm}

 Let us start by briefly outlining  the proof. We will first of prove \Cref{thm main rv} for the case $\xi = 0$ and $x_0 = 0$, so that $X = W^H$. To show that  $ g(W^H)$ admits the roughness exponent $H$ $\bP$-a.s.~is relatively straightforward. It was shown in \cite[Theorem 5.1]{HanSchiedHurst} that $W^H$ admits $\bP$-a.s.~the roughness exponent $H$. Hence, it follows from  \Cref{thm roughness composition} that the sample paths of $ g(W^H)$ also admit the roughness exponent $H$.  Next, two steps are needed to prove that $\lim_n \wh \cR_n(Y) = H$ for $Y_t = \int_0^t g(W^H_s)\,ds$. First, we need to establish condition \eqref{(W)} for the fractional Brownian motion $W^H$. Second,  we need to make sure that we may apply \Cref{Prop tilde H z}. This summarizes the steps to the proof of \Cref{thm main rv} for the case $X = W^H$. To extend the results to a general $X$ with $x_0 \in \bR$ and a non-trivial process $\xi$, we verify the conditions in Propositions \ref{lemma roughness sum} and \ref{lemma add} for the components $W^H$ and $x_0 + \int_0^\cdot \xi_s ds$.

Let us begin with stating and proving the following lemma, which will be needed for the proof of \Cref{lemma fbm decaying}. For $n,k \in \bN$, let us consider the vector $\bm z_{(n,k)} = (z^{(n,k)}_i) \in \bR^{2^n}$, where 
\begin{equation}\label{eq znk}
	z^{(n,k)}_i = 2^{3n/2}\sum_{m = n}^{n+k}2^{-3m/2}\sum_{j = 0}^{2^{m-n}-1}\theta_{m,j+2^{m-n}(i-1)} \quad \text{for} \quad 1 \le i \le 2^n.
\end{equation}
It is clear that the vector $\bm z_{(n,k)}$ is a truncated version of the vector $\bm z_n$ defined in \eqref{eq def zn}. Since each Faber--Schauder coefficient $\theta_{m,k}$ is  a linear combination  of the values $x(j2^{-n-k-1})$,  each $z^{(n,k)}_i$ must admit the following representation,
\begin{equation}\label{eq xi n}
	z^{(n,k)}_i = \sum_{j = 0}^{2^{n+k+1}}\xi^{(n,k,i)}_j x\Big(\frac{j}{2^{n+k+1}}\Big),
\end{equation}
for certain coefficients $\xi^{(n,k,i)}_j$. The following lemma computes the values of these coefficients. 

\begin{lemma}\label{Lemma xi n}
	We have 
	\begin{equation}\label{eq xi represent}
		\xi_{j}^{(n,k,i)}   = \begin{cases}
			0 &\quad \text{if} \quad j \le 2^{k+1}(i-1)-1 \quad \text{or} \quad j \ge 2^{k+1}i+1,\\
			2^{n/2}(2^{-k}-2) &\quad \text{if} \quad j =  2^{k+1}(i-1) \quad \text{or} \quad j = 2^{k+1}i,\\
			2^{1-k+n/2} &\quad \text{if} \quad  2^{k+1}(i-1)+1 \le j \le 2^{k}i-1.
		\end{cases}
	\end{equation}
\end{lemma}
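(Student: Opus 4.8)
The plan is to prove the formula by a direct expansion. First I would substitute the explicit Faber--Schauder formula \eqref{Faber--Schauder eq} into the definition \eqref{eq znk} of $z_i^{(n,k)}$, rewrite every argument of $x$ as a node of the finest grid $\bT_{n+k+1}$, and then read off $\xi_j^{(n,k,i)}$ as the total coefficient that the node $x(j2^{-(n+k+1)})$ accumulates across all generations $m=n,\dots,n+k$. Writing $s:=m-n\in\{0,\dots,k\}$, the generation-$m$ block $\sum_{\ell}\theta_{n+s,\ell+2^{s}(i-1)}$ runs over the $2^{s}$ dyadic subintervals of $[(i-1)2^{-n},i2^{-n}]$, and in view of the prefactor $2^{3n/2}2^{-3m/2}$ together with the factor $2^{m/2}$ hidden inside each $\theta_{m,\cdot}$, a single node-weight of $\theta$ gets scaled by $2^{3n/2-m}=2^{n/2-s}$. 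Since all these subintervals lie, on the fine grid, inside $[2^{k+1}(i-1),2^{k+1}i]$, no node outside this block is ever touched, which immediately gives the first case.

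For the remaining cases I would classify each fine node $j$ in the block by the $2$-adic valuation $d$ of its offset $r:=j-2^{k+1}(i-1)\in\{0,\dots,2^{k+1}\}$. Relative to generation $n+s$, such a node is either a subinterval midpoint, an interior subinterval endpoint shared by two adjacent subintervals, or one of the two block-boundary nodes $r\in\{0,2^{k+1}\}$; from the shape of \eqref{Faber--Schauder eq} these carry the scaled weights $+2\cdot 2^{n/2-s}$, $-2\cdot 2^{n/2-s}$, and $-2^{n/2-s}$, respectively. A short counting argument with the valuation then shows that $r$ is a grid node of generation $n+s$ exactly when $s\ge k+1-d$, and a midpoint of generation $n+s$ for the single value $s=k-d$, while each of the two block-boundary nodes is seen by every generation as a boundary node. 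Summing over $s=0,\dots,k$ reduces everything to two geometric sums.

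The hard part, and really the only substantive point, is the exact cancellation in the interior case. For a block-boundary node the sum is immediate: $\sum_{s=0}^{k}2^{n/2-s}=2^{n/2}(2-2^{-k})$ produces the stated value $2^{n/2}(2^{-k}-2)$. For an interior node of valuation $d$ one must combine the lone midpoint weight $2^{\,n/2-(k-d)+1}=2^{1-k+n/2}\,2^{d}$ with the accumulated interior-endpoint weights $-2^{n/2+1}\sum_{s=k+1-d}^{k}2^{-s}=-2^{1-k+n/2}(2^{d}-1)$; the point is that the $2^{d}$ terms cancel and leave the constant $2^{1-k+n/2}$, independent of $d$. Verifying this telescoping is where care is needed, but once it is in place the three regimes of \eqref{eq xi represent} follow at once, the interior regime consisting of all fine nodes strictly between the two block endpoints. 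I would organize the writeup so that the valuation bookkeeping of the second paragraph is isolated as a small sublemma, keeping the final summation clean.
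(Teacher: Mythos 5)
Your proof is correct, but it takes a genuinely different route from the paper. The paper fixes $n$ and $i$ and proceeds by induction on $k$: the base case $k=0$ is the single coefficient $z_i^{(n,0)}=\theta_{n,i-1}$, and the induction step appends generation $n+m+1$ via $z_i^{(n,m+1)}=z_i^{(n,m)}+2^{n/2-m-1}\sum_j(2x(\cdot)-x(\cdot)-x(\cdot))$, updating the coefficients at the new odd fine nodes (fresh midpoints), at the even nodes inherited from the coarser grid (where $\xi^{(n,m+1,i)}_{2j}=\xi^{(n,m,i)}_{j}-2\cdot2^{n/2-m-1}=2^{n/2-m}$), and at the two block boundaries. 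Your one-shot expansion, classifying each fine node by the $2$-adic valuation $d$ of its offset and summing the two geometric series, performs all of these updates simultaneously: the cancellation you isolate --- the lone midpoint weight $2^{1-k+n/2}\,2^{d}$ against the accumulated interior-endpoint weights $-2^{1-k+n/2}(2^{d}-1)$ --- is exactly the paper's induction step telescoped over the generations $s=k+1-d,\dots,k$, and your boundary sum $-2^{n/2}\sum_{s=0}^{k}2^{-s}$ matches the paper's boundary recursion. What your approach buys is an explanation of \emph{why} the interior coefficient is a constant independent of the node's position and dyadic depth, and it derives the formula rather than merely verifying a guessed answer; what the paper's induction buys is purely mechanical verification with no valuation bookkeeping. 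One small point in your favor: your reading of the interior regime as all fine nodes strictly between the block endpoints, i.e.\ $2^{k+1}(i-1)+1\le j\le 2^{k+1}i-1$, is the intended statement --- the upper limit $2^{k}i-1$ displayed in \eqref{eq xi represent} is evidently a typo, as the $k=0$ base case (where the displayed range would be empty, yet $\xi^{(n,0,i)}_{2i-1}=2^{n/2+1}$) confirms.
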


\begin{proof}
	We fix $n \in \bN$ and $1 \le i \le 2^n$ and proceed by induction on $k \in \bN$. First, let us establish the base case $k = 0$. Then 
	\begin{equation}\label{eq initial step}
		z^{(n,0)}_{i} = \theta_{n,i-1} = 2^{n/2+1}x\Big(\frac{2i-1}{2^{n+1}}\Big) - 2^{n/2} x\Big(\frac{2i}{2^{n+1}}\Big)-2^{n/2}x\Big(\frac{2i-2}{2^{n+1}}\Big).
	\end{equation}
	Moreover, plugging $k = 0$ into \eqref{eq xi represent} yields that $\xi_{2i-2}^{(n,0,i)} = \xi_{2i}^{(n,0,i)} = -2^{n/2}$, $\xi_{2i-1}^{(n,0,i)} = 2^{n/2+1}$ and $\xi_{j}^{(n,0,i)} = 0$ otherwise. This proves our induction for the initial step $k = 0$. 
	
	Next, let us assume that \eqref{eq xi represent} holds for $k = m$ and subsequently prove that this identity also holds for $k = m + 1$. It follows from \eqref{eq znk} that
	\begin{equation}\label{eq induction step}
		z^{(n,m+1)}_{i} = z^{(n,m)}_{i} + 2^{n/2-m - 1}\sum_{j = 2^{m+1}(i-1)}^{2^{m + 1}i-1}\bigg(2x\Big(\frac{2j+1}{2^{n+m +2}}\Big)-x\Big(\frac{j}{2^{n+m +1}}\Big)-x\Big(\frac{j+1}{2^{n+m +1}}\Big)\bigg).
	\end{equation}
	For $2^{m + 1}(i-1) < j < 2^{m + 1}i-1$, the point $2^{-n-m-2}(2j+1)$ cannot be written in the form $\ell2^{-n-m-1}$ for some $\ell\in \bN_0$. Hence 
	$$
	\xi_{2j+1}^{(n,m+1,i)} = 2\cdot 2^{n/2-m-1} = 2^{n/2-m},
	$$
	as the term $x(2^{-n-m-2}(2j+1))$ does not appear in the linear combination \eqref{eq xi n} for $k = m$. Next, for $2^{k-n}(i-1) < j < 2^{k-n}i$, the point $2^{-n-m-2}2j = 2^{-n-m-1}j $ can be written in the form $\ell2^{-n-m-1}$ for some $\ell\in \bN_0$. It thus follows from \eqref{eq xi n} and \eqref{eq xi represent} that
	$$
	\xi^{(n,m+1,i)}_{2j} = \xi^{(n,m,i)}_{j} - 2 \cdot 2^{n/2-m-1} =  2^{n/2-m},$$
	as the term $x(2^{-n-m-1}j)$ contributes to the representation of $\bm z^{(n,m)}_i$ with $\xi^{(n,m,i)}_{j} = 2^{n/2-m+1}$. Moreover, for $j = 2^{k-n}(i-1)$ or $j = 2^{k-n}i$, we have 
	\begin{equation*}
		\xi_{j}^{(n,m+1,i)} = \xi_{j}^{(n,m,i)} - 2^{n/2-m-1} = 2^{n/2}(2^{-m}-2) - 2^{n/2-m-1} = 2^{n/2}(2^{-m-1}-2).
	\end{equation*}
	Last, for $j \le 2^{m+2}(i-1)-1$ or $j \ge 2^{m+2}i + 1$, the term $x(2^{-n-m-2})$ does not appear on right-hand side of \eqref{eq induction step}. Thus, we have $\xi_{j}^{(n,m+1,i)} = 0$. Comparing the above identities with \eqref{eq xi represent} proves the case for $k = m+1$.
\end{proof}

Using \Cref{Lemma xi n}, we can establish condition \eqref{(W)} for almost all sample paths of fractional Brownian motion, which is achieved in the following proposition.

\begin{proposition}\label{lemma fbm decaying}
	With probability one, the sample paths of fractional Brownian motion $W^H$ satisfy condition \eqref{(W)}.
\end{proposition}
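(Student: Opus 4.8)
The plan is to prove \eqref{(W)} by showing that the normalized squared error $2^{(2H-2)n}\norm{\bm w_n}_{\ell_2}^2$ converges $\bP$-a.s.\ to an explicit finite constant $\lambda$. Combined with the already established limit \eqref{gamman(WH) convergence eq}, this yields $\norm{\bm w_n}_{\ell_2}^2/\norm{\bar{\bm\theta}_n}_{\ell_2}^2\lra\lambda/(2^{2-2H}-1)=:\kappa^2$ $\bP$-a.s., so that it only remains to verify $\kappa\neq1$. The first step is to obtain a tractable representation of the coordinates of $\bm w_n=Q_n\bm z_{n+2}$ from \eqref{eq vector w}. Letting the truncation parameter tend to infinity in \Cref{Lemma xi n} identifies each $z^{(n)}_i$ as a finite linear combination of the two endpoint values $W^H((i-1)2^{-n})$, $W^H(i2^{-n})$ and the cell integral $\int_{(i-1)2^{-n}}^{i2^{-n}}W^H_s\,ds$; pushing this through the block structure of $Q_n$ shows that each coordinate $w_{n,k}=\vartheta_{n,k}-\theta_{n,k}$ is a centered Gaussian functional depending only on the restriction of $W^H$ to the single dyadic cell $[k2^{-n},(k+1)2^{-n}]$. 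By stationarity of the increments of $W^H$ and self-similarity, the law of $w_{n,k}$ is that of $2^{(1/2-H)n}(8V-T)$, where $T:=2W^H_{1/2}-W^H_1$ and $V:=\int_0^1\psi(s)W^H_s\,ds$ with $\psi$ equal to $-1,+1,+1,-1$ on the four quarters of $[0,1]$, and $\cov(w_{n,k},w_{n,k'})=2^{(1-2H)n}\rho(|k-k'|)$ for a fixed sequence $\rho$ with $\rho(d)=O(d^{2H-2})$.

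Next I would establish the a.s.\ convergence by a Gladyshev-type second-moment argument, exactly as in \cite{Gladyshev} and in the proof of \cite[Proposition 4.8]{HanSchiedHurst}. Stationarity gives $\bE\big[2^{(2H-2)n}\norm{\bm w_n}_{\ell_2}^2\big]=\var(8V-T)=:\lambda$ for every $n$. Since the $w_{n,k}$ are jointly Gaussian, Wick's formula rewrites the variance of $2^{(2H-2)n}\norm{\bm w_n}_{\ell_2}^2$ as $2\cdot2^{(4H-4)n}\sum_{k,k'}\cov(w_{n,k},w_{n,k'})^2$, and the decay $\rho(d)=O(d^{2H-2})$ makes this $O(2^{-\delta n})$ for some $\delta=\delta(H)>0$ and all $H\in(0,1)$. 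Summability in $n$ and the Borel--Cantelli lemma then give $2^{(2H-2)n}\norm{\bm w_n}_{\ell_2}^2\to\lambda$ $\bP$-a.s., and hence $\norm{\bm w_n}_{\ell_2}/\norm{\bar{\bm\theta}_n}_{\ell_2}\to\kappa=\sqrt{\lambda/(2^{2-2H}-1)}$.

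The crux is the final verification that $\kappa\neq1$. A direct computation gives $\var(T)=2^{2-2H}-1$, which is also the constant in \eqref{gamman(WH) convergence eq}, so that $\kappa=1$ is equivalent to $\lambda=\var(T)$, i.e.\ to $64\var(V)=16\cov(V,T)$, i.e.\ to $4\var(V)=\cov(V,T)$. Both $\var(V)$ and $\cov(V,T)$ are elementary integrals of $|s-t|^{2H}$ against the tent kernel and the step pattern $\psi$, and evaluating them expresses $\lambda-\var(T)$, up to a positive prefactor, as an explicit combination of $1$, $(1/4)^{2H+1}$, $(1/2)^{2H+1}$, $(3/4)^{2H+1}$ with coefficients affine in $H$. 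This function of $H$ vanishes at both endpoints $H=0$ and $H=1$ and is strictly negative on the open interval $(0,1)$ (for instance it equals $-2/3$ at $H=1/2$, giving $\kappa=1/\sqrt3$). The main obstacle is precisely this last point: ruling out interior zeros of a transcendental function that degenerates at both endpoints, which I expect to handle by a convexity and sign analysis of the one-variable map $H\mapsto\lambda-\var(T)$ on $(0,1)$. Once $\kappa<1$ is established, \eqref{(W)} holds with $\kappa\neq1$, which completes the proof.
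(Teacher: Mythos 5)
Your proposal is correct in substance and its numbers check out against the paper (your single-cell functional $8V-T$ satisfies $\var(8V-T)=2^{-4H}\alpha(H)$ in the paper's notation, and your values $\kappa=1/\sqrt3$ at $H=1/2$ and the vanishing of $\lambda-\var(T)$ at $H\in\{0,1\}$ agree with $\alpha(1/2)=4/3$, $\alpha(0)=3$, $\alpha(1)=0$), but your technical route differs from the paper's in two ways. First, where you exploit localization, stationarity of increments, and self-similarity to identify the trace constant $\lambda$ as the variance of one explicit functional on $[0,1]$, the paper instead computes the full covariance matrix of $\bm z_n$ exactly (\Cref{lemma gamma ij}: a Toeplitz matrix $2^{(1-2H)n}G_n$ built from an explicit function $g_H$) and extracts the trace of the covariance of $\bm w_n$ via $\bm r G_2\bm r^\top=\alpha(H)/4$; your derivation is leaner and makes the final inequality conceptually transparent ($\kappa=1\iff 4\var(V)=\cov(V,T)$), while the paper's yields closed-form covariances at all lags. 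Second, for the almost-sure convergence you use a Gladyshev-style second-moment/Wick argument with Chebyshev and Borel--Cantelli, whereas the paper applies a Gaussian concentration inequality of Baudoin--Hairer, which requires a spectral-norm bound $\norm{\Gamma_n}_2=\mathcal{O}(2^{(1-2H)n})$ proved via a Taylor expansion showing $g_H(\varsigma)=\mathcal{O}(\varsigma^{2H-4})$; note that your asserted but unproven decay $\rho(d)=O(d^{2H-2})$ is exactly the analogue of that Taylor computation (and is in fact weaker than the true $d^{2H-4}$ rate, though sufficient, since it gives variance $\mathcal{O}(2^{-\delta n})$ for all $H\in(0,1)$, including $H\ge 3/4$). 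The concentration route buys exponentially small tails; your route buys elementary tools; both feed the same Borel--Cantelli conclusion. On the one point you flag as the main obstacle---strict negativity of $\lambda-\var(T)$ on $(0,1)$---you should know that the paper does not resolve it analytically either: it verifies $\alpha(H)<2^{2+2H}-2^{4H}$ only by a plot (\Cref{alpha beta plot}), so your honest identification of this as the remaining gap, together with the correct anchor values, leaves your argument at the same level of rigor as the published proof; also recall that condition \eqref{(W)} only requires $\kappa\neq1$, so ruling out the single equality $4\var(V)=\cov(V,T)$ for each fixed $H$ would already suffice, which is a weaker task than the global sign analysis you propose.
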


To prove \Cref{lemma fbm decaying}, we need to obtain the asymptotic behavior of the $\norm{\bm w_n}_{\ell_2}$ associated with a fractional Brownian motion $W^H$. Let $\bar{\bm \vartheta}_{n}$, $\bm z_{(n,k)}$, and $\bm z_n$ be defined  as in \eqref{eq vartheta}, \eqref{eq znk}, and  \eqref{eq def zn} for the sample paths of $W^H$. It is clear that $\bm z_n$   is well defined, since the sample paths of $W^H$ satisfy a H\"older condition. Moreover, all three are  Gaussian random vectors. Our next lemma characterizes the covariance structure of the Gaussian vector $\bm z_n$.  To this end, consider the function $$g_H := h_1 + h_2 + h_3,$$
 where the $h_{i}: \bN_0 \rightarrow \bR$ are defined as follows,
\begin{equation}\label{eq delta}
\begin{split}
h_1({\varsigma}) &= -2\left(2{\varsigma}^{2H} + |{\varsigma}-1|^{2H}+|{\varsigma}+1|^{2H}\right) \\
h_2({\varsigma}) &= \begin{cases}
\frac{8}{2H+1}\left(({\varsigma}+1)^{2H+1} -({\varsigma}-1)^{2H+1}\right)
&\quad \text{for} \quad {\varsigma} \ge 1,\\
\frac{16}{2H+1}
&\quad \text{for} \quad {\varsigma} = 0,\\
\end{cases}\\ h_3({\varsigma})&=-\frac{8}{(2H+2)(2H+1)}\left(|{\varsigma}+1|^{2H+2}-2{\varsigma}^{2H+2}+|{\varsigma}-1|^{2H+2}\right).
\end{split}
\end{equation}
Furthermore, we introduce the  Toeplitz matrix $G_n := (g_H(|i-j|))_{1 \le i,j \le 2^n}$.

\begin{lemma}\label{lemma gamma ij}
	For each $n \in \bN$, the random vector $\bm z_{n}$ is a well-defined zero-mean Gaussian vector with covariance matrix 
	\begin{equation*}
	\Gamma_n = (\gamma^{(n)}_{i,j})_{1 \le i,j \le 2^n} =  2^{(1-2H)n}G_n.
	\end{equation*}
\end{lemma}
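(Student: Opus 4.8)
The plan is to derive an explicit integral representation for each component $z^{(n)}_i$ by passing to the limit $k\ua\infty$ in the formulas of \Cref{Lemma xi n}, and then to read the covariance off from the known covariance structure of $W^H$. The qualitative claims come first and are essentially free: by \eqref{eq xi n} and \eqref{eq xi represent} every truncated component $z^{(n,k)}_i$ is a finite linear combination of the jointly Gaussian variables $W^H_{j2^{-n-k-1}}$, hence is centered Gaussian; and $z^{(n,k)}_i\lra z^{(n)}_i$ almost surely as $k\ua\infty$, since the defining series \eqref{eq def zn} converges absolutely for the a.s. H\"older sample paths of $W^H$. An almost-sure (hence distributional) limit of centered Gaussian vectors is a centered Gaussian vector, so $\bm z_n$ is well defined, zero-mean, and Gaussian. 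It then only remains to compute $\gamma^{(n)}_{i,j}=\bE[z^{(n)}_iz^{(n)}_j]$.

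Next I would identify the limiting functional. In \eqref{eq xi represent} the interior weights $2^{1-k+n/2}$ sit on a grid of mesh $2^{-n-k-1}$, and since $2^{1-k+n/2}=2^{3n/2+2}\cdot 2^{-n-k-1}$ they constitute a Riemann sum; the two boundary weights $2^{n/2}(2^{-k}-2)$ converge to $-2^{n/2+1}$. Letting $k\ua\infty$ and using path-continuity of $W^H$, I expect
\begin{equation*}
z^{(n)}_i=2^{3n/2+2}\int_{(i-1)2^{-n}}^{i2^{-n}}W^H_t\,dt-2^{n/2+1}\big(W^H_{(i-1)2^{-n}}+W^H_{i2^{-n}}\big).
\end{equation*}
By the self-similarity of $W^H$, namely $W^H_{2^{-n}u}\dd 2^{-nH}W^H_u$ as processes, this equals in law $2^{(1/2-H)n}\zeta_i$, where $\zeta_i:=4\int_{i-1}^iW^H_t\,dt-2(W^H_{i-1}+W^H_i)$ is the $n=0$ functional. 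Because the scaling acts jointly in $i$, we obtain $\gamma^{(n)}_{i,j}=2^{(1-2H)n}\bE[\zeta_i\zeta_j]$, which already extracts the prefactor $2^{(1-2H)n}$ and reduces the problem to showing $\bE[\zeta_i\zeta_j]=g_H(|i-j|)$.

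Finally I would compute $\bE[\zeta_i\zeta_j]$. The crucial simplification is that the coefficients of $\zeta_i$ sum to zero ($4\cdot 1-2-2=0$), so when $\bE[W^H_sW^H_t]=\tfrac12(s^{2H}+t^{2H}-|s-t|^{2H})$ is integrated against the signed measures defining $\zeta_i$ and $\zeta_j$, the position-dependent pieces $s^{2H}$ and $t^{2H}$ are annihilated by the zero total mass, leaving only $-\tfrac12|s-t|^{2H}$. This simultaneously forces the covariance to depend on $i,j$ through $\varsigma:=|i-j|$ alone, yielding the Toeplitz form. Expanding $\zeta_i\zeta_j$ into endpoint$\times$endpoint, endpoint$\times$integral, and integral$\times$integral contributions, the three groups evaluate (via single and double integrals of $|s-t|^{2H}$) to exactly $h_1(\varsigma)$, $h_2(\varsigma)$, and $h_3(\varsigma)$ from \eqref{eq delta}, whose sum is $g_H(\varsigma)$.

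The main obstacle is this last bookkeeping rather than any conceptual point. In particular the integral$\times$integral term $\int_{i-1}^i\int_{j-1}^j|s-t|^{2H}\,ds\,dt$ must be integrated twice to produce the fourth-order difference of $|\cdot|^{2H+2}$ with the denominators $(2H+1)(2H+2)$ that appear in $h_3$, and the diagonal case $\varsigma=0$ must be handled separately, since there the absolute values change sign inside the range of integration; one then checks that the resulting values match the special entries $h_2(0)=16/(2H+1)$ and the $\varsigma=0$ branches in \eqref{eq delta}. A secondary point to confirm rigorously is the interchange of limit and expectation when passing from $z^{(n,k)}_i$ to $z^{(n)}_i$, which follows from $L^2$-convergence of the truncations, guaranteed by the geometric decay of $\bE[\theta_{m,k}^2]$ weighted by $2^{-3m/2}$.
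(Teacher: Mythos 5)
Your proposal is correct, and it reaches the covariance matrix by a genuinely different route than the paper. The paper never passes to a continuum representation of $z^{(n)}_i$: it works entirely with the truncated vectors $\bm z_{(n,k)}$, writes $\gamma^{(n,k)}_{i,j}$ as a finite double sum of the coefficients from \Cref{Lemma xi n} against $\bE[W^H_sW^H_t]$, uses $\sum_\tau\xi^{(n,k,i)}_\tau=0$ to reduce to the kernel $-\tfrac12|s-t|^{2H}$ and rescales the grid to pull out $2^{(1-2H)n}$, and then splits into $k$-dependent pieces $h_{1,k},h_{2,k},h_{3,k}$ (boundary--boundary, boundary--interior, interior--interior) whose limits as $k\ua\infty$ are identified as Riemann sums converging to the integrals defining $h_1,h_2,h_3$, with $\varsigma=0$ treated separately. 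You instead take the limit $k\ua\infty$ in the representation itself, obtaining $z^{(n)}_i=2^{3n/2+2}\int_{(i-1)2^{-n}}^{i2^{-n}}W^H_t\,dt-2^{n/2+1}\bigl(W^H_{(i-1)2^{-n}}+W^H_{i2^{-n}}\bigr)$ (the constants check out against \eqref{eq xi represent}, whose interior weight $2^{1-k+n/2}$ is exactly $2^{3n/2+2}$ times the mesh $2^{-n-k-1}$), use self-similarity of $W^H$ to extract the prefactor $2^{(1-2H)n}$ and reduce to the $n=0$ functional $\zeta_i$, and compute exact covariance integrals; your three groups (endpoint$\times$endpoint, endpoint$\times$integral, integral$\times$integral) do evaluate to $h_1,h_2,h_3$, including the $\varsigma=0$ branches (e.g.\ your diagonal value agrees with $g_H(0)=-4+\tfrac{16}{2H+1}-\tfrac{16}{(2H+1)(2H+2)}$ used in \eqref{eq Tn fbm}). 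What each buys: your route replaces the heavier $h_{i,k}$ bookkeeping and Riemann-limit verifications by closed-form integrals, and makes the origin of the factor $2^{(1-2H)n}$ conceptually transparent via scaling; its price is that you must justify the a.s.\ and $L^2$ convergence of the truncations (both for the pathwise representation and to pass expectations to the limit), which you correctly flag and which indeed follows from $\bE[\theta_{m,k}^2]=O(2^{(1-2H)m})$, making the tail of \eqref{eq def zn} geometrically small in $L^2$ --- note the paper also implicitly needs $\gamma^{(n,k)}_{i,j}\to\gamma^{(n)}_{i,j}$, i.e.\ the same $L^2$ fact, so you are not incurring an extra obligation there. One incidental benefit of your check: it confirms the sign in the paper's displayed limit for $h_{3,k}(0)$ should be $-\tfrac{16}{(2H+1)(2H+2)}$, consistent with the definition \eqref{eq delta}.
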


\begin{proof}
For $n,k \in \bN$, let us denote 
\begin{equation*}
	\Gamma_{(n,k)} = (\gamma^{(n,k)}_{i,j})_{1\le i,j \le 2^n} := \bE\left[ \bm{z}_{(n,k)}\bm{z}_{(n,k)}^\top\right].
\end{equation*}
It suffices to show that the components $\gamma^{(n,k)}_{i,j}$  converge to $\gamma^{(n)}_{i,j}$  as $k \ua \infty$. Moreover, by symmetry, it suffices to consider the case $j\ge i$. \Cref{Lemma xi n} yields
$$\gamma^{(n,k)}_{i,j} = \bE\big[z^{(n,k)}_{i}z^{(n,k)}_{j}\big]=\sum_{\tau_1 = 0}^{2^{n+k+1}}\sum_{\tau_2 = 0}^{2^{n+k+1}}\xi^{(n,k,i)}_{\tau_1}\xi^{(n,k,j)}_{\tau_2}\bE\Big[W^H_{\frac{\tau_1}{2^{n+k+1}}}\cdot W^H_{\frac{\tau_2}{2^{n+k+1}}}\Big].
$$
We also get from \Cref{Lemma xi n} that 
  $\sum_{\tau = 0}^{2^{n+k+1}}\xi_{\tau}^{(n,k,i)} = 0$ and $\xi_{\tau}^{(n,k,i)}   = 	0 $ for $ \tau \le 2^{k+1}(i-1)-1$ or $\tau \ge 2^{k+1}i+1$. Hence, for  $\varsigma := j-i \ge0$,
\begin{equation*}
\begin{split}
\gamma^{(n,k)}_{i,j}&=-\sum_{\tau_1 = 0}^{2^{n+k+1}}\sum_{\tau_2 = 0}^{2^{n+k+1}}\frac{\xi^{(n,k,i)}_{\tau_1}\xi^{(n,k,j)}_{\tau_2}}{2}\left|\frac{\tau_1-\tau_2}{2^{n+k+1}}\right|^{2H} =  -2^{-2Hn}\sum_{\tau_1 = 0}^{2^{k+1}}\sum_{\tau_2 =0}^{2^{k+1}}\frac{\xi^{(n,k,i)}_{\tau_1}\xi^{(n,k,j)}_{\tau_2}}{2}\left|\frac{\tau_1-\tau_2}{2^{k+1}} + \varsigma\right|^{2H}.
\end{split}
\end{equation*}
Using once again \eqref{eq xi represent} yields that 
\begin{equation}\label{eq expansion}
\gamma^{(n,k)}_{i,j} = 2^{(1-2H)n}\big(h_{1,k}(\varsigma)+h_{2,k}(\varsigma)+h_{3,k}(\varsigma)\big),
\end{equation}
where functions $h_{i,k}$ are defined as follows,
\begin{equation*}
\begin{split}
h_{1,k}(\varsigma) &= -\frac{(2^{-k}-2)^2}{2}\left(2|\varsigma|^{2H} + |\varsigma-1|^{2H}+|\varsigma+1|^{2H}\right),\\
h_{2,k}(\varsigma) &= 2^{-k}(2-2^{-k})\sum_{\tau = 1}^{2^{k+1}-1}\Big(\left|\frac{\tau}{2^{k+1}} + \varsigma\right|^{2H}+ \left|\frac{\tau}{2^{k+1}} + \varsigma-1\right|^{2H} + \left|\frac{-\tau}{2^{k+1}} + \varsigma\right|^{2H}+ \left|\frac{-\tau}{2^{k+1}} + \varsigma+1\right|^{2H}\Big),\\
h_{3,k}(\varsigma) &= -2^{1-2k}\sum_{\tau_1 = 1}^{2^{k+1}-1}\sum_{\tau_2 = 1}^{2^{k+1}-1}\left|\frac{\tau_1-\tau_2}{2^{k+1}} + \varsigma\right|^{2H}.
\end{split}
\end{equation*}
Let us first consider the case $\varsigma \ge 1$. Then,
\begin{equation}\label{eq delta 1}
\lim_{k \ua \infty}h_{1,k}(\varsigma) = -2\left(2\varsigma^{2H} + (\varsigma-1)^{2H}+(\varsigma+1)^{2H}\right).
\end{equation} 
Furthermore, 
\begin{equation*}
\begin{split}
\lim_{k \ua \infty}h_{2,k}(\varsigma)&= \lim_{k \ua \infty}2^{1-k}(2-2^{-k})\sum_{\tau = 1}^{2^{k+1}-1}\Big(\left|\frac{\tau}{2^{k+1}} + \varsigma\right|^{2H}+ \left|\frac{\tau}{2^{k+1}}+\varsigma-1\right|^{2H}\Big)\\&= 8\lim_{k \ua \infty}2^{-k-1}\sum_{\tau = 1}^{2^{k+1}-1}\Big(\left|\frac{\tau}{2^{k+1}} + \varsigma\right|^{2H}+ \left|\frac{\tau}{2^{k+1}}+\varsigma-1\right|^{2H}\Big)\\&= 8 \left(\int_{\varsigma}^{\varsigma+1}t^{2H}\,dt + \int_{\varsigma-1}^{\varsigma}t^{2H}\,dt\right)= \frac{8\left((\varsigma+1)^{2H+1} - (\varsigma-1)^{2H+1}\right)}{2H+1}
\end{split}
\end{equation*}
We also get in a similar way that
\begin{equation*}
\begin{split}
\lim_{k \ua \infty}h_{3,k}(\varsigma) &= -8\lim_{k \ua \infty}2^{-2-2k}\sum_{\tau_1 = 1}^{2^{k+1}-1}\sum_{\tau_2 = 1}^{2^{k+1}-1}\left(\frac{\tau_1-\tau_2}{2^{k+1}} + \varsigma\right)^{2H}=-8 \int_{0}^{1}\int_{0}^{1}\left(t-s+\varsigma\right)^{2H}\,ds\,dt\\& = -\frac{8\left((\varsigma+1)^{2H+2}-2\varsigma^{2H+2}+(\varsigma-1)^{2H+2}\right)}{(2H+2)(2H+1)}.
\end{split}
\end{equation*}   
For the case $\varsigma = 0$, $\lim_{k \ua \infty}h_{1,k}(0) = h_1(0)$ as in \eqref{eq delta 1}. Next, we have 
\begin{equation*}
\lim_{k \ua \infty}h_{2,k}(0) = \lim_{k \ua \infty}2^{2-k}(2-2^{-k})\sum_{\tau = 1}^{2^{k+1}-1}\left|\frac{\tau}{2^{k+1}}\right|^{2H} = 16\int_{0}^{1}t^{2H}\,dt = \frac{16}{2H+1}.
\end{equation*}
Finally, 
\begin{equation*}
\begin{split}
\lim_{k \ua \infty}h_{3,k}(0) &= \lim_{k \ua \infty}-2^{1-2k}\sum_{\tau_1 = 1}^{2^{k+1}-1}\sum_{\tau_2 = 1}^{2^{k+1}-1}\left|\frac{\tau_1-\tau_2}{2^{k+1}}\right|^{2H} = -8\int_{0}^{1}\int_{0}^{1}|t-s|^{2H}\,dsdt \\&= 16\int_{0}^{1}\int_{0}^{t}(t-s)^{2H}\,dsdt = \frac{16}{(2H+1)(2H+2)}.
\end{split}
\end{equation*}
Comparing the above equations with \eqref{eq delta} completes the proof.
\end{proof}

Our next lemma investigates the limit of $2^{n(H-1)}\norm{\bm z_n}_{\ell_2}$ as $n \ua \infty$ by applying a concentration inequality from \cite[Lemma 3.1]{BaudoinHairer}. In the form needed here, it states that if $\bm Z$ is a centered Gaussian random vector with covariance matrix $C$, $T:=\sqrt{\trace{C}}$, and $\gamma:=\sqrt{\norm{C}_2}$, then there exists a universal constant $\kappa$ independent of $C$ such that
\begin{equation}\label{concentration inequality}
\bP\Big[\big|\|\bm Z\|_{\ell_2}-T\big|\ge t\Big]\le \kappa\exp\Big(-\frac{t^2}{4\gamma^2}\Big)\qquad\text{for all $t>0$.}
\end{equation}

\begin{lemma}\label{lemma fbm strong}
	With probability one,
	\begin{equation*}
	\lim_{n \ua \infty}2^{n(H-1)}\norm{\bm z_n}_{\ell_2}=2\sqrt{\frac{1-H}{H+1}}.
	\end{equation*}
\end{lemma}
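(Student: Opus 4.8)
The plan is to deduce the almost-sure limit from the Gaussian concentration inequality \eqref{concentration inequality}, applied to $\bm Z=\bm z_n$ with covariance $C=\Gamma_n=2^{(1-2H)n}G_n$ as provided by \Cref{lemma gamma ij}. This reduces the problem to controlling the two deterministic quantities $T_n:=\sqrt{\trace{\Gamma_n}}$ and $\gamma_n:=\sqrt{\norm{\Gamma_n}_2}$: once I show that $2^{(H-1)n}T_n$ converges to the asserted constant and that $\gamma_n$ is sufficiently small relative to $T_n$, a Borel--Cantelli argument will upgrade concentration into almost-sure convergence.

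First I would compute $T_n$. Since $G_n$ is Toeplitz with diagonal entries $g_H(0)$, its trace equals $2^n g_H(0)$, so $\trace{\Gamma_n}=2^{(2-2H)n}g_H(0)$ and hence $2^{(H-1)n}T_n=\sqrt{g_H(0)}$ for every $n$. It then remains to evaluate $g_H(0)=h_1(0)+h_2(0)+h_3(0)$ from \eqref{eq delta}; a direct computation gives $h_1(0)=-4$, $h_2(0)=16/(2H+1)$, and $h_3(0)=-16/((2H+2)(2H+1))$, which combine to $g_H(0)=4(1-H)/(H+1)$. Thus $2^{(H-1)n}T_n=2\sqrt{(1-H)/(H+1)}$, exactly the claimed limit.

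Next I would bound $\gamma_n$. Because $G_n$ is a symmetric Toeplitz matrix, its operator norm is dominated by the maximal absolute row sum, and hence by $M_H:=\sum_{\varsigma\in\bZ}|g_H(|\varsigma|)|$, uniformly in $n$. The crux is to prove $M_H<\infty$, i.e.\ that $g_H(\varsigma)$ is summable. Here the apparent difficulty is that each of $h_1,h_2,h_3$ individually grows like $\varsigma^{2H}$. Expanding $(\varsigma\pm1)^{q}$ for $q\in\{2H,2H+1,2H+2\}$ as $\varsigma\ua\infty$, I expect the $\varsigma^{2H}$ contributions of $h_1,h_2,h_3$ to cancel (their coefficients being $-8$, $+16$, $-8$) and, after a second bookkeeping step, the $\varsigma^{2H-2}$ contributions to cancel as well, leaving $g_H(\varsigma)=O(\varsigma^{2H-4})$. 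Since $2H-4<-1$, this is summable, so $M_H<\infty$ and $\gamma_n^2=\norm{\Gamma_n}_2\le M_H\,2^{(1-2H)n}$. This cancellation estimate is the main obstacle; everything else is routine.

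Finally I would run the concentration argument. Fixing $\eps>0$ and taking $t=\eps\,2^{(1-H)n}$ in \eqref{concentration inequality}, the exponent becomes $-t^2/(4\gamma_n^2)\le -(\eps^2/(4M_H))\,2^{2(1-H)n}/2^{(1-2H)n}=-(\eps^2/(4M_H))\,2^{n}$, so
$$\bP\Big[\,\big|\norm{\bm z_n}_{\ell_2}-T_n\big|\ge \eps\,2^{(1-H)n}\,\Big]\le \kappa\exp\Big(-\tfrac{\eps^2}{4M_H}2^{n}\Big),$$
which is summable in $n$. By Borel--Cantelli, almost surely $2^{(H-1)n}\big|\norm{\bm z_n}_{\ell_2}-T_n\big|<\eps$ for all large $n$; intersecting the corresponding events over $\eps=1/m$, $m\in\bN$, yields $2^{(H-1)n}\big(\norm{\bm z_n}_{\ell_2}-T_n\big)\to0$ almost surely. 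Combining this with $2^{(H-1)n}T_n\equiv 2\sqrt{(1-H)/(H+1)}$ gives the assertion.
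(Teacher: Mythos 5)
Your proposal is correct and follows essentially the same route as the paper's proof: computing $\trace{\Gamma_n}=2^{(2-2H)n}g_H(0)$ with $g_H(0)=4(1-H)/(H+1)$, bounding the spectral norm of the symmetric Toeplitz matrix by its row sums and reducing to the Taylor-expansion cancellation $g_H(\varsigma)=\mathcal{O}(\varsigma^{2H-4})$ (your coefficients $-8,+16,-8$ for the $\varsigma^{2H}$ terms, and the vanishing of the $\varsigma^{2H-2}$ terms, are exactly what the paper verifies via the expansions of $h_1,h_2,h_3$), and concluding with the concentration inequality \eqref{concentration inequality} plus Borel--Cantelli. The only differences are cosmetic --- you majorize by the full series $\sum_{\varsigma}|g_H(\varsigma)|$ instead of the partial sums up to $2^n$, and you make the intersection over $\eps=1/m$ explicit where the paper leaves it implicit.
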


\begin{proof}
It follows from \eqref{eq delta} that	
\begin{equation}\label{eq Tn fbm}
	\begin{split}
	\sqrt{\trace{\Gamma_n}} &= \sqrt{\sum_{i = 1}^{2^n} \gamma^{(n)}_{i,i}} = 2^{(1-H)n}\sqrt{g_H(0)}=2^{(1-H)n}\sqrt{-4 + \frac{16}{2H+1} - \frac{16}{(2H+1)(2H+2)}} \\&=  2^{(1-H)n+1}\sqrt{\frac{1-H}{H+1}}.
	\end{split}
	\end{equation}
	Let $\|\cdot\|_p$ denote the $\ell_p$-induced operator norm. As shown in \Cref{lemma gamma ij}, the covariance matrix $\Gamma_n$ is a symmetric Toeplitz matrix and so $\gamma^{(n)}_{i,j} = \gamma^{(n)}_{j,i} = \gamma^{(n)}_{1,|j-i+1|}$. Hence, we have $\norm{\Gamma_n}_1 = \norm{\Gamma_n}_{\infty}$, and this gives
	\begin{equation}\label{eq Gamma 2 norm}
	\begin{split}
	\norm{\Gamma_n}_2 &\le \sqrt{\norm{\Gamma_n}_1\norm{\Gamma_n}_{\infty}} = \norm{\Gamma_n}_1 = \max_{1 \le j \le 2^n}\sum_{i = 1}^{2^n}|\gamma^{(n)}_{i,j}| = \max_{1 \le j \le 2^n}\sum_{i = 1}^{2^n}|\gamma^{(n)}_{1,|j-i+1|}|\\&\le 2\sum_{i=1}^{2^{n}}|\gamma^{(n)}_{1,i}| \le 2^{(1-2H)n+1}\sum_{\varsigma = 0}^{2^n}|g_H(\varsigma)|,
	\end{split}
	\end{equation}
	where the first inequality is a well-known bound for the spectral norm of a matrix; see, e.g., \cite[proof of Theorem 2.3]{Viitasaari}.
		In the next step, we will show that $g_H(\varsigma) = \mathcal{O}(\varsigma^{2H-4})$ as $\varsigma \ua \infty$.   For $\varsigma \ge 3$, Taylor expansion yields $u_1 \in (\varsigma-1, \varsigma)$ and $u_2 \in (\varsigma, \varsigma+1)$ such that
	\begin{equation*}
	\begin{split}
	(\varsigma - 1)^{2H}  &= \varsigma^{2H} + \sum_{i = 1}^{3}\frac{(-1)^i\prod_{j = 1}^{i}(2H-j+1)}{i!} \varsigma^{2H-i}+ \frac{\prod_{j = 1}^{4}(2H-j+1)}{4!}u_1^{2H-4},\\
	(\varsigma + 1)^{2H}  &= \varsigma^{2H} + \sum_{i = 1}^{3}\frac{\prod_{j = 1}^{i}(2H-j+1)}{i!}\varsigma^{2H-i} + \frac{\prod_{j = 1}^{4}(2H-j+1)}{4!}u_2^{2H-4}.
	\end{split}
	\end{equation*}
 Note that 
	\begin{equation*}
	\sum_{i = 1}^{3}\frac{\left((-1)^i + 1\right)\prod_{j = 1}^{i}(2H-j+1)}{i!} \varsigma^{2H-i}= 2H(2H-1)\varsigma^{2H-2},
	\end{equation*}
	and therefore, we have 
	\begin{equation}\label{eq taylor h1}
	h_1(\varsigma) = -2\left(4 \varsigma^{2H} + 2H(2H-1)\varsigma^{2H-2} + \frac{\prod_{j = 1}^{4}(2H-j+1)}{4!}(u_1^{2H-4}+u_2^{2H-4})\right).
	\end{equation}
	In the same way, we obtain
	\begin{align}
	h_2(\varsigma)  &= 8\left(2\varsigma^{2H} + \frac{2(2H)(2H-1)}{3!}\varsigma^{2H-2} +\frac{\prod_{j = 1}^{4}(2H-j+1)}{5!} (u_3^{2H-4}+u_4^{2H-4})\right),\label{eq taylor h2}\\
	h_3(\varsigma)  &= -8\left(\frac{2}{2!}\varsigma^{2H} + \frac{2(2H)(2H-1)}{4!}\varsigma^{2H-2} +\frac{\prod_{j = 1}^{4}(2H-j+1)}{6!}(u_5^{2H-4}+u_6^{2H-4}) \right),\label{eq taylor h3}
	\end{align}
	for some $u_3,u_5 \in (\varsigma-1, \varsigma)$ and $u_4, u_6 \in (\varsigma, \varsigma+1)$. Since $u_i \ge \varsigma - 1$,  we get $u_i^{2H-4} \le (\varsigma-1)^{2H-4}$. Summing up \eqref{eq taylor h1}, \eqref{eq taylor h2} and \eqref{eq taylor h3} yields that $g_H(\varsigma) = \mathcal{O}(\varsigma^{2H-4})$ as $\varsigma \ua \infty$, which with \eqref{eq Gamma 2 norm} implies that
	\begin{equation}\label{eq fbm variance}
	\norm{\Gamma_n}_2  \le 2^{(1-2H)n}\left(g_H(0)+g_H(1)+\sum_{\varsigma = 2}^{2^n}g_H(\varsigma)\right) = \mathcal{O}(2^{(1-2H)n}) \quad \text{for} \quad H \in (0,1).
	\end{equation}
	Therefore, for each $H \in (0,1)$, there exist $c_H\ge 0$ and $n_{c,H} \in \bN$ such that for $n \ge n_{c,H}$, we have $\norm{\Gamma_n}_2 \le c_H$. Thus, for $n \ge n_{c,H}$ and any given $\varepsilon > 0$, the concentration inequality \eqref{concentration inequality} gives	\begin{equation*}
	\begin{split}
	&\bP\left(\left|2^{n(H-1)}\norm{\bm z_n}_{\ell_2}- 2\sqrt{\frac{1-H}{H+1}}\right| \ge \varepsilon\right)= \bP\left(\left|\norm{\bm z_n}_{\ell_2}- 2^{(1-H)n+1}\sqrt{\frac{1-H}{H+1}}\right| \ge 2^{(1-H)n}\varepsilon\right)\\ &\le \kappa\exp\left(-\frac{2^{n(2-2H)}\varepsilon^2}{\norm{\Gamma_n}_2}\right)= \kappa\exp(-c^{-1}_H2^{n}\varepsilon^2)
	\end{split}
	\end{equation*}
	The latter expression is summable in $n$ for every $\eps>0$, and so a Borel--Cantelli argument yields that $2^{n(H-1)}\norm{\bm z_n}_{\ell_2} \rightarrow 2\sqrt{\frac{1-H}{H+1}}$ with probability one as $n \ua \infty$.\end{proof}

In the following lemma, we will derive the asymptotic behavior of the norms of $\bm w_n$ defined in~\eqref{eq vector w}.
 
\begin{lemma}\label{lemma fbm wn}
	With probability one, we have 
	\begin{equation*}
	\lim_{n \ua \infty}2^{n(H-1)}\norm{\bm w_n}_{\ell_2}=2^{-2H}\sqrt{\alpha(H)}, 
	\end{equation*}
	where $\alpha(H) = g_H(0) - \frac{1}{2}g_H(1) - g_H(2) +  \frac{1}{2}g_H(3)$.
\end{lemma}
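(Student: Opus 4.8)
The plan is to follow the same route as in the proof of \Cref{lemma fbm strong}, exploiting that $\bm w_n = Q_n\bm z_{n+2}$ is a centered Gaussian vector by \eqref{eq vector w} and applying the concentration inequality \eqref{concentration inequality} to its covariance matrix $C_n := Q_n\Gamma_{n+2}Q_n^\top$. To this end I need two ingredients: the exact value of $\trace{C_n}$ and an upper bound on the spectral norm $\norm{C_n}_2$.

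For the trace, I would first read off from \eqref{eq Cm} and \eqref{etaij def} that the $i$-th component of $\bm w_n$ is the localized combination $w_n^{(i)} = \bm r\cdot(z^{(n+2)}_{4(i-1)+1},\dots,z^{(n+2)}_{4i})^\top$ with $\bm r=\tfrac14(-1,1,1,-1)$; in particular, distinct components of $\bm w_n$ involve disjoint blocks of four consecutive entries of $\bm z_{n+2}$. Since $\Gamma_{n+2}=2^{(1-2H)(n+2)}G_{n+2}$ is Toeplitz by \Cref{lemma gamma ij}, the $4\times4$ covariance block of each such quadruple equals $2^{(1-2H)(n+2)}M$, where $M:=(g_H(|a-b|))_{1\le a,b\le4}$, independently of $i$. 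Hence every diagonal entry of $C_n$ equals $2^{(1-2H)(n+2)}\,\bm r M\bm r^\top$, and a direct evaluation of this $4\times4$ quadratic form collapses to $\tfrac14\alpha(H)$ with $\alpha(H)=g_H(0)-\tfrac12 g_H(1)-g_H(2)+\tfrac12 g_H(3)$. Summing the $2^n$ identical diagonal entries and taking a square root gives $\sqrt{\trace{C_n}}=2^{(1-H)n}2^{-2H}\sqrt{\alpha(H)}$, so that $2^{(H-1)n}\sqrt{\trace{C_n}}$ is already exactly equal to the asserted limit for every $n$.

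For the spectral norm I would note that $Q_nQ_n^\top=\tfrac14 I_{2^n}$ (each row of $Q_n$ has squared norm $\norm{\bm r}^2=\tfrac14$, and the rows have disjoint supports), so $\norm{Q_n}_2=\tfrac12$ and therefore $\norm{C_n}_2\le\tfrac14\norm{\Gamma_{n+2}}_2$. The bound $\norm{\Gamma_m}_2=\mathcal O(2^{(1-2H)m})$ was already obtained in \eqref{eq fbm variance}, which yields $\norm{C_n}_2=\mathcal O(2^{(1-2H)n})$.

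With these two facts in hand, the conclusion follows exactly as in \Cref{lemma fbm strong}: applying \eqref{concentration inequality} to $\bm w_n$ with $t=2^{(1-H)n}\varepsilon$ turns the bound on $\norm{C_n}_2$ into an exponent of order $-2^n\varepsilon^2$, which is summable in $n$, and a Borel--Cantelli argument gives the almost sure convergence. The only real work is the trace computation in the second step, namely extracting the precise constant $\alpha(H)$ from the quadratic form $\bm r M\bm r^\top$; here the disjointness of the supporting blocks is what makes the off-diagonal part of $\Gamma_{n+2}$ irrelevant for the trace and keeps that computation a finite, $4\times4$ one.
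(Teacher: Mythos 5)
Your proposal is correct and follows essentially the same route as the paper's proof: computing $\trace{(Q_n\Gamma_{n+2}Q_n^\top)}$ via the $4\times4$ diagonal blocks $2^{(1-2H)(n+2)}G_2$ and the quadratic form $\bm r G_2\bm r^\top=\alpha(H)/4$, bounding the spectral norm through $\norm{Q_n}_2$ together with \eqref{eq fbm variance}, and concluding by the concentration inequality \eqref{concentration inequality} and Borel--Cantelli. The only (immaterial) discrepancy is the norm constant: your computation $Q_nQ_n^\top=\tfrac14 I_{2^n}$, hence $\norm{Q_n}_2=\tfrac12$, is in fact the correct operator norm (the paper's quoted value $\norm{Q_n}_2=1/4$ is rather $\norm{Q_nQ_n^\top}_2$), but both give $\norm{C_n}_2\le\tfrac14\norm{\Gamma_{n+2}}_2=\mathcal O(2^{(1-2H)n})$ and the same conclusion.
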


\begin{proof}
	Let us denote the covariance matrix of $\bm w_n$ by $\Phi_n := (\phi^{(n)}_{i,j})_{i,j = 1}^{2^{n}} = \mathscr{C}_n \Gamma_{n+2}\\mathscr{C}_{n}^\top$. We first show that 
	\begin{equation*}
	\trace{\Phi_n}= 2^{(2-2H)n-4H}\alpha(H).
	\end{equation*}
	For the fixed $n \in \bN$, consider the following partition of the covariance matrix $\Gamma_{n+2}$, 
	\begin{equation}\label{eq Gamma n partition 2}
	\Gamma_{n+2} = \left[\begin{array}{c c c c}
	\Gamma^{\ast}_{1,1} & \Gamma^{\ast}_{1,2} &\cdots &\Gamma^{\ast}_{1,2^{n}}\\
	\Gamma^{\ast}_{2,1} & \Gamma^{\ast}_{2,2} &\cdots &\Gamma^{\ast}_{2,2^{n}}\\
	\vdots &\vdots & \ddots & \vdots\\
	\Gamma^{\ast}_{2^{n},1} & \Gamma^{\ast}_{2^{n},2} &\cdots &\Gamma^{\ast}_{2^{n},2^{n}}
	\end{array}\right],
	\end{equation}
	where $\Gamma^{\ast}_{i,j}$ are $4 \times 4$-dimensional matrices. In particular, for $1 \le i \le 2^{n}$, the diagonal partitioned matrices $\Gamma^{\ast}_{i,i}$ are of the form: 
	\begin{equation*}
	\Gamma^{\ast}_{i,i} = 2^{(1-2H)(n+2)}G_2 = 2^{(1-2H)(n+2)}\left[\begin{array}{c c c c}
	g_H(0) & g_H(1) & g_H(2) & g_H(3)\\
	g_H(1) & g_H(0) & g_H(1) & g_H(2)\\
	g_H(2) & g_H(1) & g_H(0) & g_H(1)\\
	g_H(3) & g_H(2) & g_H(1) & g_H(0)\\
	\end{array}
	\right].
	\end{equation*}
	Recalling the definition of $\bm\eta_{i,j}$ from \eqref{etaij def}, we get 
	\begin{equation*}
	\begin{split}
	\phi^{(n)}_{i,i}  &= \left(
	\bm\eta_{i,1}, \bm\eta_{i,2}, \dots,    \bm\eta_{i,2^{n}}	\right)\Gamma_{n+1}\left(
	\bm\eta_{i,1},  \bm\eta_{i,2},\dots , \bm\eta_{i,2^{n}}
	\right)^\top 
	\\&= \left(
		\bm 0_{1\times 4},  \dots, \bm\eta_{i,i},   \dots,\bm 0_{1\times 4}
	\right)\Gamma_{n+1}\left(
	\bm 0_{1\times 4},  \dots, \bm\eta_{i,i},  \dots,  \bm 0_{1\times 4}
	\right)^\top \\&= \bm r\Gamma^{\ast}_{i,i}\bm r^\top = 2^{(1-2H)(n+2)}\bm rG_2\bm r^\top.
	\end{split}
	\end{equation*}
		To evaluate the final term in the above equation, we have 
	\begin{equation*}
	\bm rG_2\bm r^\top  =\frac{1}{16}\bm 1_{1 \times 4}\left[\begin{array}{c c c c}
	g_H(0) & -g_H(1) & -g_H(2) & g_H(3)\\
	-g_H(1) & g_H(0) & g_H(1) & -g_H(2)\\
	-g_H(2) & g_H(1) & g_H(0) & -g_H(1)\\
	g_H(3) &- g_H(2) & -g_H(1) & g_H(0)\\
	\end{array}
	\right]\bm 1_{4 \times 1} = \frac{\alpha(H)}{4}.
	\end{equation*}
	Therefore, we have $\phi^{(n)}_{i,i} = 2^{(1-2H)(n+2)-2}\alpha(H)$ for every $1 \le i \le 2^n$, and 
	\begin{equation*}
	\trace{\Phi_n} = \sum_{i = 1}^{2^n}\phi^{(n)}_{i,i} = 2^{(2-2H)n-4H}\alpha(H).
	\end{equation*}
	In our next step, we shall show that $2^{2n(H-1)}\norm{\bm w_n}_{\ell_2}$ converges to $2^{-2H}\sqrt{\alpha(H)}$. First of all, it follows from \cite[Lemma 3.4]{HanSchiedMatrix} that $\norm{\mathscr{C}_n }_2 = 1/2$, and due to \eqref{eq fbm variance}, there exists a constant $c_H > 0$ such that  
	\begin{equation*}
	\norm{\Phi_n}_2 \le \norm{\mathscr{C}_n }^2_2 \norm{\Gamma_{n+2}}_2 \le c_H2^{(1-2H)n}.
	\end{equation*}
	For any given $\varepsilon > 0$, the concentration inequality \eqref{concentration inequality}
 yields that 
	\begin{equation*}
	\begin{split}
	&\bP\left(\left|2^{n(H-1)}\norm{\bm w_n}_{\ell_2}- 2^{n(H-1)}\sqrt{	\trace{\Phi_n}}\right| \ge \varepsilon\right) = \bP\left(\left|2^{n(H-1)}\norm{\bm w_n}_{\ell_2}- \sqrt{2^{2-4H}\alpha(H)}\right| \ge \varepsilon\right)  \\& \le \kappa\exp\left(-\frac{2^{n(2-2H)}\varepsilon^2}{\norm{\Phi_n}_2}\right)= \kappa\exp(-c^{-1}_H2^{n}\varepsilon^2).
	\end{split}
	\end{equation*}
From here, a Borel--Cantelli argument yields the assertion.
\end{proof}

\begin{proof}[Proof of \Cref{lemma fbm decaying}]
By  \eqref{gamman(WH) convergence eq} and \Cref{lemma fbm wn},
	\begin{equation*}
	\lim_{n \ua \infty}\frac{\norm{\bm w_n}_{\ell_2}}{\norm{\bar{\bm\theta}_{n}}_{\ell_2}} = \lim_{n \ua \infty}\sqrt{\frac{2^{n(2H-2)}\norm{\bm w_n}^2_{\ell_2}}{2^{n(2H-2)}\norm{\bar{\bm\theta}_{n}}_{\ell_2}^2}} = \sqrt{\frac{2^{-4H}\alpha(H)}{2^{2-2H}-1}} = \sqrt{\frac{\alpha(H)}{2^{2+2H}-2^{4H}}} < 1.
	\end{equation*}
See \Cref{alpha beta plot} for an illustration of the latter inequality.
\end{proof}

\begin{figure}[H]
	\centering
	\includegraphics[width=8cm]{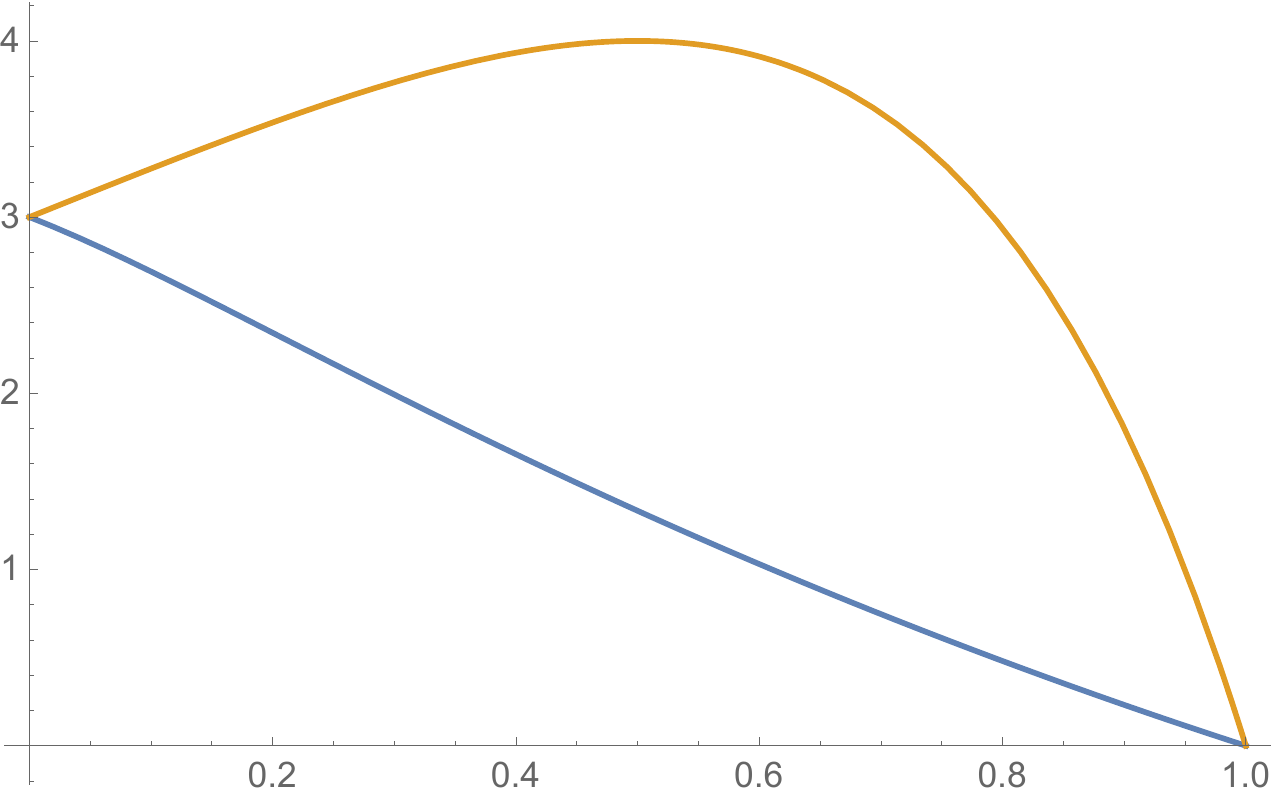}
	\caption{Plot of functions $\alpha(H)$ (blue) and $\beta(H):=2^{2+2H}-2^{4H}$ (orange) as functions of $H\in(0,1)$.}\label{alpha beta plot}
\end{figure}

\begin{proof}[Proof of \Cref{thm main rv} for $X=W^H$] It was shown in \cite[Theorem 5.1]{HanSchiedHurst} that $W^H$ admits $\bP$-a.s.~the roughness exponent $H$. It now follows from  \Cref{thm roughness composition} that the sample paths of $g(W^H)$
	also admit the roughness exponent $H$. 
	
	Now we prove that $\wh\cR_n(Y)\to H$ with probability one. To this end, we use the following result by Gladyshev~\cite[Theorem 1]{Gladyshev} on the convergence of the weighted quadratic variation of $W^H$,
		$$\lim_{n\ua\infty}2^{(2H-1)n}\sum_{k = 0}^{2^n-1}\big(W^H_{\frac{k+1}{2^n}} - W^H_{\frac{k}{2^n}} \big)^2= 1\quad\text{$\bP$-a.s.}
		$$
	Hence, if $\bar{\bm\theta}_{n}=(\theta_{n,k})$ are the Faber--Schauder coefficients of the sample paths of $W^H$, then  \cite[Proposition 4.1]{HanSchiedHurst} yields that 
	\begin{equation}\label{gamman(WH) convergence eq}
		2^{(2H-2)n} \norm{\bar{\bm\theta}_{n}}_{\ell_2}^2=2^{(2H-2)n}\sum_{k = 0}^{2^n-1}\theta^2_{n,k} \lra 2^{2-2H}-1\quad\text{$\bP$-a.s.}
	\end{equation}
	\Cref{gamma vs sn lemma} now implies that condition (a) of \Cref{Prop tilde H z} is satisfied. Condition (b) of that proposition is also satisfied, because the sample paths of $W^H$ are $\bP$-a.s.~H\"older continuous for every exponent $\alpha<H$; see, e.g., \cite[Section 1.16]{Mishura}. Hence, we may apply \Cref{Prop tilde H z}  and so $\wh\cR_n(X)\to H$ follows. 
\end{proof}

\begin{proof}[ Proof of \Cref{thm main rv} for $X = x_0 + \int_0^\cdot \xi_s \, ds + W^H$]
	 We first show that the process $g(X)$ admits roughness exponent $H$. Since the drift $\calU(t) := x_0 + \int_{0}^{\cdot} \xi_s \, ds$ is absolutely continuous, and hence, of bounded variation, it admits the roughness exponent $R = 1$ $\bP$-a.s.  It then follows from \Cref{lemma roughness sum} that $X$ admits the roughness exponent $H$ and from \Cref{thm roughness composition} that $g(X)$ admits the roughness exponent~$H$. 
	
	To show that $\wh \cR_n(\int_0^\cdot g(X_s) \,ds) \rightarrow H$, rewriting \eqref{eq new cond} yields that 
	\begin{equation*}
		\begin{split}
			2^{n(2H-1)}\int_0^1 \<\calU\>^{(2)}_n\left(\frac{s}{2^n}\right)\,ds &= 2^{n(2H-1)}\int_0^1 \sum_{k = 0}^{2^n-1} \left|\calU\left(\frac{k+1 +s }{2^n}\right) - \calU\left(\frac{k+s}{2^n}\right) \right|^2 \,ds \\&= 2^{n(2H-1)}\int_0^1 \sum_{k = 0}^{2^n-1}\left|\int_{\frac{k+s}{2^n}}^{\frac{k+s+1}{2^n}} \xi_u \, du\right|^2 \, ds
		\end{split}
	\end{equation*}
	converges to zero as $n \ua \infty$ $\bP$-a.s. Moreover, it follows from \Cref{gamma vs sn lemma} and \Cref{lemma fbm decaying} that  $W^H$ satisfies condition \ref{Prop tilde H z part a} of \Cref{Prop tilde H z} with probability one. Hence, it follows from \Cref{lemma add} that for a typical sample path $x$ of $X$ and its antiderivative $y$, 
	\begin{equation*}
		0 < \liminf_{n \ua \infty} 2^{n(2H-2)} \sum_{k = 0}^{2^n-1} \left(\vartheta^y_{n,k}\right)^2 \le \limsup_{n \ua \infty} 2^{n(2H-2)} \sum_{k = 0}^{2^n-1} \left(\vartheta^y_{n,k}\right)^2 < \infty.
	\end{equation*}
Next, recall our assumption that the sample path of $\xi$ belong to  $ L^p([0,1])$ for some $p>5/(5-2H)$. Therefore,  Jensen's inequality yields that 
$$	\left|\calU(t) - \calU(s)\right| \le \left(\int_s^t |\xi_s|^p\, ds \right)^{\frac{1}{p}} (t - s)^{1 - \frac{1}{p}} \le \norm{\xi}_{L^p[0,1]}(t - s)^{1 - \frac{1}{p}}. 
$$
Thus, the drift $\calU$ is $\bP$-a.s.~H\"{o}lder continuous with exponent $\alpha = 1- 1/p > 2H/5$. In addition, $W^H$ is $\bP$-a.s.~H\"{o}lder continuous for any exponent less than $H$. Hence, the sample paths of $X$ are $\bP$-a.s.~H\"{o}lder continuous for any given exponent $\beta \in( 2H/5,H)$. Hence, it follows from \Cref{Prop tilde H z} that $\wh \cR_n(\int_0^\cdot g(X_s) \,ds) \rightarrow H$ as $n \ua \infty$. This completes the proof.
\end{proof}

\subsection{Proof of \Cref{Thm Rough Bergomi} and \Cref{Cor Rough Bergomi}}\label{Section Proof Bergomi}
The proof of the case $\nu > 0$ is straightforward: Since the laws of $Z^{H,\nu}/\Gamma(H + 1/2)$ and $W^H$ are equivalent according to \cite[Theorem 20]{PicardRepresentationFBM},  we can simply replace  $W^H$  with $Z^{H,\nu}$ in the $\bP$-almost sure statements of \Cref{thm main rv}.

 To prove the case $\nu = 0$, recall the the Mandelbrot--van Ness representation of fractional Brownian motion,
\begin{equation*}
	W^H_t = \frac{1}{\Gamma(H + \frac{1}{2})}\left(\int_{-\infty}^{0}\left[(t-u)^{H-\frac{1}{2}} - (-u)^{H-\frac{1}{2}}\right]\,dB_u + \int_0^t (t-u)^{H-\frac{1}{2}}\,dB_u\right), \qquad t \ge 0,
\end{equation*}
where $B$ is a two-sided standard Brownian motion. Using this representation, we can decompose \eqref{eq RL with drift} as follows, 
\begin{equation*}
		X_t = \left(x_0 + \int_0^t \xi_s \, ds\right) + \Gamma\left(H + \frac{1}{2}\right)W^H_t - \wt X^H_t,
\end{equation*}
where the process $(\wt X^H_t)_{t \in [0,1]}$ defined by
\begin{equation}\label{eq XH 2}
	\wt X^H_t := \Gamma\left(H+\frac{1}{2}\right) W^H_t - Z^H_t = \int_{-\infty}^{0}\left[(t-u)^{H-\frac{1}{2}} - (-u)^{H-\frac{1}{2}}\right]\,dB_u, \qquad t \in [0,1].
\end{equation}
It was shown in \cite[Theorem 17]{PicardRepresentationFBM} that the process \(\wt X^H\) is well defined. In Lemmas~\ref{lemma XH QV} and~\ref{lemma XH roughness}, we verify that \(\wt X^H\) satisfies the assumptions of Propositions~\ref{lemma roughness sum} and~\ref{lemma add}. As a consequence, the process X admits the roughness exponent $H$, and the estimator \(\wh \cR_n\!\left(\int_0^\cdot X_s\,ds\right)\) converges to $H$ as \(n \ua \infty\). Finally, the desired result will follow by applying Propositions~\ref{thm roughness composition} and~\ref{Prop tilde H z}.

We begin with the following lemma, which verifies condition \ref{item b lemma add} of \Cref{lemma add} for the process $\wt X^H$.

\begin{lemma}\label{lemma XH QV}
	Let $\wt X^H$ be as in \eqref{eq XH 2}. Then, with probability one, we have
	\begin{equation*}
		\lim_{n \ua \infty} 2^{n(2H-1)}\int_0^{1}\<\wt X^H\>^{(2)}_n\left(\frac{s}{2^n}\right)\,ds = 0.
	\end{equation*}
\end{lemma}
\begin{proof}
	We have
	\begin{equation}\label{eq XH decomposition}
		\scalemath{0.9}{	\begin{split}
				2^{n(2H-1)}\int_0^{1}\<\wt X^H\>^{(2)}_n\left(\frac{s}{2^n}\right)\,ds &= 2^{n(2H-1)}\int_0^{1}\sum_{k = 0}^{2^n-1}\left(\wt X^H_{\frac{k+1+s}{2^n}} - \wt X^H_{\frac{k+s}{2^n}}\right)^2\,ds\\&= 2^{n(2H-1)}\int_0^{1}\left(\wt X^H_{\frac{s+1}{2^n}} - \wt X^H_{\frac{s}{2^n}}\right)^2\,ds + 2^{n(2H-1)}\int_0^{1}\sum_{k = 1}^{2^n-1}\left(\wt X^H_{\frac{k+1+s}{2^n}} - \wt X^H_{\frac{k+s}{2^n}}\right)^2\,ds.
		\end{split}}
	\end{equation}
	In the following, we will prove this lemma by showing that the two rightmost terms in \eqref{eq XH decomposition} converge to zero with probability one. To this end, note that for $0 \le s \le t < \infty$, we have 
	\begin{equation*}
		\wt X^H_t - \wt X^H_s = \int_{-\infty}^{0}\left[(t-u)^{H - \frac{1}{2}} - (s-u)^{H-\frac{1}{2}}\right]dB_u = \left(H-\frac{1}{2}\right)^2\int_{-\infty}^{0}\left(\int_s^t(\tau - u)^{H-\frac{3}{2}}\,d\tau \right)\,dB_u. 
	\end{equation*}
	Applying the Cauchy--Schwarz inequality then yields 
	\begin{equation}\label{eq XH second moment}
		\begin{split}
			\bE\left[\left(\wt X^H_t - \wt X^H_s\right)^2\right] &= \left(H-\frac{1}{2}\right)^2\int_{-\infty}^0 \left(\int_s^t(\tau - u)^{H-\frac{3}{2}}\,d\tau \right)^2\,du\\&\le \left(H-\frac{1}{2}\right)^2\cdot (t-s)\int_{-\infty}^0 \int_s^t (\tau - u)^{2H-3}\, d\tau du\\&= \left(H-\frac{1}{2}\right)^2\cdot (t-s)\int_s^t\int_{-\infty}^{0} (\tau - u)^{2H-3}\,dud\tau \\&= \frac{\left(H-\frac{1}{2}\right)^2}{2-2H}\cdot (t-s)\int_s^t \tau^{2H-2}\, d\tau.
		\end{split}
	\end{equation}
	In particular, the first identity in \eqref{eq XH second moment} yields that the function $s \mapsto \bE\big[\big(\wt X^H_{\frac{1+s}{2^n}} - \wt X^H_\frac{s}{2^n}\big)^2\big]$ is decreasing. Let us now denote 
	\begin{equation*}
		c_H := \int_0^{1}\left((1-u)^{H-\frac{1}{2}} - (-u)^{H-\frac{1}{2}}\right)^2\,du.
	\end{equation*}
	It was shown in \cite[Proposition 2.3]{Nourdin2012} that $c_H < \infty$ for all $H \in (0,1)$. Furthermore, the process $\wt X^H$ is $H$-self-similar, i.e., $(\wt X^H_{\alpha t}) \overset{d}{=} (\alpha^{H} \wt X^H_t)$. This gives 
	\begin{equation*}
		\begin{split}
			\bE\left[\left(\wt X^H_{\frac{1+s}{2^n}} - \wt  X^H_\frac{s}{2^n}\right)^2\right]  &\le \bE\left[\left(\wt X^H_{\frac{1}{2^n}} - \wt X^H_0\right)^2\right] = \bE\left[\left(\wt X^H_{\frac{1}{2^n}}\right)^2\right] = 2^{-2Hn}\bE\left[\left(\wt X^H_1\right)^2\right]\\&= 2^{-2Hn}\bE\left[\left(\int_{-\infty}^{0}\left[(1-u)^{H-\frac{1}{2}} - (-u)^{H-\frac{1}{2}}\right]\,dB_u\right)^2\right]\\&= 2^{-2Hn} \int_0^{1}\left((1-u)^{H-\frac{1}{2}} - (-u)^{H-\frac{1}{2}}\right)^2\,du = 2^{-2Hn}c_H,
		\end{split}
	\end{equation*}
	Hence, applying the Fubini--Tonelli theorem, we have 
	\begin{equation*}
		2^{n(2H-1)}\bE\left[\int_0^1\left(\wt X^H_{\frac{1+s}{2^n}} - \wt X^H_\frac{s}{2^n}\right)^2\,ds\right] = 2^{n(2H-1)}\int_{0}^{1}\bE\left[\left(\wt X^H_{\frac{1+s}{2^n}} - \wt X^H_\frac{s}{2^n}\right)^2\right]\,ds \le 2^{-n}c_H,
	\end{equation*}
	and thus, $2^{n(2H-1)}\bE\big[\int_0^1\big(\wt X^H_{\frac{1+s}{2^n}} - \wt X^H_\frac{s}{2^n}\big)^2\,ds\big] \rightarrow 0$ as $n \ua \infty$. 
	
	Next, since $\wt X^H_t$ is a centered Gaussian process, we have  
	\begin{equation*}
		\bE\left[\left(\wt X^H_{\frac{1+s}{2^n}} - \wt X^H_\frac{s}{2^n}\right)^4\right] = 3\cdot \bE\left[\left(\wt X^H_{\frac{1+s}{2^n}} - \wt X^H_\frac{s}{2^n}\right)^2\right]^2 \le 3\cdot\bE\left[\left(\wt X^H_{\frac{1}{2^n}} - \wt X^H_0\right)^2\right]^2 \le  2^{-4Hn+2}c^2_H.
	\end{equation*}
	Hence, by a direct application of the Cauchy--Schwarz inequality and the Fubini--Tonelli theorem, 
	\begin{equation*}
		\begin{split}
			\var\left[2^{n(2H-1)}\int_0^{1}\left(\wt X^H_{\frac{s+1}{2^n}} - \wt X^H_{\frac{s}{2^n}}\right)^2\,ds\right] &\le2^{n(4H-2)} \bE\left[\left(\int_0^{1}\left(\wt X^H_{\frac{s+1}{2^n}} - \wt X^H_{\frac{s}{2^n}}\right)^2\,ds\right)^2\right]\\&\le 2^{n(4H-2)}\bE\left[\int_0^{1}\left(\wt X^H_{\frac{s+1}{2^n}} - \wt X^H_{\frac{s}{2^n}}\right)^4\,ds\right]  \le 2^{-2n+2}c_H^2.
		\end{split}
	\end{equation*}
	Clearly, the rightmost term converges to zero exponentially fast. Thus, by fast $L_2$-convergence, $2^{n(2H-1)}\int_0^{1}\big(\wt X^H_{\frac{s+1}{2^n}} - \wt X^H_{\frac{s}{2^n}}\big)^2\,ds$ converges to zero with probability one as $n \ua \infty$. 
	
	Next, we aim to show that the second term in \eqref{eq XH decomposition} also converges to zero with probability one. To this end, we first  show that for $\gamma > 0$ and almost every sample path $x$ of $\wt X^H$, there exists $\kappa_{\gamma} > 0$ and $n_0 \in \bN$ such that for $n \ge n_0$, we have 
	\begin{equation}\label{eq XH Lip}
		\left|x\left(\frac{k+1+s}{2^n}\right)-x\left(\frac{k+s}{2^n}\right)\right| \le \kappa_\gamma 2^{-(1-\gamma)n}\left(\frac{k}{2^{n}}\right)^{H-1}, \qquad s \in [0,1] \quad \text{and} \quad k \ge 1.
	\end{equation}
	To this end, applying \eqref{eq XH second moment} yields that 
	\begin{equation*}
		\bE\left[\left(\wt X^H_{\frac{k+1}{2^n}}-\wt X^H_{\frac{k}{2^n}}\right)^2\right] \le \frac{2^{-n}(H-\frac{1}{2})^2}{2-2H}\int_{\frac{k}{2^n}}^{\frac{k+1}{2^n}}\tau^{2H-2}\,d\tau \le \frac{2^{-2n}(H-\frac{1}{2})^2}{2-2H}\left(\frac{k}{2^n}\right)^{2H-2}.
	\end{equation*}
	For any given $\gamma > 0$, Chebyshev's inequality gives 
	\begin{equation*}
		\scalemath{0.9}{\begin{split}
				\bP\left(\max_{1 \le k \le 2^n-1}\left(\frac{k}{2^n}\right)^{1-H}|\wt X^H_{\frac{k+1}{2^n}}-\wt X^H_{\frac{k}{2^n}}| \ge 2^{-(1-\gamma)n}\right) &\le \sum_{k = 1}^{2^n-1}\bP\left(\left(\frac{k}{2^n}\right)^{1-H}|\wt X^H_{\frac{k+1}{2^n}}-\wt X^H_{\frac{k}{2^n}}| \ge 2^{-(1-\gamma)n}\right) \\&\le \sum_{k = 1}^{2^n-1}\frac{\bE\left[\left(\wt X^H_{\frac{k+1}{2^n}}-\wt X^H_{\frac{k}{2^n}}\right)^2\right]\left(\dfrac{k}{2^n}\right)^{2-2H}}{2^{-(2-2\gamma)n}}\le \frac{(H-\frac{1}{2})^2}{2-2H}\cdot 2^{-2\gamma n}.
		\end{split}}
	\end{equation*}
	Clearly, the rightmost term in the above inequality is summable. Thus, the Borel--Cantelli lemma yields that for a typical sample path $x$, there exists $n_0 \in \bN$ such that 
	\begin{equation}\label{eq XH dyadic}
		\left|x\left(\frac{k+1}{2^n}\right)-x\left(\frac{k}{2^n}\right)\right| \le  2^{-(1-\gamma)n}\left(\frac{k}{2^{n}}\right)^{H-1}, \qquad  k \ge 1.
	\end{equation}
	To extend the above inequality to any given $s \in [0,1]$ as in \eqref{eq XH Lip},  consider an increasing sequence $(s_j)_{j \ge n}$ such that for $j \ge n$
	\begin{equation*}
		s_{n} = \frac{k}{2^{n}}, \quad |s_{j+1} - s_{j}| \in \left\{0,\frac{1}{2^{j+1}}\right\} \quad \text{and} \quad s_j \ua \frac{k+s}{2^n}
	\end{equation*}
	In a similar manner, we can construct a sequence $(s'_k)_{k \ge n}$ analogously that converges to $ \frac{k+1+s}{2^n}$. Then, we have 
	\begin{equation*}
		\scalemath{1}{x\left(\frac{k+1+s}{2^n}\right)-x\left(\frac{k+s}{2^n}\right) = \sum_{j = n+1}^{\infty}x(s_{j+1}) - x(s_{j})  + x\left(\frac{k+1}{2^n}\right)-x\left(\frac{k}{2^n}\right) + \sum_{j = n+1}^{\infty}x(s'_{j+1}) - x(s'_{j}).}
	\end{equation*}
	Therefore, applying the triangle inequality gives 
	\begin{align*}
		\lefteqn{\left|x\left(\frac{k+1+s}{2^n}\right)-x\left(\frac{k+s}{2^n}\right)\right|}\\ &= \sum_{j = n+1}^{\infty}|x(s_{j+1}) - x(s_{j})|  + \left|x\left(\frac{k+1}{2^n}\right)-x\left(\frac{k}{2^n}\right)\right| + \sum_{j = n+1}^{\infty}|x(s'_{j+1}) - x(s'_{j})|\\&\le \sum_{j = n+1}^\infty\left(s_j\right)^{H-1}|s_{j+1} - s_j|^{H-1} + 2^{-(1-\gamma)n}\left(\frac{k}{2^{n}}\right)^{H-1} + \sum_{j = n+1}^\infty\left(s'_j\right)^{H-1}|s'_{j+1} - s'_j|^{H-1} \\&\le\left( \frac{k}{2^n}\right)^{H-1}\left(\sum_{j = n+1}^{\infty} 2^{-(1-\gamma)j} + 2^{-(1-\gamma)n}+\sum_{j = n+1}^{\infty} 2^{-(1-\gamma)j} \right) = \left(\frac{k}{2^n}\right)^{H-1} 2^{-(1-\gamma)n}\left(1 + \frac{2}{1-2^{-(1-\gamma)}}\right).
	\end{align*}
	Taking $\kappa_\gamma = 1 + \frac{2}{1-2^{-(1-\gamma)}}$ yields back \eqref{eq XH Lip}. For given $H \in (0,1)$, we take $\gamma < 1 - H \vee \frac{1}{2}$. Then 
	\begin{equation*}
		\begin{split}
			2^{n(1-2H)}\sum_{k = 1}^{2^n}\left|x\left(\frac{k+1+s}{2^n}\right)-x\left(\frac{k+s}{2^n}\right)\right|  &\le \kappa_\gamma\sum_{k = 1}^{2^n-1} 2^{n(1-2H)-2(1-\gamma)n}\left(\frac{k}{2^{n}}\right)^{2H-2}= \kappa_\gamma2^{n(2\gamma-1)}\sum_{k = 1}^{2^n-1}k^{2H-2}.
		\end{split}
	\end{equation*}
	Clearly, we have 
	\begin{equation*}
		\sum_{k = 1}^{2^n-1}k^{2H-2} = \begin{cases}
			\mathcal{O}(1) &\qquad \text{for} \quad H \in (0,1/2),\\
			\mathcal{O}(2^{(2H-1)n}) &\qquad \text{for} \quad H \in (1/2,1).
		\end{cases}
	\end{equation*}
	Thus, we have $2^{n(2\gamma-1)}\sum_{k = 1}^{2^n-1}k^{2H-2} = \mathcal{O}(2^{n(2\gamma - 1)} \vee 2^{n(2H + 2\gamma -2)})$. Hence, for any $s \in [0,1]$, we have  
	\begin{equation*}
		\lim_{n \ua \infty} 2^{n(1-2H)}\sum_{k = 1}^{2^n}\left|x\left(\frac{k+1+s}{2^n}\right)-x\left(\frac{k+s}{2^n}\right)\right|   = 0.
	\end{equation*}
	Finally, the assertion follows from Lebesgue's dominated convergence theorem.
\end{proof}

The next lemma establishes that for any $H \in (0,1)$, the processes $\wt X^H$ and $Z^H$ admit $\bP$-a.s.~the respective roughness exponents $R = 1$ and $R = H$.

\begin{lemma}\label{lemma XH roughness}
	For any $H \in (0,1)$, the process $\wt X^H$ admits the roughness exponent $R = 1$, and the Riemann--Liouville process $Z^H$ admits the roughness exponent $R = H$ with probability one.
\end{lemma}
\begin{proof}
	We begin with showing that $\wt X^H$ admits the roughness exponent $R = 1$ almost surely. Let $x$ be a typical sample path of $\wt  X^H$. It suffices to show that $\<x\>^{(p)}_n \da 0$ as $n \ua \infty$ for any $p > 1$. To this end, note that 
	\begin{equation*}
		\<x\>^{(p)}_n = \sum_{k = 0}^{2^n-1}\left|x\left(\frac{k+1}{2^n}\right)-x\left(\frac{k}{2^n}\right)\right|^p = |x\left(\frac{1}{2^n}\right) - x\left(0\right)|^p + \sum_{k = 1}^{2^n-1}\left|x\left(\frac{k+1}{2^n}\right)-x\left(\frac{k}{2^n}\right)\right|^p.
	\end{equation*}
	Clearly, $|x(2^{-n}) - x(0)| \rightarrow 0$ as $n \ua \infty$, as $x$ is continuous. Hence, it suffices to prove that the second term converges to zero. For given $p > 1$, we take $\gamma < (1 - 1/p) \wedge H$. Then \eqref{eq XH dyadic} implies that there exists $n_0 \in \bN$ such that for $n > n_0$, we have 
	\begin{equation*}
		\begin{split}
			\sum_{k = 1}^{2^n-1}\left|x\left(\frac{k+1}{2^n}\right)-x\left(\frac{k}{2^n}\right)\right|^p \le \sum_{k = 1}^{2^n-1}2^{-p(1-\gamma)n}\left(\frac{k}{2^n}\right)^{p(H-1)} = 2^{(\gamma - H)pn}\sum_{k = 1}^{2^n-1}k^{pH-p}.
		\end{split}
	\end{equation*}
	It is clear that $\sum_{k = 1}^{2^n-1}k^{pH-p} = \mathcal{O}(2^{(pH-p+1)n} \vee 1)$, and hence, 
	\begin{equation*}
		\sum_{k = 1}^{2^n-1}\left|x\left(\frac{k+1}{2^n}\right)-x\left(\frac{k}{2^n}\right)\right|^p = \mathcal{O}(2^{(1+\gamma-p)n} \vee 2^{-(\gamma -H)p}).
	\end{equation*}
	As a result, the above term converges to zero as $n \ua \infty$, and this demonstrates the $\wt X^H$ admits the roughness exponent $R = 1$ with probability one. Recall that it was shown in \cite[Theorem 5.1]{HanSchiedHurst} that $W^H$ admits $\bP$-a.s.~the roughness exponent $H$. Together with the definition $Z^H = \Gamma(H + 1/2)W^H - \wt X^H$, \Cref{lemma add} leads to the assertion.
\end{proof}

\begin{proof}[Proof of \Cref{Thm Rough Bergomi}] The proof for the case $\nu>0$ was already explained at the beginning of this section, based on the absolute continuity result from \cite[Theorem 20]{PicardRepresentationFBM}. So let us now consider the case $\nu=0$. 
	Using the results from Sections \ref{pathwise section} and \ref{Section Proof fbm} and arguing as in the proof of \Cref{thm main rv} allows us to assume without loss of generality that $x_0 = 0$ and $\xi_t  = 0$ for all $t \in [0,1]$. That is $X = Z^H$.

	That $g\circ X$ admits the $\bP$-a.s.~roughness exponent $R = H$ for and strictly monotone  $g \in C^2(\bR)$ follows immediately from \Cref{lemma XH roughness} and \Cref{thm roughness composition}.
	
	Finally, we prove $\wh \cR_n\left(\int_0^\cdot g(X_s)\,ds\right) \rightarrow H$ as $n \ua \infty$. To this end, we denote $Y_t:=\int_0^tX_s\,ds$. It follows from \Cref{lemma XH QV} that $\bP$-a.s.,
	\begin{equation*}
		\lim_{n \ua \infty} 2^{n(2H-1)}\int_0^{1}\<\wt X^H\>^{(2)}_n\left(\frac{s}{2^n}\right)\,ds = 0,
	\end{equation*}
	and it follows from  \Cref{gamma vs sn lemma} and \Cref{lemma fbm decaying} that  $W^H$ satisfies condition \ref{item a lemma add} of \Cref{lemma add} with probability one. Thus, we have $\bP$-a.s.,
	\begin{equation*}
		0< \liminf_{n \ua \infty}2^{n(2{R}-2)}\sum_{k = 0}^{2^n-1}\left(\vartheta^Y_{n,k}\right)^2\le \limsup_{n \ua \infty}2^{n(2{R}-2)}\sum_{k = 0}^{2^{n}-1}\left(\vartheta^Y_{n,k}\right)^2<\infty,
	\end{equation*}
	which validates condition \ref{Prop tilde H z part a} of \Cref{Prop tilde H z} for the process $X$. In addition, the process $X = Z^H$ is $\bP$-a.s.~H\"older continuous with exponent $\alpha < H$. Now, applying \Cref{Prop tilde H z} yields the assertion. This completes the proof. 
\end{proof}

\begin{proof}[Proof of \Cref{Cor Rough Bergomi}]
	
	Note that under the dynamics \eqref{eq VHu}, we have
	\begin{equation}\label{eq g1}
		\sigma_t = \exp(V^{H,\nu}_t) = \exp\left(\gamma\left(Z^{H,\nu}_t - \frac{\gamma}{2}\int_0^t s^{2H-1}\, ds\right)\right) = g_1\left(Z^{H,\nu}_t - \frac{\gamma}{2}\int_0^t s^{2H-1}\, ds\right),
	\end{equation}
	and 
	\begin{equation}\label{eq g2}
		\begin{split}
			\int_0^t \sigma^2_s\,ds &= \int_0^t \exp(2V^{H,\nu}_s)\, ds = \int_0^t \exp\left(2\gamma\left(Z^{H,\nu}_s - \int_{0}^{t}\frac{\gamma}{2}s^{2H-1}\,ds\right)\right)\\ &= \int_0^1 g_2\left(Z^{H,\nu}_t - \frac{\gamma}{2}\int_0^t s^{2H-1}\, ds\right) dt,
		\end{split}
	\end{equation}
	where $g_1(t) = e^{\gamma t}$ and $g_2(t) = e^{2\gamma t}$. From the above, it suffices to show that the process $\xi(t) = t^{2H-1}$ satisfies the conditions in \Cref{Thm Rough Bergomi}. If so, together with \eqref{eq g1}, \Cref{Thm Rough Bergomi} yields that $\sigma_t$ admits the roughness exponent $H$, and, together with \eqref{eq g2}, \Cref{Thm Rough Bergomi} yields that $\wh \cR_n(\int_{0}^{\cdot}\sigma^2_s\, ds) \rightarrow H$ as $n \ua \infty$.
	
	To this end, we first verify condition \ref{thm main rv part a} in \Cref{thm main rv}, which is required by \Cref{Thm Rough Bergomi}. Note that if $H \ge \frac{1}{2}$, then $\xi \in L^\infty([0,1])$, which implies condition \ref{thm main rv part a}. On the other hand, if $H < \frac{1}{2}$, then $\xi \in L^p([0,1])$ for any $p < \frac{1}{1 - 2H}$. As $\frac{1}{1 - 2H} > \frac{5}{5 - 2H}$, then the condition \ref{thm main rv part a} also holds if $H < \frac{1}{2}$. To verify condition \ref{thm main rv part b} in \Cref{thm main rv}, note that for $H \ge 1/2$, the function $t \mapsto t^{2H}$ is Lipschitz continuous, and there exists a positive constant $L$ such that $|t^{2H} - s^{2H}| \le L|t -s|$ for $s,t \in [0,1]$. Therefore, we have
	\begin{equation*}
		\lim_{n \ua \infty} 2^{n(2H-1)}\int_0^{1}\sum_{k = 0}^{2^n-1}\left|\left(\frac{k+1+s}{2^{n}}\right)^{2H}-\left(\frac{k+s}{2^{n}}\right)^{2H}\right|^2\,ds \le \lim_{n \ua \infty} 2^{n(2H-1)} \sum_{k = 0}^{2^n-1}L\cdot 2^{-2n} = \lim_{n \ua \infty}L \cdot 2^{n(2H-2)} = 0.
	\end{equation*}
	For the case $H < 1/2$, we have $|t^{2H} - s^{2H}| \le |t-s|^{2H}$ for $s,t \in [0,1]$. Moreover, the mean-value theorem yields that $|t^{2H} - s^{2H}| \le 2H s^{2H-1} |t-s|$ for $0 < s < t$. Hence, we have 
	\begin{equation*}
		\begin{split}
			\sum_{k = 0}^{2^n-1}\int_{0}^{1}\left|\left(\frac{k+1+s}{2^{n}}\right)^{2H}-\left(\frac{k+s}{2^{n}}\right)^{2H}\right|^2\,ds &\le 2^{-2Hn} + 2^{-2n}\sum_{k = 1}^{2^n-1}\int_{0}^{1}\left(\frac{k+s}{2^n}\right)^{4H-2}\,ds\\& = 2^{-2Hn} + 2^{-4Hn}\sum_{k = 1}^{2^n-1}\int_{k}^{k+1} s^{4H-2}\, ds  \\& = 2^{-2Hn} + 2^{-4Hn}\int_{1}^{2^n} s^{4H-2}\, ds\\&\le 2^{-2Hn} +2^{-4Hn}\frac{2^{n(4H-1)}\vee1}{|4H-1|}= 2^{-2Hn} + \frac{2^{-n}\vee 2^{-4Hn}}{|4H-1|}.
		\end{split}
	\end{equation*}
	Therefore, we have 
	\begin{equation*}
		\lim_{n \ua \infty}2^{n(2H-1)}\int_0^{1}\sum_{k = 0}^{2^n-1}\left|\left(\frac{k+1+s}{2^{n}}\right)^{2H}-\left(\frac{k+s}{2^{n}}\right)^{2H}\right|^2\,ds \le \lim_{n \ua \infty} 2^{-n} +\lim_{n \ua \infty} \frac{2^{(2H-2)n} \vee 2^{(-1-2H)n}}{|4H-1|}  = 0.
	\end{equation*}
	This completes the proof.
\end{proof}

\subsection{Proof of \Cref{Theorem scale rate}}\label{Section Proof Rate}

Theorem 1.4 in \cite{HanSchiedGirsanov} states  that the law of the process $X=x_0+W^H+\int_0^\cdot\xi_s\,ds$ is absolutely continuous with respect to the law of $x_0+W^H$ if $\xi$ satisfies the conditions of \Cref{cor main rv}.  Hence, it is sufficient to prove the theorem for the case $\xi=0$ and $x_0=0$, so that $X=W^H$. 

The proof of \Cref{Theorem scale rate} begins with the following lemma, which provides an upper bound for the covariance between Faber--Schauder coefficients of the fractional Brownian motion at different generations. Such a bound is obtained \`{a} la \cite[Theorem 1]{Gladyshev}. 
\begin{lemma}\label{Lemma Faber Bound}
	Let $m,n \in \bN$, $0 \le k \le 2^n-1$ and $0 \le j \le 2^m-1$ be given such that 
	\begin{equation*}
		\big|2^n j - 2^m k\big| > 2^{m \vee n}.
	\end{equation*}
	Then there exists a positive constant $\rho_H$ such that
	\begin{equation}\label{eq Faber bound}
		\big|\cov\big(\theta_{n,k},\theta_{m,j}\big)\big| \le \rho_H \cdot 2^{-3(m+n)/2}\left(\left|\frac{k+1}{2^n}-\frac{j}{2^m}\right| \wedge \left|\frac{k}{2^n}-\frac{j+1}{2^m}\right|\right)^{2H-4}.
	\end{equation}
\end{lemma}

\begin{proof}
	For fixed $n,m \in \bN$ and $s,t \in \bR$, we denote
	\begin{equation}\label{eq fnm}
		\begin{split}
			f_{n,m}(s,t) &:= \cov\left(W^H_{s + \frac{1}{2^{n+1}}} - W^H_s, W^H_{t + \frac{1}{2^{m+1}}} - W^H_t \right)\\&= \frac{1}{2}\left(\left|s-t+\frac{1}{2^{n+1}}\right|^{2H}+\left|s-t-\frac{1}{2^{m+1}}\right|^{2H}-\left|s-t\right|^{2H}-\left|s-t+\frac{1}{2^{n+1}}-\frac{1}{2^{m+1}}\right|^{2H}\right).
		\end{split}
	\end{equation}
	Recall that $\theta_{n,k} = 2^{n/2}\big((W^H_{\frac{2k+1}{2^{n+1}}} - W^H_{\frac{2k}{2^{n+1}}})-(W^H_{\frac{2k+2}{2^{n+1}}} - W^H_{\frac{2k+1}{2^{n+1}}})\big)$. Hence,	\begin{equation}\label{eq partial bound}
				\cov\big(\theta_{n,k},\theta_{m,j}\big)= 2^{\frac{n+m}{2}}\int_{\frac{2k}{2^{n+1}}}^{\frac{2k+1}{2^{n+1}}}\int_{\frac{2j}{2^{m+1}}}^{\frac{2j+1}{2^{m+1}}}\frac{\partial^2 f_{n,m}(s,t)}{\partial s \partial t}\,dsdt \le 2^{-\frac{m+n}{2}-2}\sup_{(s,t) \in \calS^{(n+1,m+1)}_{\frac{k}{2^n},\frac{j}{2^m}}}\left|\frac{\partial^2 f_{n,m}(s,t)}{\partial s \partial t}\right|,
	\end{equation}
	where for fixed $m,n \in \bN$, $\calS^{(m,n)}_{a,b}$ denotes the rectangular area of the following form $$\calS^{(m,n)}_{a,b}:= \left[a, a + \frac{1}{2^{n}}\right] \times \left[b, b+\frac{1}{2^{m}}\right] \qquad \text{for} \quad a,b \in \bR.$$

	To obtain an upper bound for the rightmost term in \eqref{eq partial bound}, let $\rho_1 = 2H|2H-1|$ and $\rho_2 = \prod_{i = 0}^{3}|2H-i|$. Then,	\begin{equation*}
		\scalemath{0.95}{
			\begin{split}
				\left|\frac{\partial^2 f_{n,m}(s,t)}{\partial s \partial t}\right| &= \frac{\rho_1}{2}\left||t-s|^{2H-2} + \left|s-t+\frac{1}{2^{n+1}}-\frac{1}{2^{m+1}}\right|^{2H-2} - \left|s-t+\frac{1}{2^{n+1}}\right|^{2H-2}-\left|s-t-\frac{1}{2^{m+1}}\right|^{2H-2}\right|\\
				&\le\frac{\rho_2}{2}\int_{s}^{s+\frac{1}{2^{n+1}}}\int_{t}^{t+\frac{1}{2^{m+1}}}|u-v|^{2H-4}\,dudv  \le \frac{\rho_2}{2^{n+m+3}}\sup_{(u,v) \in \calS^{(n+1,m+1)}_{s,t}}|u - v|^{2H - 4}.
		\end{split}}
	\end{equation*}
	Applying the above inequality to \eqref{eq partial bound} yields 
	\begin{equation}\label{eq partial bound 2}
		\begin{split}
			\big|\cov\big(\theta_{n,k},\theta_{m,j}\big)\big| &\le \rho_2\cdot 2^{-\frac{3(m+n)}{2}-4}\sup_{(s,t) \in \calS^{(n+1,m+1)}_{\frac{k}{2^n},\frac{j}{2^m}}} \sup_{(u,v) \in \calS^{(n+1,m+1)}_{s,t}}|u - v|^{2H - 4}.
		\end{split}
	\end{equation}
	Since $\big|2^n j - 2^m k\big| > 2^{m \vee n}$, we have $\left[\frac{k}{2^n},\frac{k+1}{2^n}\right] \cap \left[\frac{j}{2^m},\frac{j+1}{2^m}\right] = \emptyset$. As a result, we have 
	\begin{equation*}
		\begin{split}
			\sup_{(s,t)\in \calS^{(n+1,m+1)}_{\frac{k}{2^n},\frac{j}{2^m}}} \sup_{(u,v) \in \calS^{(n+1,m+1)}_{s,t}}|u - v|^{2H - 4} &\le \sup_{(s,t)\in \calS^{(n,m)}_{\frac{k}{2^n},\frac{j}{2^m}}}|s - t|^{2H - 4} \\&\le \left(\left|\frac{k}{2^n} - \frac{j+1}{2^m}\right| \wedge \left|\frac{k+1}{2^n} - \frac{j}{2^m}\right|\right)^{2H-4}.
		\end{split}
	\end{equation*}
	Inserting the above inequality to \eqref{eq partial bound 2} and taking $\rho_H= 2^{-4}\rho_2$ completes the proof. 
\end{proof}
Recall that $\bm w_n = \mathscr{C}_n \bm z_{n+2} = \left(w_{n,0},w_{n,1},\cdots,w_{n,2^n-1}\right)^\top$. The representation of $\mathscr{C}_n $ in \eqref{eq Cm} yields that 
\begin{equation}\label{eq w ni}
	w_{n,i} = \bm r^\top \left(z^{(n+2)}_{4i+1},z^{(n+2)}_{4i+2},z^{(n+2)}_{4i+3},z^{(n+2)}_{4i}\right)^\top = \frac{1}{4}\left(-z^{(n+2)}_{4i+1}+z^{(n+2)}_{4i+2}+z^{(n+2)}_{4i+3}-z^{(n+2)}_{4i+4}\right).
\end{equation}
Applying the bound in \Cref{Lemma Faber Bound}, we can obtain the following result, which concerns the upper bound between the Faber--Schauder coefficient $\theta_{n,k}$ and the Gaussian random variable $w_{n,i}$.

\begin{lemma}
	For $n \in \bN$, let $0 \le i,k\le 2^n-1$ be given such that $|i-k| > 1$. Then we have 
	\begin{equation}\label{eq Faber w bound}
		\big|\cov\big(\theta_{n,k},w_{n,i}\big)\big| \le \rho_H 2^{(1-2H)n}\left(\left|{k-i-1}\right| \wedge \left|{k-i+1}\right|\right)^{2H-4}.
	\end{equation}
\end{lemma}
\begin{proof}
	We first consider the case $i > k + 1$. We have for $1 \le \delta \le 4$, 
	\begin{equation*}
		\begin{split}
			\left|\cov\big(\theta_{n,k}, z^{(n+2)}_{4i+\delta}\big)\right|&= 2^{3(n+2)/2}\left|\cov\left(\theta_{n,k},\sum_{m = n+2}^\infty 2^{-3m/2}\sum_{j = 0}^{2^{m - (n+2)}-1} \theta_{m,j+2^{m-(n+2)}(4i+\delta-1)}\right)\right| \\
			&\le 2^{3(n+2)/2}\sum_{m = n+2}^\infty 2^{-3m/2}\sum_{j = 0}^{2^{m - (n+2)}-1} \left|\cov\big(\theta_{n,k},\theta_{m,j+2^{m-(n+2)}(4i+\delta-1)}\big)\right| \\
			& \le \rho_H \cdot 2^{3(n+2)/2}\sum_{m = n+2}^\infty 2^{-3m-3n/2}\sum_{j = 0}^{2^{m - (n+2)}-1}\left(\frac{i-k-1}{2^n} + \frac{\delta -1}{2^{n+2}} + \frac{j}{2^m}\right)^{2H-4}.
	\end{split}
	\end{equation*}
	As $\delta \ge 1$, $j \ge 0$ and $i > k+1$,  we have 
	\begin{equation*}
		\frac{i-k-1}{2^n} + \frac{\delta -1}{2^{n+2}} + \frac{j}{2^m} \ge \frac{i-k-1}{2^n}.
	\end{equation*}
	Hence, 
	\begin{equation*}
			\left|\cov\big(\theta_{n,k}, z^{(n+2)}_{4i+\delta}\big)\right|  \le \rho_H\cdot 2^{-n+1}\sum_{m = n+2}^\infty 2^{-2m}\left(\frac{i-k-1}{2^{n}}\right)^{2H-4}  \le \rho_H \cdot 2^{(1-2H)n}\left(i - k -1\right)^{2H-4}.
	\end{equation*}
	For the case $i < k -1$, an analogous argument yields that 
	\begin{equation*}
		\left|\cov\big(\theta_{n,k}, z^{(n+2)}_{4i+\delta}\big)\right| \le \rho_H \cdot 2^{3(n+2)/2}\sum_{m = n+2}^\infty 2^{-3m-3n/2}\sum_{j = 0}^{2^{m - (n+2)}-1}\left(\frac{k-i}{2^n} - \frac{\delta -1}{2^{n+2}} - \frac{j+1}{2^m}\right)^{2H-4}
	\end{equation*}
	As $j \le 2^{m-(n+2)} - 1$, $\delta \le 4$ and $k > i + 1$, we have 
	\begin{equation*}
		\frac{k-i}{2^n} - \frac{\delta -1}{2^{n+2}} - \frac{j+1}{2^m} \ge \frac{k - i -1}{2^n}
	\end{equation*}
	Hence, we have
	\begin{equation*}
		\left|\cov\big(\theta_{n,k}, z^{(n+2)}_{4i+\delta}\big)\right|  \le \rho_H\cdot 2^{-n+1}\sum_{m = n+2}^\infty 2^{-2m}\left(\frac{k-i-1}{2^{n}}\right)^{2H-4}  \le \rho_H \cdot 2^{(1-2H)n}\left(k-i -1\right)^{2H-4}.
	\end{equation*}
	Combining the above bounds for different cases, we have that for $|i -k| > 1$, 
	\begin{equation*}
		\left|\cov\big(\theta_{n,k}, z^{(n+2)}_{4i+\delta}\big)\right| \le \rho_H\cdot 2^{(1-2H)n}\left(\left|{k-i-1}\right| \wedge \left|{k-i+1}\right|\right)^{2H-4}. 
	\end{equation*}
	Finally, applying \eqref{eq w ni} gives 
	\begin{equation*}
		\big|\cov\big(\theta_{n,k},w_{n,i}\big)\big| \le \frac{1}{4}\sum_{\delta = 1}^{4}\left|\cov\big(\theta_{n,k}, z^{(n+2)}_{4i+\delta}\big)\right|  \le \rho_H \cdot 2^{(1-2H)n}\left(\left|{k-i-1}\right| \wedge \left|{k-i+1}\right|\right)^{2H-4}.
	\end{equation*}
	This completes the proof.
\end{proof}

Under the conditions of $X$ be as in \Cref{cor main rv}, \cite[Theorem 1.4]{HanSchiedGirsanov} states that the law of $X$ is absolutely continuous with respect to that of $W^H$. Thus, it suffices to establish \Cref{Theorem scale rate} for $X=W^H$. To this end, let $\bar{\bm \theta}_n$, $\bar{\bm \vartheta}_n$, and $\bm w_n$ be as defined in \eqref{eq vector} and \eqref{eq vector w}, respectively, associated with the fractional Brownian motion. Additionally, denote the covariance matrix of $\bar{\bm \vartheta}_n$ by $\Psi_n := \big(\psi^{(n)}_{i,j}\big)_{i,j = 0}^{2^n-1}$. The subsequent lemma explores the behavior of the traces of the covariance matrices $(\Psi_n)_{n \in \bN}$.

\begin{lemma}\label{lemma trace vartheta}
	For $n \in \bN$, we have 
	\begin{equation*}
		\trace{\Psi_{n+1}} = 2^{2-2H}\trace{\Psi_n}.
	\end{equation*}
\end{lemma}
\begin{proof}
	We begin with showing that the Faber--Schauder coefficients satisfy the following properties: 
	\begin{equation}\label{eq Faber WH}
		\cov\big(\theta_{n,k},\theta_{m,j}\big) = 2^{2H-1}\cov\left(\theta_{n+1,k},\theta_{m+1,j}\right) \quad  \text{and} \quad \cov\big(\theta_{n,k},\theta_{m,j}\big) = \cov\big(\theta_{n,k+i},\theta_{m,j+2^{m-n}i}\big).
	\end{equation}
	To show that the first identity holds, recall that $W^H$ is $H$-self-similar. That is, $(W^H_{\alpha t}) \overset{d}{=} (\alpha^HW^H_{t})$. This then implies
	\begin{equation*}
		\begin{split}
				\cov\big(\theta_{n,k},\theta_{m,j}\big) &= 2^{\frac{n+m}{2}}\cov\left(2W^H_{\frac{2k+1}{2^{n+1}}} - W^H_{\frac{2k}{2^{n+1}}}-W^H_{\frac{2k+2}{2^{n+1}}}, 2W^H_{\frac{2j+1}{2^{m+1}}} - W^H_{\frac{2j}{2^{m+1}}}-W^H_{\frac{2j+2}{2^{m+1}}}\right)\\&= 2^{2H+\frac{n+m}{2}}\cov\left(2W^H_{\frac{2k+2}{2^{n+1}}} - W^H_{\frac{2k}{2^{n+2}}}-W^H_{\frac{2k+2}{2^{n+2}}}, 2W^H_{\frac{2j+1}{2^{m+2}}} - W^H_{\frac{2j}{2^{m+2}}}-W^H_{\frac{2j+2}{2^{m+2}}}\right)\\&= 2^{2H-1}\cov\left(\theta_{n+1,k},\theta_{m+1,j}\right).
		\end{split}
	\end{equation*}
	To prove the second identity, take $k' = k + i$ and $j' = j + 2^{m-n}i$. Then $2^{-n}k - 2^{-m}j = 2^{-n}k' - 2^{-m}j'$. Since $W^H$ admits stationary increments, we have $\cov\big(\theta_{n,k},\theta_{m,j}\big) = \cov\big(\theta_{n,k'},\theta_{m,j'}\big)$. Moreover, the above properties remain to hold for the covariance  between $\theta_{n,k}$ and $w_{n,k}$, that is,
	\begin{equation}\label{eq cov W theta}
		\cov(\theta_{n,k}, w_{n,j}) = 2^{2H-1}\cov(\theta_{n+1,k}, w_{n+1,j}) \quad \text{and} \quad \cov(\theta_{n,k}, w_{n,j}) = \cov(\theta_{n,k+i}, w_{n,j+i}).
	\end{equation}
	To prove the first identity in \eqref{eq cov W theta}, note that for $1 \le \delta \le 4$
	\begin{equation*}
		\begin{split}
			\cov\big(\theta_{n+1,k}, z^{(n+3)}_{4j+\delta}\big)
			&= \cov\left(\theta_{n+1,k},\sum_{m = 0}^\infty 2^{-3m/2}\sum_{i= 0}^{2^{m}-1} \theta_{m+(n+3),i+2^{m}(4j+\delta-1)}\right)\\
			&= 2^{1-2H}\cov\left(\theta_{n,k},\sum_{m = 0}^\infty 2^{-3m/2}\sum_{i = 0}^{2^{m}-1} \theta_{m+(n+2),i+2^{m}(4j+\delta-1)}\right) = 2^{1-2H}\big(\theta_{n,k}, z^{(n+2)}_{4j+\delta}\big).
		\end{split}
	\end{equation*}
	In addition, let $k' = k + i$ and $j' = j+i$, then we have 
	\begin{equation*}
		\begin{split}
			\cov(\theta_{n,k'}, z^{(n+2)}_{4j'+\delta}) 	&= \sum_{m = 0}^\infty 2^{-3m/2}\sum_{i = 0}^{2^{m}-1}\cov\left(\theta_{n,k'}, \theta_{m+(n+2),i+2^{m}(4j'+\delta-1)}\right)\\
			&=  \sum_{m = 0}^\infty 2^{-3m/2}\sum_{i = 0}^{2^{m}-1}\cov\left(\theta_{n,k}, \theta_{m+(n+2),i+2^{m}(4j+\delta-1)}\right) = \cov(\theta_{n,k}, z^{(n+2)}_{4j+\delta}),\\
		\end{split}
	\end{equation*} 
	where the second equality follows from \eqref{eq Faber WH}. Since $w_{n,i} = (-z^{(n+2)}_{4i+1}+z^{(n+2)}_{4i+2}+z^{(n+2)}_{4i+3}-z^{(n+2)}_{4i+4})/4$ and covariance is a bilinear map, the identity \eqref{eq cov W theta} holds. Furthermore, an analogous argument leads to 
	\begin{equation}\label{eq cov w}
		\cov(w_{n,k}, w_{n,j}) = 2^{2H-1}\cov(w_{n+1,k}, w_{n+1,j}) \quad \text{and} \quad \cov(w_{n,k}, w_{n,j}) = \cov(w_{n,k+i}, w_{n,j+i}).
	\end{equation}
	Finally, \eqref{eq Faber WH}, \eqref{eq cov W theta} and \eqref{eq cov w} yield that 
	\begin{equation*}
		\begin{split}
			\trace \Psi_{n+1} &= \sum_{k = 0}^{2^{n+1}-1} \psi^{(n+1)}_{k,k} = \sum_{k = 0}^{2^{n+1}-1}\left(\bE\left[\theta^2_{n+1,k}\right] + \bE\left[w^2_{n+1,k}\right] + 2\bE\left[\theta_{n+1,k}w_{n+1,k}\right]\right)\\&= 2^{n+1+(1-2H)}\left(\bE\left[\theta^2_{n,0}\right]+\bE\left[w^2_{n,0}\right]+2\bE\left[\theta_{n,0}w_{n,0}\right]\right) = 2^{2-2H}\trace \Psi_{n}.
		\end{split}
	\end{equation*}
	This completes the proof. 
\end{proof}

To prove \Cref{Theorem scale rate}, it suffices to prove the following proposition. \Cref{Theorem scale rate} then simply follows from assertion \ref{scale-inv item d} in \Cref{scale-inv prop} and \cite[Theorem 1.4]{HanSchiedGirsanov}.

\begin{proposition}\label{previous prop 4.9}
	For $H \in (0,1)$, we take $\varepsilon_n = 2^{-n/2}n^{-1}\sqrt{\log n}$ and $Y^H = \int_{0}^{\cdot}W^H_s\,ds$. Then there exist $\tau_H > 0$ and a positive constant $c$, such that 
	\begin{equation*}
		\limsup_{n \ua \infty}\varepsilon_n^{-1}\left|H - \wh \cR_n\left(\frac{Y^H}{\sqrt{\tau_H}}\right)\right| \le c \qquad \bP\text{-}a.s.
	\end{equation*}
\end{proposition}
\begin{proof}
	Let us denote $\tau_H = \trace\Psi_0$. It is clear that $\trace\Psi_n = 2^{(2-2H)n}\tau_H$. Furthermore, as shown in \Cref{lemma fbm decaying}, as 
	\begin{equation*}
		\frac{\bE\big[w^2_{n,0}\big]}{\bE\big[\theta^2_{n,0}\big]} = \frac{\alpha(H)}{2^{2+2H}-2^{4H}} < 1. 
	\end{equation*}
	This then implies that $\tau_H > 0$. Furthermore, following from \Cref{lemma gamma ij}, \Cref{lemma fbm wn} and \Cref{lemma trace vartheta}, the covariance matrix $\Psi_n$ is a symmetric Toeplitz matrix and so $\psi^{(n)}_{i,j} = \psi^{(n)}_{j,i} = \psi^{(n)}_{0,|j-i|}$. Hence, we have $\norm{\Psi_n}_1 = \norm{\Psi_n}_\infty$, and applying inequalities \eqref{eq fbm variance}, \eqref{eq Faber bound} and \eqref{eq Faber w bound} yields that there exists a positive constant $\kappa_H > 0$ such that  
	\begin{equation*}
		\begin{split}
			\norm{\Psi_n}_2 &\le \sqrt{\norm{\Psi_n}_1\norm{\Psi}_\infty} = \norm{\Psi_n}_1 = \max_{0 \le j \le 2^n-1}\sum_{i = 0}^{2^n-1}|\psi^{(n)}_{i,j}| \le 2\sum_{i = 0}^{2^n-1}|\psi^{(n)}_{0,i}|\\& \le 2\left(|\psi^{(n)}_{0,0}|+ |\psi^{(n)}_{0,1}| + \sum_{i = 2}^{2^n-1}|\psi^{(n)}_{0,i}|\right) \le \kappa_H 2^{n(1-2H)},
		\end{split}
	\end{equation*}
	where the final inequality holds as it follows from \eqref{eq fbm variance}, \eqref{eq Faber bound} and \eqref{eq Faber w bound} that both $2^{n(2H-1)}|\bE[\theta_{n,0}\theta_{n,i}]|$, $2^{n(2H-1)}|\bE[w_{n,0}w_{n,i}]|$ and $2^{n(2H-1)}|\bE[\theta_{n,0}w_{n,i}]|$ are of order $\mathcal{O}(i^{2H-4})$ as $i \ua \infty$.
	
	Now, the concentration inequality \eqref{concentration inequality} yields that there exists a constant $\lambda > 0$ such that for all $n \in \bN$ and $\delta > 0$, 
	\begin{equation*}
		\begin{split}
			\bP\left[\left|2^{n(H-1)}\norm{\frac{\bar{\bm \vartheta}_n}{\sqrt{\tau_H}}}_{\ell_2} - 1\right|\ge \delta \right] &= \bP\left[\left|2^{n(H-1)}\norm{\bar{\bm \vartheta}_n}_{\ell_2} - \sqrt{\tau_H}\right|\ge \delta\sqrt{\tau_H} \right] \\&= \bP\left[\left|\norm{\bar{\bm \vartheta}_n}_{\ell_2} - \sqrt{\trace \Psi_n}\right|\ge 2^{n(1-H)}\delta\tau_H \right] \le \lambda \exp\left(-\frac{2^{n(2-2H)}\delta^2\tau^2_H}{4\norm{\Psi_n}_2}\right).
		\end{split}
	\end{equation*}
	Taking $\delta_n := 2^{-n/2}\tau^{-1}_H\sqrt{8\kappa_H\log n}$ yields that 
	\begin{equation*}
		\bP\left[\left|2^{n(H-1)}\norm{\frac{\bar{\bm \vartheta}_n}{\sqrt{\tau_H}}}_{\ell_2} - 1\right|\ge \delta \right]\le \frac{\lambda}{n^2}.
	\end{equation*}
	As the expression on the right-hand side of the above inequality is absolutely summable, the Borel--Cantelli lemma yields that 
	\begin{equation*}
		\limsup_{n \ua \infty}\delta_n^{-1}\left|2^{n(H-1)}\norm{\frac{\bar{\bm \vartheta}_n}{\sqrt{\tau_H}}}_{\ell_2} - 1\right| \le 1 \qquad \bP\text{-}a.s.
	\end{equation*}
	Note that\label{previous page 36}
	\begin{equation*}
		H - \wh \cR_n\left(\frac{Y^H}{\sqrt{\tau_H}}\right) = (H-1) + \frac{1}{n}\log_2\norm{\frac{\bar{\bm \vartheta}_n}{\sqrt{\tau_H}}}_{\ell_2} = \frac{1}{n}\log_2 \left(2^{n(H-1)}\norm{\frac{\bar{\bm \vartheta}_n}{\sqrt{\tau_H}}}_{\ell_2}\right),
	\end{equation*}
	and now using the fact $\sqrt{1 + x} - 1 \sim x/2$ as $x \rightarrow 0$ yields the result. This completes the proof. 
\end{proof}

\subsection{Proof of results in \Cref{RV section}}
Before presenting our proofs, we will outline our approach to establishing the results in \Cref{RV section}. It follows from \eqref{eq process RV} that
\begin{equation*}
	\wh Y^{(n)}_{\frac{k}{2^n}} - \wh Y^{(n)}_{\frac{k-1}{2^n}} = \sum_{j = m_n(k-1)+1}^{m_n k} \left(\log S_{\frac{j}{2^n m_n}} - \log S_{\frac{j-1}{2^n m_n}} \right)^2 =  \sum_{j = m_n(k-1)+1}^{m_n k} \left(\int^{\frac{j}{2^n m_n}}_{\frac{j-1}{2^n m_n}}\sigma_s dB_s\right)^2.
\end{equation*}
For short, let us denote 
\begin{equation*}
	\xi^{(n)}_{j} := \left(\int^{\frac{j}{2^n m_n}}_{\frac{j-1}{2^n m_n}}\sigma_s \,dB_s\right)^2 - \int^{\frac{j}{2^n m_n}}_{\frac{j-1}{2^n m_n}}\sigma^2_s\,ds \quad \text{and} \quad Z^{(n)}_k := \sum_{j = m_n(k-1)+1}^{m_n k}\xi^{(n)}_{j}.
\end{equation*}
We can then obtain the following decomposition
\begin{equation*}
	\begin{split}
		\wh Y^{(n)}_{\frac{k}{2^n}} - \wh Y^{(n)}_{\frac{k-1}{2^n}} &= \int_{\frac{k-1}{2^n}}^{\frac{k}{2^n}} \sigma^2_s \,ds + \sum_{j = m_n(k-1)+1}^{m_n k} \left[\left(\int^{\frac{j}{2^n m_n}}_{\frac{j-1}{2^n m_n}}\sigma_s dB_s\right)^2 - \int^{\frac{j}{2^n m_n}}_{\frac{j-1}{2^n m_n}}\sigma^2_s \,ds\right]\\&= \left(Y_{\frac{k}{2^n}} - Y_{\frac{k-1}{2^n}}\right) + \sum_{j = m_n(k-1)+1}^{m_n k} \xi^{(n)}_j = \left(Y_{\frac{k}{2^n}} - Y_{\frac{k-1}{2^n}}\right) + Z^{(n)}_k.
	\end{split}
\end{equation*}
Clearly, if the quantity $Z^{(n)}_k$ is relatively small compared with $Y_{\frac{k}{2^n}} - Y_{\frac{k-1}{2^n}}$ as $n \ua \infty$. Hence,  the difference $\wh Y^{(n)}_{\frac{k}{2^n}} - \wh Y^{(n)}_{\frac{k-1}{2^n}}$ will be sufficiently close to the difference of $Y_{\frac{k}{2^n}} - Y_{\frac{k-1}{2^n}}$, and so will be  the values of $\vartheta_{n,k}$ and $\wt \vartheta_{n,k}$. Hence,  $\wh \cR_n(Y)$ and $\wt \cR_n(S)$ will eventually converge to the same value.

To prove \Cref{thm main realized}, the following lemma derives upper bounds for $\bE\big[\big(Z^{(n)}_{k}\big)^2\big]$ and $\var[\big(Z^{(n)}_{k}\big)^2]$. 

\begin{lemma}\label{lemma bound RV}
	Suppose that $\bE\left[\sup_{t \in [0,1]} \sigma^{16}_t\right] < \infty$. Then, there exists a positive constant $\kappa$ such that for $n \in \bN$ and $1 \le k \le 2^n$ such that   
	\begin{equation}\label{eq bound RV}
		\bE\left[\left(Z^{(n)}_{k}\right)^2\right] \le \frac{\kappa}{2^{2n}m_n} \quad \text{and} \quad  \var\left[\left(Z^{(n)}_k\right)^2\right] \le \bE\left[\left(Z^{(n)}_{k}\right)^4\right] \le \frac{\kappa}{2^{4n}m_n^2}.
	\end{equation}
\end{lemma}
\begin{proof}
	It is clear that 
	\begin{equation*}
		\left(Z^{(n)}_{k}\right)^2 =\left(\sum_{j = m_n(k-1)+1}^{m_n k} \xi^{(n)}_j\right)^2 = \sum_{j = m_n(k-1)+1}^{m_n k}\left(\xi^{(n)}_j\right)^2 + 2\sum_{m_n(k-1)+1 \le i < j \le m_n k}\xi^{(n)}_{i}\xi^{(n)}_{j}.
	\end{equation*}
	Let us now consider the natural filtration $\cF_t := \sigma(\{B_s, 0 \le s \le t\})$. Note that for $i < j$, we have 
	\begin{equation*}
		\bE\left[\xi^{(n)}_{i}\xi^{(n)}_{j}\right] = \bE\left[\bE\left[\xi^{(n)}_{i}\xi^{(n)}_{j}\Big|\cF_{\frac{i}{2^n m_n}} \right]\right] = \bE\left[\xi^{(n)}_{i}\bE\left[\xi^{(n)}_{j}\Big|\cF_{\frac{i}{2^n m_n}}\right]\right].
	\end{equation*}
	Given that  $\sigma_t$ is progressively measurable, and therefore adapted to $\cF_t$, and since $\bE\left[\int_0^1 \sigma_s^{2}\,ds\right] < \infty$, we can apply It\^{o}'s isometry to conclude that 
	\begin{equation}\label{eq martingale diff}
		\begin{split}
			\bE\left[\xi^{(n)}_{j}\Big|\cF_{\frac{i}{2^n m_n}}\right] &= \bE\left[\left(\int^{\frac{j}{2^n m_n}}_{\frac{j-1}{2^n m_n}}\sigma_s \,dB_s\right)^2 - \int^{\frac{j}{2^n m_n}}_{\frac{j-1}{2^n m_n}}\sigma^2_s\,ds\bigg|\cF_{\frac{i}{2^n m_n}}\right]\\&= \bE\left[\int^{\frac{j}{2^n m_n}}_{\frac{j-1}{2^n m_n}}\sigma^2_s\,ds\bigg|\cF_{\frac{i}{2^n m_n}}\right]
			-\bE\left[\int^{\frac{j}{2^n m_n}}_{\frac{j-1}{2^n m_n}}\sigma^2_s\,ds\bigg|\cF_{\frac{i}{2^n m_n}}\right] = 0.
		\end{split}
	\end{equation}
	Hence, we have $\bE\left[\xi^{(n)}_{i}\xi^{(n)}_{j}\right] = 0$. It remains to compute $\bE\left[(\xi^{(n)}_j)^2\right]$. Applying the binomial expansion and the Cauchy inequality to $\bE\left[(\xi^{(n)}_j)^2\right]$ yields 
	\begin{equation*}
		\scalemath{0.85}{\begin{split}
				\bE\left[\left(\xi^{(n)}_{j}\right)^2\right] &= \bE\left[\left(\int^{\frac{j}{2^n m_n}}_{\frac{j-1}{2^n m_n}}\sigma_s\,dB_s\right)^4\right] - 2\bE\left[\left(\int^{\frac{j}{2^n m_n}}_{\frac{j-1}{2^n m_n}}\sigma_s\,dB_s\right)^2\int^{\frac{j}{2^n m_n}}_{\frac{j-1}{2^n m_n}}\sigma^2_s\,ds\right] + \bE\left[\left(\int^{\frac{j}{2^n m_n}}_{\frac{j-1}{2^n m_n}}\sigma^2_s\,ds\right)^2\right] \\&\le \bE\left[\left(\int^{\frac{j}{2^n m_n}}_{\frac{j-1}{2^n m_n}}\sigma_s\,dB_s\right)^4\right] + 2\sqrt{\bE\left[\left(\int^{\frac{j}{2^n m_n}}_{\frac{j-1}{2^n m_n}}\sigma_s\,dB_s\right)^4\right]}\sqrt{\bE\left[\left(\int^{\frac{j}{2^n m_n}}_{\frac{j-1}{2^n m_n}}\sigma^2_s\,ds\right)^2\right]}+ \bE\left[\left(\int^{\frac{j}{2^n m_n}}_{\frac{j-1}{2^n m_n}}\sigma^2_s\,ds\right)^2\right].
		\end{split}}
	\end{equation*}
	The Burkholder--Davis--Gundy inequality yields  a constant $\kappa_1 > 0$ such that 
	\begin{equation*}
		\bE\left[\left(\int^{\frac{j}{2^n m_n}}_{\frac{j-1}{2^n m_n}}\sigma_s\,dB_s\right)^4\right] \le \kappa_1 \bE\left[\left(\int^{\frac{j}{2^n m_n}}_{\frac{j-1}{2^n m_n}}\sigma^2_s\,ds\right)^2\right].
	\end{equation*}
	Setting $M_1 := \bE\left[\left(\sup_{s \in [0,1]} \sigma^2_s \right)^2 \right] < \infty$, it follows that 
	\begin{equation*}
		\begin{split}
			\bE\left[\left(\xi^{(n)}_{j}\right)^2\right] \le \big(\sqrt{\kappa_1} + 1\big)^2\bE\left[\left(\int^{\frac{j}{2^n m_n}}_{\frac{j-1}{2^n m_n}}\sigma^2_s\,ds\right)^2\right]\le \big(\sqrt{\kappa_1} + 1\big)^2 \frac{\bE\left[\left(\sup_{s \in [0,1]}\sigma_s^2\right)^2\right]}{2^{2n}m_n^2} \le \frac{\big(\sqrt{\kappa_1} + 1\big)^2 M_1}{2^{2n}m_n^2}.
		\end{split}
	\end{equation*}
	Hence,
	\begin{equation*}
		\bE\left[\left(Z^{(n)}_{k}\right)^2\right] = \sum_{j = m_n(k-1)+1}^{m_n k}\bE\left[\left(\xi^{(n)}_{j}\right)^2\right] \le  \frac{\big(\sqrt{\kappa_1} + 1\big)^2 M_1}{2^{2n}m_n}.
	\end{equation*}
	Next, we are going to provide an upper bound for $\bE\left[\big(Z^{(n)}_{k}\big)^4\right]$. It follows from the multinomial expansion and a similar argument as in \eqref{eq martingale diff} that
	\begin{equation}\label{eq Zn fourth}
		\scalemath{0.9}{\begin{split}
				\bE\left[\left(Z^{(n)}_k\right)^4\right] &= \sum_{j = m_n(k-1) + 1}^{m_n k} \bE\left[\left(\xi^{(n)}_j\right)^4\right] + 6 \sum_{m_n(k-1)+1 \le i < j \le m_n k} \bE\left[\left(\xi^{(n)}_i\right)^2\left(\xi^{(n)}_j\right)^2\right] \\&\le \sum_{j = m_n(k-1) + 1}^{m_n k} \bE\left[\left(\xi^{(n)}_j\right)^4\right] + 6 \sum_{m_n(k-1)+1 \le i < j \le m_n k} \sqrt{\bE\left[\left(\xi^{(n)}_i\right)^4\right]\bE\left[\left(\xi^{(n)}_j\right)^4\right]}.
		\end{split}}
	\end{equation}
	Furthermore, it follows from the Cauchy--Schwarz inequality that 
	\begin{equation*}
		\begin{split}
			\bE\left[\left(\xi^{(n)}_{j}\right)^4\right] &= \bE\left[\left(\left(\int^{\frac{j}{2^n m_n}}_{\frac{j-1}{2^n m_n}}\sigma_s\,dB_s\right)^2 - \int^{\frac{j}{2^n m_n}}_{\frac{j-1}{2^n m_n}}\sigma^2_s\,ds\right)^4\right]\\&\le \sum_{i = 0}^{4}{4 \choose i}\bE\left[\left(\int^{\frac{j}{2^n m_n}}_{\frac{j-1}{2^n m_n}}\sigma_s\,dB_s\right)^{8-2i}\left(\int^{\frac{j}{2^n m_n}}_{\frac{j-1}{2^n m_n}}\sigma^2_s\,ds\right)^{i}\right]\\&\le \sum_{i = 0}^{4}{4 \choose i}\sqrt{\bE\left[\left(\int^{\frac{j}{2^n m_n}}_{\frac{j-1}{2^n m_n}}\sigma_s\,dB_s\right)^{16-4i}\right]}\sqrt{\bE\left[\left(\int^{\frac{j}{2^n m_n}}_{\frac{j-1}{2^n m_n}}\sigma^2_s\,ds\right)^{2i}\right]}
		\end{split}
	\end{equation*}
	The Burkholder--Davis--Gundy inequality yields that there exists $\kappa_2 > 0$ such that for $0 \le i \le 4$, 
	\begin{equation*}
		\bE\left[\left(\int^{\frac{j}{2^n m_n}}_{\frac{j-1}{2^n m_n}}\sigma_s\,dB_s\right)^{16-4i}\right] \le \kappa_2 \bE\left[\left(\int^{\frac{j}{2^n m_n}}_{\frac{j-1}{2^n m_n}}\sigma^2_s\,ds\right)^{8-2i}\right].
	\end{equation*}
	Letting $M_2 := \sup\limits_{i = 0,\cdots,4} \bE\left[\left(\sup\limits_{t \in [0,1]}\sigma_t^2\right)^{2i}\right] < \infty$,  we  have  
	\begin{equation*}
		\begin{split}
			\bE\left[\left(\xi^{(n)}_{j}\right)^4\right] &\le \sqrt{\kappa_2}\sum_{i = 0}^{4}{4 \choose i}\sqrt{\bE\left[\left(\int^{\frac{j}{2^n m_n}}_{\frac{j-1}{2^n m_n}}\sigma^2_s \,ds\right)^{8-2i}\right]}\sqrt{\bE\left[\left(\int^{\frac{j}{2^n m_n}}_{\frac{j-1}{2^n m_n}}\sigma^2_s\,ds\right)^{2i}\right]}\\
			&\le \sqrt{\kappa_2} \sum_{i = 0}^{4} {4 \choose i} \frac{\bE\left[\left(\sup_{t \in [0,1]}\sigma^2_t\right)^{8 - 2i}\right]}{(2^n m_n)^{4-i}}\frac{\bE\left[\left(\sup_{t \in [0,1]}\sigma^2_t\right)^{2i}\right]}{(2^n m_n)^{i}} \\
			&\le \sqrt{\kappa_2} \sum_{i = 0}^{4} {4 \choose i} \frac{M_2}{\left(2^n m_n\right)^{4-i}}\frac{M_2}{\left(2^n m_n\right)^{i}}  = \frac{16\sqrt{\kappa_2}M_2^{2}}{2^{4n}m_n^4}.
		\end{split}
	\end{equation*}
	Inserting the above inequality into \eqref{eq Zn fourth} yields that 
	\begin{equation*}
		\var\left(\left(Z^{(n)}_k\right)^2\right) \le \bE\left[\left(Z^{(n)}_k\right)^4\right] \le 16\sqrt{\kappa_2}M_2^2\left(\frac{1}{2^{4n}m_n^3} + \frac{6}{2^{4n}m_n^2}\right).
	\end{equation*}
	As $m_n \ge 1$, taking $\kappa:= (\sqrt{\kappa_1} + 1)^2 M_1 \vee 96 \sqrt{\kappa_2}M_2^2$ yields back \eqref{eq bound RV}. This completes the proof.
\end{proof}
\begin{proof}[Proof of \Cref{thm main realized}]
	Letting $\bm a := (-1,+1,+1,-1)^\top$, we can write	\begin{equation*}
		\begin{split}
			\wt \vartheta_{n,k} &= 2^{3n/2+3}\left(\wh Y^{(n+2)}_{\frac{4k}{2^{n+2}}} - 2\wh Y^{(n+2)}_{\frac{4k+1}{2^{n+2}}} + 2\wh Y^{(n+2)}_{\frac{4k+3}{2^{n+2}}} - \wh Y^{(n+2)}_{\frac{4k+4}{2^{n+2}}}\right)= 2^{3n/2+3} \sum_{i = 1}^{4} a_i \left(\wh Y^{(n+2)}_{\frac{4k+i}{2^{n+2}}}-\wh Y^{(n+2)}_{\frac{4k+i-1}{2^{n+2}}}\right)\\& = 2^{3n/2+3}\left( \sum_{i = 1}^{4} a_i \left( Y_{\frac{4k+i}{2^{n+2}}}- Y_{\frac{4k+i-1}{2^{n+2}}}\right) + \sum_{i = 1}^{4} a_i Z^{(n+2)}_{4k+i}\right) = \vartheta_{n,k} + 2^{3n/2+3}\sum_{i = 1}^{4} a_i Z^{(n+2)}_{4k+i}.
		\end{split}
	\end{equation*}
	Denoting $\zeta_{n,k} := 2^{3n/2+3}\sum_{i = 1}^{4} a_i Z^{(n+2)}_{4k+i}$,  it follows from Jensen's inequality that 
	\begin{equation*}
		\zeta_{n,k}^2 = 2^{3n + 6}\left(\sum_{i = 1}^{4} a_i Z^{(n+2)}_{4k+i}\right)^2 \le 2^{3n+8}\sum_{i = 1}^{4}\left(Z^{(n+2)}_{4k+i}\right)^2.
	\end{equation*}
	Hence, we have 
	\begin{equation*}
		\bE\left[2^{(2R-2)n}\sum_{k = 0}^{2^n-1}\zeta_{n,k}^2\right] = 2^{(2R-2)n} 2^{3n+8}\sum_{k = 1}^{2^{n+2}}\bE\left[\left(Z^{(n+2)}_{k}\right)^2\right] \le 2^{(2R-2)n + 3n+8 +(n+2)}\frac{\kappa}{2^{2(n+2)}m_{n+2}} = \frac{2^{2Rn+4}\kappa}{m_{n+2}}.
	\end{equation*}
	Furthermore, the Cauchy--Schwarz inequality yields that 
	\begin{equation*}
		\begin{split}
			\var\left[\zeta_{n,k}^2\right] &\le \bE\left[\zeta^4_{n,k}\right] = 2^{6n+12}\bE\left[\left(\sum_{i = 1}^{4}a_iZ^{(n+2)}_{4k+i}\right)^4\right] \le 2^{6n+20} \sum_{i = 1}^{4}\bE\left[\left(Z^{(n+2)}_{4k+i}\right)^4\right]\\& \le 2^{6n + 20}\cdot \frac{\kappa}{2^{4(n+2)}m_{n+2}^2} = \frac{\kappa \cdot 2^{2n+12}}{m_{n+2}^2}.
		\end{split}
	\end{equation*}
	Applying the above inequality yields that 
	\begin{equation*}
		\begin{split}
			\var\left[2^{(2R-2)n}\sum_{k = 0}^{2^n-1}\zeta_{n,k}^2\right] &= 2^{(4R -4)n}\sum_{k = 0}^{2^{n}-1}\left(\var\left[\zeta_{n,k}^2\right] + \sum_{0 \le i \neq j \le 2^{n}-1} \cov\left(\zeta_{n,k}^2,\zeta_{n,j}^2\right)\right)\\&\le 2^{(4R - 4)n}\sum_{k = 0}^{2^{n}-1}\left(\var\left[\zeta_{n,i}^2\right] + \sum_{0 \le i \neq j \le 2^{n}-1} \sqrt{\var\left[\zeta_{n,i}^2\right]\var\left[\zeta_{n,j}^2\right]}\right) \le \frac{2^{4Rn+12}\kappa}{m^2_{n+2}}.
		\end{split}
	\end{equation*}
	Since $\lim_n n^{-1}\log_2 m_n > 2R$, we must have $2^{2Rn}/m_n \rightarrow \infty$ as $n \ua \infty$. As a result, we have 
	\begin{equation*}
		\lim_{n \ua \infty} \bE\left[2^{(2R-2)n}\sum_{k = 0}^{2^n-1}\zeta^2_{n,k}\right] = \lim_{n \ua \infty} \var\left[2^{(2R-2)n}\sum_{k = 0}^{2^n-1}\zeta^2_{n,k}\right] = 0.
	\end{equation*}
	Hence, it follows from the fast $L_2$-convergence that
	\begin{equation*}
		2^{(2R-2)n}\sum_{k = 0}^{2^n-1}\zeta^2_{n,k} \lra 0 \qquad \bP\text{-a.s.},
	\end{equation*}
	Finally, note that 
	\begin{equation*}
		2^{(2R-2)n}\sum_{k = 0}^{2^n-1}\wt \vartheta^2_{n,k} = 2^{(2R-2)n}\left(\sum_{k = 0}^{2^n-1}\vartheta^2_{n,k} + \sum_{k = 0}^{2^n-1}\zeta^2_{n,k} + 2\sum_{k = 0}^{2^n-1}\vartheta_{n,k}\zeta_{n,k}\right),
	\end{equation*}
	and it remains to show that the third term on the right-hand side  converges to zero as $n \ua \infty$. The Cauchy--Schwarz inequality yields
	\begin{equation*}
		\limsup_{n \ua \infty}2^{(2R-2)n}\left|\sum_{k = 0}^{2^n-1}\vartheta_{n,k}\zeta_{n,k}\right| \le \sqrt{\limsup_{n \ua \infty}2^{(2R-2)n}\sum_{k = 0}^{2^n-1}\vartheta^2_{n,k}}\sqrt{\lim_{n \ua \infty}2^{(2R-2)n}\sum_{k = 0}^{2^n-1}\zeta^2_{n,k}} = 0.
	\end{equation*} 
	Hence, we get 
	\begin{equation*}
		\limsup_{n \ua \infty}2^{(2R-2)n}\sum_{k = 0}^{2^n-1}\wt \vartheta^2_{n,k} = \limsup_{n \ua \infty}2^{(2R-2)n}\sum_{k = 0}^{2^n-1} \vartheta^2_{n,k}.
	\end{equation*}
	Moreover, in an analogous manner, we can obtain that 
	\begin{equation*}
		\liminf_{n \ua \infty}2^{(2R-2)n}\sum_{k = 0}^{2^n-1}\wt \vartheta^2_{n,k} = \liminf_{n \ua \infty}2^{(2R-2)n}\sum_{k = 0}^{2^n-1} \vartheta^2_{n,k},
	\end{equation*}
	Taking logarithms on both sides of the above identities, dividing by $n$ and letting $n \ua \infty$ complete the proof.
\end{proof}

\begin{proof}[Proof of \Cref{thm main realized necessary}]
	It suffices to consider the case $\alpha < R$; the  case $\alpha > R$ follows directly from \Cref{thm main realized}. Due to the independence between $(\sigma_t)_{t \in [0,1]}$ and $(B_t)_{t \in [0,1]}$, it suffices to show that for a typical sample path $t \mapsto \sigma_t$, we have $\wt \cR_n(S) \rightarrow R$ $\bQ$-a.s. Recall that 
	\begin{equation*}
		\xi^{(n)}_{j} := \left(\int^{\frac{j}{2^n m_n}}_{\frac{j-1}{2^n m_n}}\sigma_s \,dB_s\right)^2 - \int^{\frac{j}{2^n m_n}}_{\frac{j-1}{2^n m_n}}\sigma^2_s\,ds \quad \text{and} \quad Z^{(n)}_k = \sum_{j = m_n(k-1)+1}^{m_n k}\xi^{(n)}_{j}.
	\end{equation*}
	For a given sample path $t \rightarrow \sigma_t$, we have $\xi^{(n)}_i \indep \xi^{(n)}_j$ for $i \neq j$. Let $\{U^{(n)}_{j}, 1 \le j \le 2^nm_n, n \in \bN\}$ be an i.i.d.~array of $\calN(0,1)$-random variables defined on the probability space $(\wt \Omega, \cG, \bQ)$. Then, we have 
	\begin{equation*}
		\xi^{(n)}_j \overset{d}{=} \left[\left(U^{(n)}_j\right)^2 - 1\right] \cdot \int^{\frac{j}{2^n m_n}}_{\frac{j-1}{2^n m_n}}\sigma^2_s\,ds.
	\end{equation*}
	Therefore, we have 
	\begin{equation}\label{eq Znk RV}
		\begin{split}
			\bE\left[\left(Z^{(n)}_{k}\right)^2\right] &= \bE\left[\left(\sum_{j = m_n(k-1)+1}^{m_n k} \xi^{(n)}_j\right)^2\right] = \sum_{j = m_n(k-1)+1}^{m_n k}\bE\left[\left(\xi^{(n)}_{j} \right)^2\right] \\&= \sum_{j = m_n(k-1)+1}^{m_n k}\bE\left[\left(\left(U^{(n)}_j\right)^2 - 1\right)^2\right]\cdot\left(\int^{\frac{j}{2^n m_n}}_{\frac{j-1}{2^n m_n}}\sigma^2_s\,ds\right)^2 \\&= 2 \sum_{j = m_n(k-1)+1}^{m_n k}\left(\int^{\frac{j}{2^n m_n}}_{\frac{j-1}{2^n m_n}}\sigma^2_s\,ds\right)^2.
		\end{split}
	\end{equation} 
	Let $\bm a = (-1,+1,+1,-1)^\top$, and recall that 
	\begin{equation*}
		\zeta_{n,k} = 2^{3n/2+3}\sum_{i = 1}^{4} a_i Z^{(n+2)}_{4k+i}.
	\end{equation*}
	Since $\xi^{(n)}_{i} \indep \xi^{(n)}_j$ for $i \neq j$, then we have $Z^{(n)}_i \indep Z^{(n)}_j$ for $1 \le i \neq j \le 2^n$. Thus, it follows that 
	\begin{equation*}
		\bE\left[\zeta_{n,k}^2\right] = 2^{3n+6}\bE\left[\left(\sum_{i = 1}^{4} a_i Z^{(n+2)}_{4k+i}\right)^2\right] = 2^{3n+6}\sum_{i = 1}^{4}\bE\left[(Z^{(n+2)}_{4k+i})^2\right].
	\end{equation*}
	Thus, applying \eqref{eq Znk RV} to the above equation yields that 
	\begin{equation}\label{eq sum xi square}
		\begin{split}
			\bE\left[\sum_{k = 0}^{2^n-1}\zeta_{n,k}^2\right] &= 2^{3n+6}\sum_{k = 1}^{2^{n+2}}\bE\left[\left(Z^{(n+2)}_k\right)^2\right] = 2^{3n+7}\sum_{k = 1}^{2^{n+2}}\sum_{j = m_{n+2}(k-1)+1}^{m_{n+2} k}\left(\int^{\frac{j}{2^{n+2} m_{n+2}}}_{\frac{j-1}{2^{n+2} m_{n+2}}}\sigma^2_s\,ds\right)^2 \\&= 2^{3n+7}\sum_{j = 1}^{2^{n+2}m_{n+2}}\left(\int^{\frac{j}{2^{n+2} m_{n+2}}}_{\frac{j-1}{2^{n+2} m_{n+2}}}\sigma^2_s\,ds\right)^2.
		\end{split}
	\end{equation}
	For $n \in \bN$, the intermediate value theorem and the mean-value theorem also imply that there exist intermediate times $s_j \in [\frac{j-1}{2^{n+2}m_{n+2}}, \frac{j}{2^{n+2}m_{n+2}}]$ such that 
	\begin{equation*}
		2^{n+2}m_{n+2}\int^{\frac{j}{2^{n+2} m_{n+2}}}_{\frac{j-1}{2^{n+2} m_{n+2}}}\sigma^2_s\,ds = \sigma^2_{s_j}.
	\end{equation*}
	Applying the above identity to \eqref{eq sum xi square} yields 
	\begin{equation*}
		\begin{split}
			\left(\frac{2^{2n+5}}{m_{n+2}}\right)^{-1}\bE\left[\sum_{k = 0}^{2^n-1}\zeta_{n,k}^2\right] &= 2^{n+2}m_{n+2}\sum_{j = 1}^{2^{n+2}m_{n+2}} \left(\frac{\sigma^2_{s_j}}{2^{n+2}m_{n+2}}\right)^2 = \frac{1}{2^{n+2}m_{n+2}}\sum_{j = 1}^{2^{n+2}m_{n+2}}\sigma^4_{s_j} \\
			& \longrightarrow \int_0^1 \sigma^4_s \, ds, \qquad \quad \text{as} \quad n \ua \infty.
		\end{split}
	\end{equation*}
	Now, let us evaluate $\var\big(\sum_{k = 0}^{2^n-1}\zeta_{n,k}^2\big)$. Since $Z_i^{(n+2)} \indep Z_j^{(n+2)}$ for $1 \le i \neq j \le 2^{n+2}$, we have $\xi_{n,k} \indep \xi_{n,j}$ for $1 \le j \neq k \le 2^n$. Thus, 
	\begin{equation*}
		\var\left[\sum_{k = 0}^{2^n-1}\zeta_{n,k}^2\right] = \sum_{k = 0}^{2^n-1}\var\left[\zeta_{n,k}^2\right] \le \sum_{k = 0}^{2^n-1}\bE\left[\zeta^4_{n,k}\right].
	\end{equation*}  
	To bound the fourth moment $\bE\left[\zeta^4_{n,k}\right]$, note that by Jensen's inequality, we have 
	\begin{equation*}
		\bE\left[\zeta_{n,k}^4\right] = 2^{6n+12}\bE\left[\left(\sum_{i = 1}^{4}a_iZ^{(n+2)}_{4k+i}\right)^4\right] \le 2^{6n+18} \sum_{i = 1}^{4}\bE\left[\left(Z^{(n+2)}_{4k+i}\right)^4\right] \le \kappa \cdot \frac{2^{2n+12}}{m^2_{n+2}}.
	\end{equation*}
		Therefore, we have 
	\begin{equation*}
		\var\left[\sum_{k = 0}^{2^n-1}\zeta_{n,k}^2\right] \le \sum_{k = 0}^{2^n-1}\kappa \cdot \frac{2^{2n+12}}{m^2_{n+2}} \le \kappa \cdot \frac{2^{3n+12}}{m^2_{n+2}}.
	\end{equation*}
	Now, we first prove that as $n \ua \infty$,
	\begin{equation*}
		\frac{m_{n+2}}{2^{2n+5}}\sum_{k = 0}^{2^n-1}\zeta_{n,k}^2 \longrightarrow \int_0^1 \sigma_s^4\,ds, \qquad \bQ\text{-a.s.}
	\end{equation*}
	This is because that 
	\begin{equation*}
		\var\left(\frac{m_{n+2}}{2^{2n+5}}\sum_{k = 0}^{2^n-1}\zeta_{n,k}^2\right) \le \kappa \cdot \frac{m^2_{n+2}}{2^{4n+10}} \cdot \frac{2^{3n+16}}{m^2_{n+2}} = \frac{\kappa}{2^{n-6}} \longrightarrow 0.
	\end{equation*}
	Hence, it follows from  fast $L_2$-convergence that $\frac{m_{n+2}}{2^{2n+5}}\sum_{k = 0}^{2^n-1}\zeta_{n,k}^2$ converges to $\int_0^1\sigma_s^4\, ds$ with probability one. Now, recall that 
	\begin{equation*}
		\sum_{k = 0}^{2^n-1}\wt \vartheta^2_{n,k} = \sum_{k = 0}^{2^n-1}\vartheta^2_{n,k} + \sum_{k = 0}^{2^n-1}\zeta^2_{n,k} + 2\sum_{k = 0}^{2^n-1}\vartheta_{n,k}\zeta_{n,k}.
	\end{equation*} 
	Since $\lim_n n^{-1}\log_2 m_n = 2\alpha < 2R$, there exists $n_0 \in \bN$ such that $m_n \le 2^{2(R-\varepsilon)n}$ for $0 < \varepsilon < (R - \alpha)/2$. 
	\begin{equation*}
		\lim_{n \ua \infty}\frac{m_{n+2}}{2^{2n+5}}\sum_{k = 0}^{2^n-1}\vartheta^2_{n,k} \le \lim_{n \ua \infty} 2^{-2\varepsilon n} \cdot \lim_{n \ua \infty} 2^{(2R-2)n} \sum_{k = 0}^{2^n-1}\vartheta^2_{n,k} = 0.
	\end{equation*}
	Furthermore, the Cauchy--Schwarz inequality yields that 
	\begin{equation*}
		\lim_{n \ua \infty}\frac{m_{n+2}}{2^{2n+5}}\left|\sum_{k = 0}^{2^n-1}\vartheta_{n,k}\zeta_{n,k}\right| \le \sqrt{\lim_{n \ua \infty}\frac{m_{n+2}}{2^{2n+5}}\sum_{k = 0}^{2^n-1}\zeta^2_{n,k}}\cdot\sqrt{\lim_{n \ua \infty}\frac{m_{n+2}}{2^{2n+5}}\sum_{k = 0}^{2^n-1}\vartheta^2_{n,k}}  = 0.
	\end{equation*}
	Hence, we have 
	\begin{equation*}
		\lim_{n \ua \infty}\frac{m_{n+2}}{2^{2n+5}}\sum_{k = 0}^{2^n-1}\wt \vartheta^2_{n,k} = \int_0^1 \sigma^4_s \, ds. 
	\end{equation*}
	Thus, we have 
	\begin{equation*}
		\begin{split}
			\lim_{n \ua \infty} \wt \cR_n(S) &= 1 - \lim_{n \ua \infty} \frac{1}{n}\log_2 \sqrt{\sum_{k = 0}^{2^n-1}\wt \vartheta^2_{n,k}} = 1 - \lim_{n \ua \infty} \frac{1}{n}\log_2 \sqrt{\frac{m_{n+2}}{2^{2n+5}}\sum_{k = 0}^{2^n-1}\wt \vartheta^2_{n,k}} + \lim_{n \ua \infty}\frac{1}{2n}\log_2\frac{m_{n+2}}{2^{2n+5}}\\&= 1 - \lim_{n \ua \infty}\frac{1}{n}\log_2\sqrt{\int_0^1 \sigma_s^4\,ds} + (\alpha - 1) = \alpha. 
		\end{split}
	\end{equation*}
	This completes the proof.
\end{proof}

\noindent \textbf{Acknowledgement.} The authors gratefully acknowledge support from the Natural Sciences and Engineering Research Council of Canada through grants RGPIN-2017-04054  and RGPIN-2024-03761. The authors would like to thank anonymous reviewers and the associate editor for helpful remarks that significantly improved this paper. We also wish to express our gratitude to Yuqi Jing for the help in setting up the server that supported our simulation studies.

\bibliography{CTbook}{}
\bibliographystyle{plain}

\end{document}